\crefname{theorem}{Theorem}{Theorems}
\crefname{lemma}{Lemma}{Lemmas}
\crefname{proposition}{Proposition}{Propositions}
\theoremstyle{definition}
\newtheorem{theorem}{Theorem}[section]
\newtheorem{corollary}[theorem]{Corollary}
\newtheorem{lemma}[theorem]{Lemma}
\newtheorem{proposition}[theorem]{Proposition}
\newtheorem{definition}[theorem]{Definition}
\newtheorem{remark}[theorem]{Remark}
\newtheorem{example}[theorem]{Example}
\newcommand{\mb}{\mathbb}
\newcommand{\mc}{\mathcal}
\newcommand{\mf}{\mathfrak}
\newcommand{\lset}{\left\{ }
\newcommand{\rset}{\right\} }
\newcommand{\lpara}{\left(}
\newcommand{\rpara}{\right)}
\newcommand{\lbra}{\left[}
\newcommand{\rbra}{\right]}
\newcommand{\inv}{^{-1}}
\newcommand{\seminorm}[1]{{\left\vert\kern-0.25ex\left\vert\kern-0.25ex\left\vert #1 \right\vert\kern-0.25ex\right\vert\kern-0.25ex\right\vert}}
    \DeclarePairedDelimiter\abs{\lvert}{\rvert}%
    \DeclarePairedDelimiter\norm{\lVert}{\rVert}%
    \let\oldabs\abs
    \def\abs{\@ifstar{\oldabs}{\oldabs*}}
    \let\oldnorm\norm
    \def\norm{\@ifstar{\oldnorm}{\oldnorm*}}
\newcommand{\lan}{\langle}
\newcommand{\ran}{\rangle}
\newcommand{\kla}{\left[}
\newcommand{\mer}{\right]}
\newcommand{\ten}{\otimes}
\newcommand{\kl}{\leq}
\DeclareMathOperator{\tr}{tr}
\DeclareMathOperator{\id}{id}
\DeclareMathOperator{\supp}{supp}
\DeclareMathOperator{\spn}{span}
\DeclareMathOperator{\ad}{ad}
\DeclareMathOperator{\Ad}{Ad}
\DeclareMathOperator{\chan}{Channel}
\DeclareMathOperator{\ev}{ev}
\DeclareMathOperator{\diag}{diag}
\title{How Far do Lindbladians Go?}
\author{Jihong Cai}
\email{jihongc2@illinois.edu}
\author{Advith Govindarajan}
\email{agovind2@illinois.edu}
\author{Marius Junge}
\email{mjunge@illinois.edu}
\thanks{MJ is partially supported by NSF-DMS 2247114}
\address{Department of Mathematics, University of Illinois Urbana-Champaign}
\begin{document}
\maketitle
\begin{abstract}
    We study controllability of finite-dimensional open quantum systems under a general Markovian control model combining full coherent (unitary) control with tunable dissipative channels. Assuming the Hamiltonian controls is a Hörmander system that generate $\mf{su}(n)$, we ask how little dissipation suffices to make the full state space $\mc D(\mc H)$ controllable. We show that minimal non-unital noise can break unitary-orbit invariants and, in many cases, a very small set of jump operators yields transitivity on $\mc D(\mc H)$. For multi-qubit systems we prove explicit transitivity results for natural resources such as a single-qubit amplitude-damping jump together with a dephasing channel, and we identify obstructions when only self-adjoint jump operators are available (yielding only unital evolutions). 

    We further develop a geometric viewpoint and ask the ``lifting'' question: when can a path of densities be obtained from applying a time-dependent family of Lindbladian to an initial state? For this, we have to analyze the tangent structure of the ``manifold with corners'' and how this tangent structure reflects Lindbldian evolution. Building on this framework, we derive reachability criteria and no-go results based on a norm-decrease alignment condition, including a geometric obstruction arising from the incompatibility between admissible tangent directions and dissipative contraction.
\end{abstract}

\bigbreak
\setcounter{tocdepth}{2}
\tableofcontents
\vfill\newpage

\section{Introduction}
The dynamics of a finite-dimensional open quantum system under control are naturally described by a Markovian master equation
$$\dot\rho_t =-i\Bigl[H_0+\sum_j u_j(t) H_j,\rho_t\Bigr]+\sum_k \gamma_k(t)\,L_k(\rho_t),$$
where $\rho_t\in\mc D(\mc H)$, the Hamiltonian controls $\{H_j\}$ generate coherent motion, and the Lindblad generators $\{L_k\}$ describe dissipative channels.  
This is the analytic solution of the equation of motion 
$$\dot\rho_t=L_t\rho_t$$
which is also known as the Gorini–Kossakowski–Lindblad–Sudarshan master equation.
Throughout this work we assume full unitary controllability, i.e. the Hamiltonians generate $\mf{su}(n)$, so that in the absence of dissipation the reachable set of $\rho$ is exactly its unitary orbit.

The purpose of this paper is to understand how the addition of dissipation changes this picture.  
Dissipative terms in the control equation remove spectral invariants and allow motion between distinct unitary orbits, but they also introduce irreversibility and constraints arising from complete positivity.  
From a control theoretic perspective, this leads to a basic question: \emph{how much dissipation is needed to make the dynamics transitive on the full state space $\mc D(\mc H)$?}  
Equivalently, which minimal families of Lindblad generators, when combined with full unitary control, suffice to make every state reachable from every other state?

We address this question in a finite-dimensional Markovian setting.  
Our analysis is structural rather than model specific: we study algebraic, geometric, and dynamical resources and obstructions to controllability, identify simple dissipative resources that break unitality and unitary orbit constraints, and develop a geometric description of admissible state space directions.  
A central theme is the role of Lindbladians as generators of all physically allowable infinitesimal motions on $\mc D(\mc H)$, which allows us to connect global reachability questions with the local geometry of the state space.

\subsection{What is Known}
The modern theory of quantum control rests on the recognition that quantum dynamics, though governed by unitary evolution, can be manipulated in analogy with classical control systems.  
In the Hamiltonian (closed system) setting, the task is to steer the state $\rho(t)$ through a bilinear Schrödinger equation
\begin{equation}
  i\frac{d}{dt}\ket{\psi(t)} = \lpara H_0 + \sum_j u_j(t) H_j\rpara \ket{\psi(t)},
\end{equation}
where the controls $u_j(t)$ modulate external fields.  
Foundational results by Brockett \cite{brockett1972,brockett1977} and by Jurdjevic and Sussmann \cite{jurdjevic1972} established the Lie algebraic rank condition for controllability on compact Lie groups, later systematized in geometric control formalisms \cite{isidori1985,zabczyk2020}.  
This framework underpins modern coherent control theory \cite{dalessandro2021,nielsen2010}, where controllability and accessibility are determined by the Lie algebra generated by the controlled Hamiltonians.  
In this regime, the reachable set of $\rho$ is its unitary orbit, a smooth homogeneous manifold that forms the basis of optimal control, landscape, and feedback approaches \cite{khaneja2001,brif2010,chakrabarti2007,glaser2015,wiseman2009,mirrahimi2004,mirrahimi2005}.

As quantum technologies matured, attention shifted from isolated systems to open quantum systems that interact with their environments.  
Dissipation and decoherence, once viewed as obstacles, are now recognized as resources that can be engineered and exploited, which is the central idea of dissipative or reservoir engineering \cite{verstraete2009,diehl2008,lin2013,mirrahimi2014,leghtas2013,ofek2016,kapit2016}.  
In this setting, dynamics are described by the Lindblad master equation \cite{gorini1976,lindblad1976},
\begin{equation}
  \dot{\rho}(t) = -i[H(t),\rho(t)] 
  + \sum_k \Bigl(L_k(t)\rho(t)L_k^\dagger(t) 
  - \tfrac{1}{2}\lset L_k^\dagger(t)L_k(t), \rho(t)\rset \Bigr),
\end{equation}
which generates completely positive trace-preserving (CPTP) maps on the convex state space $\mc D(\mc H)$ \cite{alicki2007,breuer2002, rivas2012}.  
Here, $H(t)$ and $\{L_k(t)\}$ represent coherent and dissipative control channels, respectively, and controllability becomes a question of semigroup geometry rather than Lie-group structure.

The first rigorous formulation of controllability for open quantum systems was given by Altafini \cite{altafini2003}, who showed that dissipation allows motion between unitary orbits, expanding the reachable set beyond purely coherent limits.  
Subsequent works by Schirmer and collaborators \cite{schirmer2001,dalessandro2010,mirrahimi2007,schirmer2004} and by Grivopoulos and Bamieh \cite{grivopoulos2003} refined these results for Markovian master equations, while Lyapunov and feedback-based methods \cite{mirrahimi2007,mirrahimi2005} highlighted the role of dissipative convergence.  
These studies established that openness qualitatively changes controllability: Hamiltonian invariants such as eigenvalue spectra no longer constrain motion, and reachable sets become convex rather than group-structured.

A general algebraic formulation followed in the Lie semigroup or Lie wedge approach developed by Dirr, Helmke, Schulte-Herbrüggen, Kurniawan, and coauthors \cite{dirr2009,schulte2011,kurniawan2012,altafini2003} and by Schirmer and D'Alessandro \cite{schirmer2004}.  
Here, the generators of Markovian dynamics form a convex cone whose exponential semigroup represents all CPTP evolutions consistent with admissible Lindblad operators.  
Accessibility and controllability are characterized by convexity and semigroup generation rather than Lie closure \cite{lokutsievskiy2021,khaneja2000}, linking geometric control to dissipative engineering and feedback stabilization \cite{ticozzi2008,ticozzi2012}.  
Parallel developments have addressed stabilization \cite{ticozzi2008,ticozzi2012,mirrahimi2014} and invariant subspaces \cite{lidar1998,zanardi1997,viola1999}, while resource theoretic and thermodynamic perspectives \cite{lostaglio2019,brandao2015,baumgratz2014,horodecki2013,ng2015} further clarified the operational meaning of dissipation.

\subsection{Main Results}
Recent developments in quantum information theory have highlighted that dissipation and noise, traditionally regarded as detrimental, can in fact be harnessed to enhance state preparation and control. In parallel, experimental progress has made it possible to engineer and modulate Markovian environments with a high degree of precision. Motivated by these advances, we examine controlled open system dynamics of the general form
$$\dot\rho_t = -i\lbra H_0 + \sum_j u_j(t) H_j, \rho_t\rbra + \sum_k \gamma_k(t)\,L_k(\rho_t),$$
and ask which dissipative mechanisms fundamentally enlarge the set of accessible quantum states. Our focus is on identifying minimal classes of Lindblad generators that, when combined with unrestricted coherent control, allow one to prepare arbitrary density operators in $\mc D(\mc H)$.

Assuming full unitary control, meaning that the set $\{H_0, H_j\}$ forms a Hörmander system\footnote{a subset that generates the Lie algebra via Lie brackets} of $\mathfrak{su}(n)$, our first objective is to determine the minimal dissipative resources required for controllability. We then extend this line of inquiry by examining the interplay between coherent and incoherent controls: how Hamiltonian and dissipative resources can jointly enable reachability within the state space. To approach these questions, we study the geometric structure of $\mc D(\mc H)$ and identify how dissipation reshapes its accessible regions.

In \cref{uni+diss}, we show that remarkably little dissipation is needed to render $\mc D(\mc H)$ controllable; often, as few as two independent jump operators suffice. The Hamiltonian dynamics provides the dominant evolution, while dissipation acts as a subtle yet essential mechanism for breaking unitary constraints and expanding the reachable set.  
In \cref{lift}, we characterized the tangent space on the space of densities, and showed the only allowable directions are given by Lindbladians. We also propose and start to address the physical realizability of trajectories, establishing criteria that distinguish which paths in the space of density operators can be implemented in practice.
In \cref{reach}, we investigate accessibility and controllability properties, including no-go results that identify fundamental limits under constrained resources.  
Finally, in \cref{alg}, we illustrate these ideas through algorithmic examples, demonstrating that controllability can be achieved optimally within specific, physically motivated resource configurations.

\subsection*{Acknowledgment}
The author wishes to acknowledge with huge gratitude the thought provoking conversations with Jake Xuereb, Florian Meier, and Paul Erker at Beyond IID 12 at University of Illinois Urbana-Champaign, which played an important role in inspiring the inception of this project. The author is also grateful to Frederik vom Ende for pointing out the reference \cite{hiriart2004} relevant to \cref{char}, as well as for drawing our attention to their related work \cite{dirr2019}, which contains ideas closely connected to those developed in \cref{qubit_trans}.

\section{Unitary with A Little Help}\label{uni+diss}
Throughout this paper, we assume that the Hilbert space $\mc H$ is finite-dimensional. In fact, many of our results fail to hold in the infinite-dimensional setting, or remains unknown.
Let $\mb B(\mc H)$ denote the algebra of bounded linear operators on $\mc H$. A density matrix $\rho$ is a positive semidefinite operator with unit trace, that is, $\rho \ge 0$ and $\tr(\rho) = 1$.,
The set of all such densities, known as the space of densities, is denoted by 
$$\mc D(\mc H) = \lset \rho \in \mb B(\mc H) : \rho = \rho^*,\, \rho \ge 0,\, \tr(\rho) = 1\rset .$$

Quantum systems are inevitably noisy due to unavoidable interactions with their environments. To describe the dynamics of such open systems, one must go beyond unitary evolution. Under the Markovian assumption, which excludes memory effects, and the semigroup property, the time evolution of an open quantum system is governed by a one-parameter semigroup of completely positive and trace-preserving (CPTP) maps $(T_t)_{t \ge 0}$, generated by a linear operator $L$ known as a Lindbladian. The semigroup is expressed as $T_t = e^{tL}$. According to the Gorini-Kossakowski-Sudarshan-Lindblad theorem \cite{gorini1976,lindblad1976}, any such generator admits the canonical form
\[
L(\rho) = -i[H, \rho] + \sum_{j=1}^n \lpara L_j \rho L_j^* - \tfrac{1}{2}\lpara L_j^*L_j\rho+\rho L_j^*L_j\rpara \rpara,
\]
where $H = H^*$ is the Hamiltonian governing the unitary part of the evolution, and the operators $L_j$ represent the dissipative interactions with the environment. The set of all such generators is denoted by
$$\mc L = \lset L : L \text{ is a Lindbladian}\rset.$$
When we wish to exclude the unitary part $-i[H, \cdot]$ and focus only on the dissipative contributions, we will write $L_{\mathrm{diss}}$ or specify the definition explicitly.

For notational convenience, we write $\ad_H = [H, \cdot]$ for the commutator superoperator and $\Ad_U = U(\cdot)U^*$ for unitary conjugation. Quantum channels, denoted by $\Phi$, are completely positive and trace-preserving maps, and can be viewed as discrete-time analogues of Lindbladian evolutions. In particular, for any channel $\Phi$, the difference $\Phi - \id$ is a Lindbladian, whose generator is given by the Kraus oeprators, where the Kraus operators of $\Phi$ describe the infinitesimal generators of the dissipative dynamics. Hence, a quantum channel may be regarded as a time $t$ snapshot of a continuous Markovian process. The replacer (erasure) channel that maps every input state to a fixed output state $\sigma$ is written as $\mc R_\sigma(\cdot) = \tr(\cdot)\,\sigma$. We also denote by $X, Y, Z$ the Pauli matrices, by $I$ the identity matrix, and by $\mathrm{id}$ the identity superoperator acting between matrix algebras.

These conventions and definitions establish the mathematical framework for our discussion. In the following sections, we explore how even a small amount of dissipation can produce powerful effects in the dynamics of open quantum systems.

\subsection{Minimal Dissipation for Controllability}
In the literature, open-system quantum control is typically studied under the assumption of full unitary controllability, often accompanied by generous amount of dissipative resources. However, as we will demonstrate, even a small amount of dissipation can have a profound impact: with full unitary control, only minimal dissipation is sufficient to ensure controllability over the entire space of densities.
Inspired by \cite{lip_complexity, lip_simulation} we start with a resource set $S$ and investigate the properties of the evolutions $\mc E(S)$ induced by $S$. Define the following properties: 
\begin{enumerate}
    \item[(UN)] If $iH\in \spn\lset S\rset$ is an anti-hermitian operator, then the unitary channel $\Phi_t(\rho)=e^{itH}\rho e^{-itH}$ belongs to $\mc E(S)$ for $t \geq 0$. 
    \item[(JU)] For any operator $a\in S$ we define the Lindblad generator $L_a(\rho)=a^*a\rho+\rho a^*a-2a\rho a^*$ and declare that $ e^{tL_a}\in \mc E(S)$ for $t \geq 0$
    \item[(CO)] The class $\mc E(S)$ is closed under composition.
    \item[(CL)] The class $\mc E(S)$ is closed.
    \item[(CV)] The class $\mc E(S)$ is convex.
\end{enumerate}
Let $\chan(S)$ denote the class of evolutions with property (UN), (JU) and (CO). We write $\overline{\chan(S)}$ to mean if in addition $\chan(S)$ is closed, i.e. with (CL), and denote $\chan^*(S)$ if in addition the class is closed under convex combinations, i.e. with (CV).

It is common to assume the ability to implement arbitrary unitary controls on the system. Such control is a powerful tool: it allows one to move between states on the same unitary orbit and is sufficient for tasks such as vector state steering. For a detailed introduction on unitary control, see \cite{dalessandro2021}. In fact, it already suffices to assume access to a generating set of the special unitary group, i.e. a Hörmander system.

\begin{definition}
    Let $S \subset \mb{B}(H)$ be a set of anti-Hermitian operators on a Hilbert space of dimension $d$. We say that $S$ is a \emph{Hörmander system} if the Lie algebra generated by $S$ spans all of $\mathfrak{su}(d)$, that is,
    $$\mf{su}(d) = \text{span} \lset  [H_{j_1}, [H_{j_2}, \dots [H_{j_{m-1}}, H_{j_m}]]] : H_j \in S \rset.$$
\end{definition}
For a detailed survey on Hörmander system and the geometry of Lie algebra, see \cite{hormander}. When $d<\infty$, it is known that the minimal Hörmander system for $\mf{su}(d)$ is of size 2 \cite{kuranishi1951}, though the two are rather complicated is not available in many practical settings.

Much of the analysis replies on using combinations of the following evolutions in $\chan(S)$.
\begin{proposition}\label[proposition]{chan_membership}
    Let $S$ be a Hörmander system of $\mf{su}(2^k)$. Then 
    \begin{enumerate}
        \item $\Ad_{its}\in \chan(S)$ if $s=s^*$ and $s\in\spn(S)$.
        \item $e^{tK}\in\chan(S)$ where $K=i\ad_H+\sum_j\gamma_jL_{a_j}$ for $a_j\in S$ and $\gamma_j\geq 0$.
        \item $e^{tL_{\Ad_u(a)}}\in\chan(S)$ and also $e^{t\sum_j L_{\Ad_{u_j}({a_j})}}\in\chan(S)$ for unitary $u_j$ with $a_j\in S$.
    \end{enumerate}
\end{proposition}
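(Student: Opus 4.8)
The plan is to take (i) as the base case and bootstrap (ii) and (iii) from it, combining the two elementary evolutions supplied by the axioms---unitary conjugation from (UN) and single-jump dissipation from (JU)---through composition (CO) and, when a generator must be split into pieces, the Lie--Trotter product formula. For (i) itself I would argue directly from (UN): if $s=s^*$ with $is\in\spn(S)$, then taking $H=s$ in (UN) gives $\Ad_{e^{its}}(\rho)=e^{its}\rho e^{-its}\in\chan(S)$ for $t\ge 0$. The substantive upgrade, needed for (iii), is that the Hörmander hypothesis promotes this to $\Ad_u\in\chan(S)$ for every $u\in SU(2^k)$: since $\spn(S)$ generates $\mf{su}(2^k)$ as a Lie algebra, the subgroup generated by the one-parameter groups $\exp(\mb R X)$, $X\in\spn(S)$, is the connected analytic subgroup with Lie algebra $\mf{su}(2^k)$, hence all of $SU(2^k)$; thus each $u$ factors as $e^{it_1s_1}\cdots e^{it_m s_m}$ with $is_\ell\in\spn(S)$, and (CO) gives $\Ad_u=\Ad_{e^{it_1s_1}}\circ\cdots\circ\Ad_{e^{it_m s_m}}\in\chan(S)$.

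For (ii) I would use that the Hamiltonian flow is a unitary conjugation, $e^{s\,i\ad_H}=\Ad_{e^{isH}}\in\chan(S)$ by the extended (i), and then split $K=i\ad_H+\sum_j\gamma_jL_{a_j}$ via Lie--Trotter,
$$e^{tK}=\lim_{n\to\infty}\Bigl(e^{(t/n)\,i\ad_H}\textstyle\prod_j e^{(t/n)\gamma_j L_{a_j}}\Bigr)^{n},\qquad t\ge 0.$$
Each $e^{(t/n)\,i\ad_H}$ is in $\chan(S)$ as above, and each $e^{(t/n)\gamma_jL_{a_j}}$ is in $\chan(S)$ by (JU) because $(t/n)\gamma_j\ge 0$; by (CO) the finite products and their powers stay in $\chan(S)$.

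For (iii) the key algebraic fact is the intertwining identity $L_{uau^*}=\Ad_u\circ L_a\circ\Ad_{u^*}$, which one checks by expanding both sides and using $(uau^*)^*(uau^*)=ua^*au^*$. Exponentiating the conjugated generator gives the \emph{exact} factorization
$$e^{tL_{uau^*}}=\Ad_u\circ e^{tL_a}\circ\Ad_{u^{*}},$$
so $e^{tL_{\Ad_u(a)}}\in\chan(S)$ by combining the extended (i) (for $\Ad_u,\Ad_{u^*}$), (JU) (for $e^{tL_a}$), and (CO); note no limit is needed here. The sum version $e^{t\sum_jL_{\Ad_{u_j}(a_j)}}$ I would again reach by Lie--Trotter, each Trotter factor $e^{(t/n)L_{\Ad_{u_j}(a_j)}}$ being in $\chan(S)$ by the exact factorization just established.

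The step I expect to be the main obstacle is the passage to the limit in (ii) and in the sum half of (iii): the Trotter products only certify membership in the \emph{closure} $\overline{\chan(S)}$, since $\chan(S)$ is not assumed to satisfy (CL). I would therefore either state those two items in $\overline{\chan(S)}$, or supply (CL) as a hypothesis, while emphasizing that (i) and the first identity in (iii) are genuinely exact and need no closure. The remaining technical point---convergence of the Trotter product---is routine: every factor generates a trace-norm contraction semigroup, so the Trotter--Kato theorem applies in the operator norm on superoperators.
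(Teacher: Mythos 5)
Your proposal follows essentially the same route as the paper's proof: your part (i) upgrade via Lie-subgroup generation is exactly the paper's appeal to Chow's theorem, your part (ii) is the paper's iterative Trotter argument with factors supplied by (UN)/(JU) and closed under (CO), and your part (iii) rests on the same intertwining identity $e^{tL_{\Ad_{u^*}(a)}}=\Ad_{u^*}\,e^{tL_a}\,\Ad_{u}$ followed by Trotter for the sum. Your closing caveat --- that the Trotter limits only certify membership in $\overline{\chan(S)}$ since (CL) is not assumed --- is a fair observation, but it applies equally to the paper's own proof, which asserts membership in $\chan(S)$ without comment, so it reflects a looseness in the paper rather than a divergence in approach.
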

\begin{proof}
    \begin{enumerate}
        \item Since we are in finite dimension, this is Chow's theorem \cite{chow}  
        \item The Trotter formula 
            $$e^{tL_1+sL_2} = \lim_n (e^{\frac{t}{n}L_1}e^{\frac{s}{n}L_2})^n$$
            holds in Banach algebras \cite{hall2013quantum}. Thanks to (i), we know that $\Ad_{e^{itH}}\in \chan(S)$. By the jump rule (JU), we know that $L_a$ is an admissible jump operator. Iterative application of Trotter's formula implies the assertion. 
        \item It is easy to see that 
            \[ u^*L_a(u\rho u^*)u
            = u^*a^*au\rho + \rho u^*a^*au-2u^*au\rho u^*a^*u 
            = L_{u^*au}(\rho)  . \] 
            By exponentiation 
            \[ e^{tL_{\Ad_{u^*}a}} = \Ad_{u^*}e^{tL_a}\Ad_{u} \] 
            is an element in $\chan(S)$ using (i) and (CO). The additional assertion follows from (ii).
    \end{enumerate}
\end{proof}
It is clear that the operators above are Lindbladians, and therefore generate completely positive semigroups, i.e. Markovian dynamics in the sense of \cite{wolf_cirac}.

\begin{lemma}\label[lemma]{cms} 
\begin{enumerate} 
    \item Let $a= \kla \begin{array}{cc} 0 &1 \\ 
        0& 0\end{array}\mer$ and $\rho_{\lambda} = \kla\begin{array}{cc} \frac{1}{\lambda+1} & 0 \\ 0 &\frac{\lambda}{\lambda+1}\end{array}\mer$. The Lindbladian
        \[ L_a(\beta) = \beta^{1/2}L_a+\beta^{-1/2}L_{a^*}  .\]
        satisfies $L_a(\beta)(\rho_{\lambda})=0$ and $\lim_{t\to \infty} e^{tL(\beta)}(\sigma)=\rho_{\lambda}$ if $1+\lambda= \beta$. 
    \item Let $\rho_{\mu}= {\rm diag}(\mu_1,\dots ,\mu_d)$ be a diagonal state. Let $a_{r} = |r\ran\lan r+1|$ and
        \[ L_{\mu} = \sum_{r} \lpara \beta_r^{1/2} L_{a_r}+\beta_r^{-1/2}L_{a_r^*}\rpara  , \beta_r = \frac{\mu_r}{\mu_{r+1}} . \] 
        Then $L(\rho_{\mu})=0$ and $\lim_{t\to\infty} e^{tL}(\sigma)= \rho_{\mu}$.      
\end{enumerate} 
\end{lemma}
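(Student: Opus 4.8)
The plan is to treat (i) as the two-level specialization of (ii): with $a=a_1=\ket{1}\bra{2}$ and $\rho_\lambda=\rho_\mu$ for $\mu=(\tfrac{1}{\lambda+1},\tfrac{\lambda}{\lambda+1})$, I would prove (ii) in full and read off (i) at the end. Each statement splits into two independent claims: that $\rho_\mu$ is stationary, $L_\mu(\rho_\mu)=0$, and that the semigroup relaxes to it, $e^{tL_\mu}(\sigma)\to\rho_\mu$ for every initial $\sigma$. Since each $L_{a_r}$ is by (JU) a genuine Lindblad generator, I may freely use that $e^{tL_\mu}$ is a CPTP contraction semigroup.

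For stationarity I would evaluate $L_{a_r}$ and $L_{a_r^*}$ directly on a diagonal state. Because $a_r=\ket{r}\bra{r+1}$ connects only the adjacent levels $r$ and $r+1$, the combined action $\beta_r^{1/2}L_{a_r}+\beta_r^{-1/2}L_{a_r^*}$ on the populations is a single birth--death move: it transfers weight from $r+1$ down to $r$ at a rate $\propto\beta_r^{1/2}$ and from $r$ up to $r+1$ at a rate $\propto\beta_r^{-1/2}$, leaving the remaining diagonal entries untouched. The net population current across the edge $(r,r+1)$ is then proportional to $\beta_r^{1/2}\mu_{r+1}-\beta_r^{-1/2}\mu_r$, and the prescribed $\beta_r=\mu_r/\mu_{r+1}$ is exactly the detailed-balance condition annihilating every such current at once; summing over $r$ gives $L_\mu(\rho_\mu)=0$, and the $d=2$ case recovers $L_a(\beta)(\rho_\lambda)=0$ under the stated relation between $\beta$ and $\lambda$.

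For convergence I would use that the generator respects the grading of $M_d$ by index difference $i-j$. The quadratic terms $a_r\rho a_r^*$ and $a_r^*\rho a_r$ are always diagonal, so on an off-diagonal entry $\rho_{ij}$ ($i\neq j$) only the anticommutator parts act, giving $\tfrac{d}{dt}\rho_{ij}=-\Gamma_{ij}\rho_{ij}$ with $\Gamma_{ij}=c_i+c_j$, where each level's dephasing rate $c_i>0$ because every level couples to at least one neighbour (all $\beta_r\in(0,\infty)$); hence all coherences decay exponentially. On the diagonal block the dynamics is exactly a finite continuous-time birth--death chain with strictly positive nearest-neighbour rates, hence irreducible, so by finite-state Markov ergodicity (Perron--Frobenius) it has a unique stationary distribution, which the detailed-balance computation identifies as $\mu$, and to which every initial distribution converges. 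Combining the two blocks yields $e^{tL_\mu}(\sigma)\to\rho_\mu$ for all $\sigma$.

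The delicate step is this global convergence rather than the stationarity: one must exclude competing invariant states and establish relaxation from an arbitrary $\sigma$, not merely local stability. The decoupling into an ergodic classical chain on populations plus strictly contracting coherences is what makes this tractable, and the indispensable hypothesis is faithfulness, $\mu_r>0$ for all $r$, which simultaneously guarantees irreducibility of the chain and $\Gamma_{ij}>0$. As an alternative, more abstract route I could note that $\{a_r,a_r^*\}_r$ generates all of $M_d$ as an algebra, so the semigroup is primitive and a primitive Lindbladian with a faithful fixed point necessarily relaxes to it; I would nonetheless keep the explicit birth--death picture as the main argument, since it is self-contained and also pins down the limit.
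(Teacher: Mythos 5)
Your proposal is correct, and its main argument takes a genuinely different route from the paper's. The paper handles both parts abstractly and by citation: stationarity of $\rho_\mu$ is referred to the detailed balance condition of \cite{CM17}, and convergence is obtained from the gradient (Dirichlet) form identity of \cite{JLRR},
\[
\langle L(X),X\rangle_{\rho} \;=\; \sum_{r} \langle [a_r,X],[a_r,X]\rangle + \langle [a_r^*,X],[a_r^*,X]\rangle ,
\]
so that any invariant density $\rho_\mu^{1/2}X\rho_\mu^{1/2}$ forces $[a_r,X]=[a_r^*,X]=0$ for all $r$; since $\{a_r,a_r^*\}_r$ has trivial commutant, $X=1$, the semigroup is primitive, has a spectral gap, and relaxes to $\rho_\mu$. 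That is precisely the ``abstract alternative'' you sketch in your last paragraph. Your main argument instead block-diagonalizes the generator explicitly: the quadratic terms annihilate off-diagonal matrix units, so each coherence $\ket{i}\bra{j}$ is an eigenvector with eigenvalue $-(c_i+c_j)<0$, while the populations evolve under an irreducible finite birth--death chain whose unique stationary law is fixed by detailed balance. This is more elementary and self-contained (no external results needed), and it yields explicit decay rates; the paper's Dirichlet-form argument is less computational and extends to families of jump operators for which no such convenient block structure exists. Both routes hinge on faithfulness of $\rho_\mu$, which you correctly isolate as the key hypothesis and which is implicit in the lemma (the ratios $\beta_r=\mu_r/\mu_{r+1}$ must be finite and positive); the paper makes the same reduction by approximation in the proof of \cref{trans1}.

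One point you glossed over: in part (i) your detailed-balance computation gives $\beta=\mu_1/\mu_2=1/\lambda$, consistent with the formula $\beta_r=\mu_r/\mu_{r+1}$ of part (ii), and \emph{not} the relation $1+\lambda=\beta$ stated in the lemma. So it is not quite right to say your $d=2$ specialization recovers the claim ``under the stated relation''; what your computation actually shows is that the stated relation appears to be a misprint (it should read $\beta=1/\lambda$, or $\beta=\lambda$ with the roles of $a$ and $a^*$ exchanged). Since the paper's own proof of (i) is a bare appeal to the detailed balance condition of \cite{CM17}, this does not affect the substance of either argument, but your write-up should state the relation your method actually produces rather than deferring to the one in the statement.
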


\begin{proof} For (i) The relation between $\lambda$ and $\beta$ is exactly the detailed balance condition of \cite{CM17}. For the convergence we refer to \cite{Jchen}. Note here that $L(\beta)$ satisfies a spectral gap.
    For (ii), we use $a_r=\ket r\bra{r+1}$ and the Lindbladian 
    \[ L = \sum_{r} \beta_r^{1/2}L_{a_r}+\beta_r^{-1/2} L_{a_r^*}  , 
    \beta_r = \frac{\mu_r}{\mu_{r+1}}  .\]
    According to \cite{CM17} we know that 
    \[ L^*(\rho_{\lambda}) = 0  .\] 
    Let us also recall that gradient formula \cite[(2.6)]{JLRR} 
    \[ \langle L(X),X\rangle_{\rho} =  \sum_{r} \langle [a_r,X],[a_r,X]\rangle +
    \langle [a_r^*,X],[a_r^*,X]\rangle  .\] 
    Thus the state of invariant densities $\sigma= \rho_{\mu}^{1/2}X\rho_{\mu}^{1/2}$ satisfies $[a_r,X]=0=[a_r^*,X]$ for all $r$. This means $X=1$ and we have a primitive semigroup. In particular, $L$ has a spectral gap and 
    $\lim_t e^{tL^*}(\sigma)=\rho_{\mu}$. 
\end{proof}

We are interested in understanding the (minimal) set of resources $S$ so that $\mc D(\mc H)$ is controllable. We first clarify two definitions.
\begin{definition}
    The space of densities $\mc D(\mc H)$ is said to be controllable under a set of operations $\mc O$ if for all $\rho,\sigma\in\mc D(\mc H)$, there exists some $\Phi\in\mc O$ such that $\Phi(\rho)=\sigma$. In the continuous setting, for all $\rho,\sigma\in\mc D(\mc H)$, there exists $T_t\in\mc O$ such that $T_\tau(\rho)=\sigma$ for some $\tau>0$ or in the limit as $t\rightarrow\infty$.
\end{definition}
\begin{definition}
    A set of operations $\mc O$ is called transitive if $\mc D(\mc H)$ is controllable under $\mc O$.
\end{definition}
  
\begin{theorem} \label{trans1}
    Let $S=S^*$ be such that ${\rm span}(S)$ is a H\"ormander system and $a=\ket 0\bra 1\in S$. Then $\chan(S)$ is transitive. 
\end{theorem}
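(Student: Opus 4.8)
The plan is to use full unitary control to reduce the problem to stabilizing a fixed diagonal target, and then to invoke the dissipative attractor of \cref{cms}(ii) to erase the initial state entirely. Because the definition of controllability explicitly permits reaching the target in the limit $t\to\infty$, I do not need an exact finite-time map: a globally attracting Lindbladian composed with a single unitary rotation will suffice. Concretely, given an arbitrary target $\sigma\in\mc D(\mc H)$, I first diagonalize it spectrally as $\sigma=V\rho_\mu V^*$ with $\rho_\mu=\diag(\mu_1,\dots,\mu_d)$ and $V\in SU(d)$; inserting a permutation unitary into $V$, I may assume $\mu_1\geq\cdots\geq\mu_d$. Since $\spn(S)$ is a H\"ormander system, Chow's theorem (\cref{chan_membership}(i)) gives $\Ad_V\in\chan(S)$ for every such $V$, as $\Ad_V$ depends on $V$ only up to phase.

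The key structural point is that the single jump operator $a=\ket 0\bra 1$, together with $a^*\in S$ (here $S=S^*$ is used), generates all ladder operators $a_r=\ket r\bra{r+1}$ by conjugation: choosing a unitary $u_r$ with $u_r\ket 0=\ket r$ and $u_r\ket 1=\ket{r+1}$ gives $a_r=\Ad_{u_r}(a)$. Then \cref{chan_membership}(iii) together with (ii) places the weighted dissipator
$$L_\mu=\sum_r\lpara \beta_r^{1/2}L_{a_r}+\beta_r^{-1/2}L_{a_r^*}\rpara,\qquad \beta_r=\frac{\mu_r}{\mu_{r+1}},$$
inside $\chan(S)$; the positive weights $\beta_r^{\pm1/2}$ are admissible because scaling a jump operator only reparametrizes time, $e^{t(\gamma L_a)}=e^{(t\gamma)L_a}$, which lands in $\chan(S)$ by (JU), and the sum is assembled by Trotter as in the proof of (ii). By \cref{cms}(ii), $L_\mu$ is primitive with unique, globally attracting steady state $\rho_\mu$, so $\lim_{t\to\infty}e^{tL_\mu}(\rho)=\rho_\mu$ for \emph{every} input $\rho$.

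Transitivity then follows by composition. For any pair $\rho,\sigma$, set $\Phi_t=\Ad_V\circ e^{tL_\mu}\in\chan(S)$ using (CO). Since the dissipative part forgets its input, $\lim_{t\to\infty}\Phi_t(\rho)=\Ad_V(\rho_\mu)=V\rho_\mu V^*=\sigma$, which realizes $\sigma$ from $\rho$ in the permitted $t\to\infty$ sense. I would emphasize that the initial state plays no role whatsoever: the contraction erases it, so a single target-dependent evolution steers \emph{all} states to $\sigma$ at once.

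The main obstacle is the rank-deficient case, where some $\mu_{r+1}=0$ makes $\beta_r=\mu_r/\mu_{r+1}$ undefined and \cref{cms}(ii) does not apply verbatim. Here I would replace the detailed-balance dissipator on the kernel by a zero-temperature (pure-decay) construction: writing $\mu_1\geq\cdots\geq\mu_k>0=\mu_{k+1}=\cdots=\mu_d$, I keep both ladder directions $L_{a_r},L_{a_r^*}$ with finite $\beta_r$ on the support levels $r<k$ to fix the correct eigenvalue ratios, but retain only the downward jumps $L_{a_r}$ (formally $\beta_r=\infty$) for $r\geq k$ to drain the complementary levels into the support. One then checks, by the same gradient and primitivity argument used in the proof of \cref{cms}, that this modified Lindbladian remains primitive with unique attracting state $\rho_\mu$; verifying the spectral gap and global convergence in this degenerate regime is the one genuinely technical step. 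Alternatively, since full-rank states are dense in $\mc D(\mc H)$ and the target is only approached in the limit, one could approximate $\sigma$ by full-rank states and pass to the limit, at the cost of working in the closed class $\overline{\chan(S)}$.
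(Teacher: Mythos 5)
Your proposal is correct and follows essentially the same route as the paper's own proof: conjugate $a=\ket 0\bra 1$ (and $a^*$) by unitaries to obtain the ladder operators $a_r$, assemble the detailed-balance Lindbladian $L_\mu$ of \cref{cms}(ii) inside $\chan(S)$ via \cref{chan_membership}, let its primitive semigroup erase the input and converge to $\rho_\mu$, then apply a final unitary rotation. The paper handles the rank-deficient target with the same density/approximation remark you give as your fallback (stated there simply as ``by approximation we may assume $\rho_\mu$ is faithful''), so your extra discussion of the degenerate case is a refinement of detail, not a different argument.
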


\begin{proof}
    Let $\rho_1$ be a state and $\rho_2=\rho_{\mu}$ be diagonal. By approximation we may assume that $\rho_{\mu}$ is faithful.  Recall that $e^{tL_{u^*au}}$ and $e^{tL_{u^*a^*u}}$ in $\chan(S)$. We use the jump operators $a_r$ from \cref{cms} and find unitaries $u_r$ such that $u_r^*\ket 0\bra 1u_r=a_r$. By \cref{chan_membership}, we deduce that 
    \[ e^{tL_{\mu}} \in \chan(S) \] 
    where $L_{\mu}$ is the Lindbladian from \cref{cms}. Then the limit $E_{\mu}=\lim_{t\to \infty} e^{tL_{\mu}}(\rho_1)=\rho_{\mu}$ is the replacer channel for $\rho_{\mu}$.  If $\rho_2$ is not diagonal, we apply this argument first to $\hat{\rho}_2$ given by the singular values. Another unitary rotation yields the assertion and convexity was not needed.  
\end{proof}

In many applications, the jump operators are not rank one. So our condition may be challenging for a given system $S$. We need, however, at least one non-self-adjoint element: 

\begin{proposition}\label[proposition]{sa}
    Let $S$ be a set of self-adjoint elements. Then ${\overline{\chan^*(S)}}$ consists of unital channels. In particular, ${\overline{\chan^*(S)}}$ is not transitive. 
\end{proposition}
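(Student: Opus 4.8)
The plan is to show that every elementary building block of $\chan(S)$ is a \emph{unital} channel when $S=S^*$, that the three closure operations (CO), (CL), (CV) preserve unitality, and then to deduce non-transitivity from the fact that unital channels fix the maximally mixed state.

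\textbf{Step 1 (the generators are unital).} The class $\chan(S)$ is built, under composition, from the unitary channels furnished by (UN) and the jump semigroups furnished by (JU). A unitary channel is trivially unital, since $\Ad_U(I)=UIU^*=I$. For a jump generator with $a=a^*\in S$ I would compute directly
$$L_a(I)=a^*a\,I+I\,a^*a-2a\,I\,a^*=2(a^*a-aa^*)=0,$$
where self-adjointness (in fact normality already suffices) makes the bracket vanish. Since $L_a(I)=0$ and $e^{0\cdot L_a}(I)=I$, uniqueness of solutions to $\dot X_t=L_a(X_t)$ forces $e^{tL_a}(I)=I$ for every $t\geq 0$, so each jump semigroup is unital.

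\textbf{Step 2 (unitality survives the closure operations).} Let $\mc U$ denote the set of all unital CPTP maps on $\mb B(\mc H)$. I would check that $\mc U$ is closed under composition, since $(\Phi\circ\Psi)(I)=\Phi(\Psi(I))=\Phi(I)=I$; under convex combination, since $\lpara\sum_i p_i\Phi_i\rpara(I)=\sum_i p_i\,I=I$; and under limits, since the affine constraint $\Phi(I)=I$ is closed in the operator-norm topology (automatic in finite dimension, where evaluation at $I$ is continuous). By Step 1, $\mc U$ contains all generators of $\chan(S)$ and satisfies (UN), (JU), (CO), (CL), (CV). As $\overline{\chan^*(S)}$ is by construction the smallest such class, minimality yields $\overline{\chan^*(S)}\subseteq\mc U$; that is, every element of $\overline{\chan^*(S)}$ is a unital channel.

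\textbf{Step 3 (non-transitivity).} Write $\tau=\tfrac{1}{d}I$ with $d=\dim\mc H$ for the maximally mixed state. Every unital $\Phi$ fixes $\tau$, so in the discrete setting no $\sigma\neq\tau$ is reachable from $\tau$, and in the continuous setting $T_t(\tau)=\tau$ for all $t$. Since $\dim\mc H\geq 2$ guarantees that $\mc D(\mc H)$ contains states distinct from $\tau$ (any pure state, say), $\overline{\chan^*(S)}$ fails to be transitive.

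The argument is essentially a conservation law, so the only real content is the computation $L_a(I)=0$ together with the observation that composition, convex combination, and limits all respect the constraint $\Phi(I)=I$. The one point that needs care is the closure step (CL): I would make sure (CL) is taken in a topology where evaluation at $I$ is continuous, so that $\Phi_n(I)=I$ passes to the limit; in finite dimension this is immediate in the operator norm, and I expect no genuine obstacle beyond pinning down the precise definition of $\overline{\chan^*(S)}$ as a generated class.
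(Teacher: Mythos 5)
Your proposal is correct and follows essentially the same route as the paper: both arguments show the generators fix the identity (you compute $L_a(I)=2(a^*a-aa^*)=0$ directly, while the paper notes the Heisenberg generator is unital and trace-self-adjoint and then dualizes), verify that (CO), (CL), (CV) preserve unitality, and conclude non-transitivity from the immobility of the maximally mixed state $\tfrac{1}{d}I$. Your treatment is if anything slightly more careful, since you make the minimality-of-the-generated-class step and the continuity of evaluation at $I$ explicit where the paper only gestures at them.
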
 
\begin{proof}
    The channels $\Ad_{e^{itH}}$ for self-adjoint $H$  are certainly unital. For $Y=Y^*$, the Lindblad generator
    \[ L(x) = (2YxY-Y^2x+xY^2)\] 
    is unital and self-adjoint with respect to the trace, i.e. $tr(L(x)^*y)=tr(x^*L(y))$. Thus $L^*$ the generator on density matrices is also unital. This $L^*(\frac{1}{d}I)=0$ implies that $e^{tL^*}(\frac{1}{d}I)=\frac{1}{d}I$. This remains true for linaer combination of self-adjoint jump operators.  In other words in finite dimension the maximally mixed state is an eigenvector for all the components $e^{tL}$ and $\Ad_{e^{itH}}$. A fast glance at the closure procedures (CO) and (CL) will preserve this property. Thus the maximally mixed state can not be moved.     
\end{proof}
Note that $\overline{\chan^*(S)}$ is considerably larger than $\chan(S)$, and certainly much larger than the subclasses used throughout our argument as described in \cref{chan_membership}. In fact, $\overline{\chan^*(S)}$ contains many channels that belong only to the class of infinitesimally divisible channels in the sense of \cite{wolf_cirac}, whereas all the channels appearing in the transitivity proof, those in \cref{chan_membership}, are genuine (time independent) Markovian channels, since they are all of Lindblad form. In the transitivity result above, one may replace $\chan(S)$ with the subclass of Markovian channels prescribed in \cref{chan_membership}.

We see that the failure in \cref{sa} can be rescued easily if we allow just one non-self-adjoint resource.

We will use the notation $|0_k\ran=\ket{0\cdots0}$ for the $k$-fold tensor product of $|0\ran$.
\begin{theorem} \label[theorem]{alice}
    Let $\mc H=\mb C^{2^k}$. Let $S$ be a H\"ormander system for $\mf{su}(2^k)$ and
    \[ a = \ket 0\bra 1\otimes I^{\ten{k-1}} \in S  .\]
    Then $\overline{\chan^*(S)}$ contains all the replacer channels, i.e. $\mc R_\sigma\in\overline{\chan^*(S)}$ for all $\sigma\in\mc D(\mc H)$. In particular $\overline{\chan^*(S)}$ is transitive.  
\end{theorem}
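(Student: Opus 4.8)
The plan is to produce every replacer channel $\mc R_\sigma$ by first manufacturing the single reset-to-ground channel $\mc R_{\ket{0_k}\bra{0_k}}$ and then dressing it with unitaries and convex combinations. Once $\mc R_{\ket{0_k}\bra{0_k}}\in\overline{\chan^*(S)}$ is available, I would write an arbitrary target in spectral form $\sigma=\sum_i p_i\ket{\psi_i}\bra{\psi_i}$ and choose unitaries $U_i$ with $U_i\ket{0_k}=\ket{\psi_i}$. Then $\Ad_{U_i}\circ\mc R_{\ket{0_k}\bra{0_k}}=\mc R_{\ket{\psi_i}\bra{\psi_i}}$ lies in $\overline{\chan^*(S)}$ by (UN), Chow's theorem, and (CO), and the convexity property (CV) together with closedness (CL) gives $\mc R_\sigma=\sum_i p_i\,\mc R_{\ket{\psi_i}\bra{\psi_i}}\in\overline{\chan^*(S)}$. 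Since $\mc R_\sigma(\rho)=\sigma$ for every $\rho$, transitivity follows at once. This reduction is exactly where the convex closure is genuinely needed, in contrast to \cref{trans1}.

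The core task is therefore to build the reset channels. I would first observe that $a=\ket 0\bra 1\otimes I^{\ten(k-1)}$ drives amplitude damping on the first qubit only: the dissipator $L_a$ factors as (single-qubit amplitude damping)$\,\ten\,\id_{2\cdots k}$, so the semigroup $e^{tL_a}$ pushes the first qubit to $\ket 0$ while acting trivially elsewhere, and its $t\to\infty$ limit is the partial replacer $\mc E_1=\mc R_{\ket 0\bra 0}\ten\id_{2\cdots k}$ that discards the first qubit and inserts $\ket 0$. Because $e^{tL_a}\in\chan(S)$ by (JU) and the limit is approached in operator norm (amplitude damping has a spectral gap and a unique fixed point), $\mc E_1\in\overline{\chan(S)}$ by (CL). To reach the other qubits, I would conjugate $a$ by the swap unitary $V_{1j}$ exchanging qubits $1$ and $j$, available through (UN) and Chow's theorem since $S$ is Hörmander. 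By \cref{chan_membership}, $e^{tL_{\Ad_{V_{1j}}a}}\in\chan(S)$, and $\Ad_{V_{1j}}a$ is precisely amplitude damping on qubit $j$, so the same limit argument yields the qubit-$j$ reset $\mc E_j\in\overline{\chan(S)}$.

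Finally I would compose the resets. The channels $\mc E_1,\dots,\mc E_k$ are replacers on distinct tensor factors; they commute, and their composition $\mc E_k\circ\cdots\circ\mc E_1$ replaces every qubit by $\ket 0$, hence equals $\mc R_{\ket{0_k}\bra{0_k}}$ on all of $\mc D(\mc H)$. Closedness under composition (CO), promoted to the closure, keeps this inside $\overline{\chan^*(S)}$, supplying the input to the reduction of the first paragraph.

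The main obstacle to anticipate is structural rather than computational: unlike the rank-one jump $\ket 0\bra 1$ of \cref{trans1}, the operator $a$ has rank $2^{k-1}$, and no unitary conjugation can lower its rank, so the ladder construction of \cref{cms} that pins an arbitrary diagonal stationary state is unavailable here. This is exactly why one must reset to the single pure state $\ket{0_k}$ and then rebuild a general mixed $\sigma$ from convex combinations of unitarily rotated pure replacers, which is what forces the convex class $\chan^*$. The remaining care is bookkeeping with limits: I must check that the infinite-time resets are genuine norm limits of the admissible Markovian semigroups and that composition is continuous, so that every object produced indeed lies in the closed convex class $\overline{\chan^*(S)}$.
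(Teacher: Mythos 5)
Your proof is correct and follows essentially the same route as the paper: conjugated copies of $a$ give amplitude damping on each qubit, the infinite-time limit yields the reset onto $\ket{0_k}\bra{0_k}$, unitary dressing produces all pure-state replacers, and convexity plus closedness assembles $\mc R_\sigma$ for mixed $\sigma$. The only cosmetic difference is that the paper forms the single sum Lindbladian $L=\sum_j L_{a_j}$ and takes one $t\to\infty$ limit, whereas you take $k$ separate limits and compose the commuting partial resets; this is an equivalent bookkeeping choice.
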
     

\begin{proof}
    We may consider $a=\ket 0\bra 1\otimes I^{\ten{k-1}}$ and 
    \[L = \sum_{j=1}^n L_{a_j} \] 
    given by moving $a_j$ to the $j$-register, thanks to \cref{chan_membership}. Then $E_{\ket{0_k}\bra{0_k}} = \lim_{t\to \infty} e^{tL}$ is in $\chan(S)$. Using (iii) from \cref{chan_membership} we see that $\Ad_uE_{\ket{0_k}\bra{0_k}}= E_{|u(0_k)\ran\langle u(0_k)|}$ is also in $\chan(S)$. By assumption $\chan^*(S)$ is convex and hence contains every replacer channel. In particular $\chan^*(S)$ is transitive.  
\end{proof} 

Our next aim is to remove the convexity assumption in \cref{alice} by adding Lindbladians. We use standard notation $X,Y,Z$ for Pauli matrices and $V_j$ for the copy in the $j$-th register. We also define
\[ E_{\infty}\kla \begin{array}{cc} \rho_{00}& \rho_{01} \\ 
\rho_{10}& \rho_{11} \end{array}\mer =
\kla \begin{array}{cc} \rho_{00}& 0 \\ 
0 & \rho_{11} \end{array}\mer \] 
for the projection onto the diagonal in one qubit. Moreover, $E_{\infty}^{k-1}=E_{\infty}\otimes \cdots \otimes E_{\infty}\otimes id$ is the tensor product of $k-1$ such projections.    

Further, we define the following classes
\begin{definition}
    For a given density $\rho$, the \emph{range} is defined
    \[ D_S(\rho) = \lset \Phi(\rho) :\Phi\in \overline{\chan(S)}\rset \qquad D_S^*(\rho) = \lset \Phi(\rho) :\Phi\in \overline{\chan^*(S)}\rset  .\]
\end{definition} 

\begin{remark}
    Since we are in finite dimension $D_S^*(\rho)$ is convex. At the time of this writing the condition on $S$ for the convexity of $D_S(\rho)$ remain elusive.
\end{remark}

\begin{lemma}\label[lemma]{addZ}
    Suppose $\mc H=\mb C^{2^k}$ is the Hilbert space of $k$ qubits. Let $S=S^*$ be generate a H\"ormander system. 
    \begin{enumerate}
        \item If $\ket 0\bra 1\otimes I^{k-1}\in S$, then 
        \[ \lset \rho_1\otimes \cdots \otimes \rho_{k}: \rho_j\in \mc D(\mb C^2)\rset  \subset  D_S(\ket{0_k}\bra{0_k})  .\] 
        \item If  $Z_1\in S$, then the conditional expectation $E_{\infty}\otimes \id^{k-1}$ and $E_{\infty}^{\ten{k}}$ are in $\overline{\chan(S)}$.
        \item[(iii)] If $\ket 0\bra I\otimes 1^{k-1}$ and $Z_1$ are in $S$, then 
        \[ D_{S}(\ket{0_k}\bra{0_k}) = \mc D(\mc H) \]
        is controllable.
    \end{enumerate}  
\end{lemma}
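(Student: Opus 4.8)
The plan is to prove the equality $D_S(\ket{0_k}\bra{0_k})=\mc D(\mc H)$ and then bootstrap it to full controllability; the two hypotheses on $S$ turn out to serve two different purposes. I will take parts (i) and (ii) as established. Part (i) comes from the thermal-state construction of \cref{cms}: conjugating the qubit damper $\ket 0\bra 1\otimes I^{k-1}$ into each register via \cref{chan_membership}(iii) and running the detailed-balance Lindbladian of \cref{cms}(i) on each qubit, the product semigroup converges to the replacer onto a product of single-qubit states, which a final layer of single-qubit unitaries rotates to an arbitrary product state. Part (ii) comes from exponentiating the dissipator built from $Z_1$: its semigroup contracts onto the commutant of $Z_1$, namely the qubit-$1$ dephasing $E_\infty\otimes\id^{k-1}$, and conjugating $Z_1$ to the other registers and composing yields the full dephasing $E_{\infty}^{\ten{k}}$; both are limits of elements of $\chan(S)$ and so lie in $\overline{\chan(S)}$ by (CL). I will also use that, $\spn(S)$ being a H\"ormander system, Chow's theorem (\cref{chan_membership}(i)) makes every $\Ad_w$ with $w\in SU(2^k)$ an element of $\chan(S)$.

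The heart of the argument is to reach an arbitrary $\sigma\in\mc D(\mc H)$ from the single pure state $\ket{0_k}\bra{0_k}$ using only coherence and dephasing. Writing the spectral decomposition $\sigma=\sum_i\mu_i\ket{f_i}\bra{f_i}$ with $(\mu_i)$ a probability vector, I first pick a unitary $w$ with $w\ket{0_k}=\sum_i\sqrt{\mu_i}\ket{i}$ and apply $\Ad_w$, obtaining the pure state $\ket{\psi}\bra{\psi}$ with $\ket{\psi}=\sum_i\sqrt{\mu_i}\ket{i}$. Applying $E_{\infty}^{\ten{k}}$ then erases every off-diagonal entry and returns $\sum_i\mu_i\ket{i}\bra{i}=\diag(\mu)$; this is the single place where the prescribed amplitudes $\sqrt{\mu_i}$ become the prescribed populations $\mu_i$. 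Finally I choose a unitary $v$ with $v\ket{i}=\ket{f_i}$ and apply $\Ad_v$, which rotates $\diag(\mu)$ into $\sigma$. Because $\overline{\chan(S)}$ is closed under composition, the composite $\Phi_\sigma=\Ad_v\circ E_{\infty}^{\ten{k}}\circ\Ad_w$ belongs to $\overline{\chan(S)}$ and satisfies $\Phi_\sigma(\ket{0_k}\bra{0_k})=\sigma$. Since the inclusion $D_S(\ket{0_k}\bra{0_k})\subseteq\mc D(\mc H)$ is automatic, this proves the equality, and notably it uses only $Z_1$ and the H\"ormander property.

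It remains to upgrade ``reachable from $\ket{0_k}\bra{0_k}$'' to ``reachable from every state'', and this is where the non-self-adjoint resource enters. As in \cref{alice}, moving $\ket 0\bra 1\otimes I^{k-1}$ to every register and summing gives a Lindbladian whose long-time limit is the replacer $E_{\ket{0_k}\bra{0_k}}\in\overline{\chan(S)}$, collapsing every input to $\ket{0_k}\bra{0_k}$. Then for arbitrary $\rho,\sigma\in\mc D(\mc H)$ the composite $\Phi_\sigma\circ E_{\ket{0_k}\bra{0_k}}\in\overline{\chan(S)}$ sends $\rho\mapsto\ket{0_k}\bra{0_k}\mapsto\sigma$, which is precisely transitivity.

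The crux is conceptual rather than computational: arbitrary eigenvalue distributions are produced \emph{without any convexity assumption}, by preparing a coherent superposition with the prescribed amplitudes and then dephasing it, and crucially the dephasing may be carried out in the possibly entangled eigenbasis of $\sigma$ because conjugating $E_{\infty}^{\ten{k}}$ by a global unitary is permitted. This is exactly what lets us dispense with the convexity used in \cref{alice}. The two resources therefore play genuinely distinct and complementary roles: $Z_1$ supplies the dephasing that manufactures every spectrum out of a pure state, while $\ket 0\bra 1\otimes I^{k-1}$ supplies the erasure that funnels every state into $\ket{0_k}\bra{0_k}$; in light of \cref{sa}, a non-self-adjoint resource of the latter kind is exactly what is needed for the full transitivity claim to be possible at all. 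The remaining checks are routine bookkeeping, namely that the finite compositions above stay in $\overline{\chan(S)}$ (from (CO) together with continuity of composition under the closure) and that the replacer, arising as a $t\to\infty$ limit, is admitted by (CL).
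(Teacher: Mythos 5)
Your proposal is correct and follows essentially the same route as the paper: the core of part (iii) is exactly the paper's identity $\diag(\mu) = E_{\infty}^{\ten k}\lpara u\ket{0_k}\bra{0_k}u^*\rpara$ for a unitary $u$ sending $\ket{0_k}$ to $\sum_i\sqrt{\mu_i}\ket{i}$, followed by a rotation into the eigenbasis of $\sigma$. The only cosmetic differences are that you sketch part (i) via the detailed-balance construction of \cref{cms} rather than the explicit amplitude-damping formula the paper quotes, and your final paragraph on transitivity via the replacer channel reproduces the proof of \cref{qubit_trans} rather than anything the lemma itself requires.
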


\begin{proof}
    Let us recall \cite{CW} that for $a=\ket 0\bra 1$ we have
    \[ e^{tL_a}\kla \begin{array}{cc} \rho_{00}& \rho_{01} \\ 
    \rho_{10}& \rho_{11} \end{array}\mer =
    \kla \begin{array}{cc} \rho_{00}+(1-e^{-2t})\rho_{11}& e^{-t}\rho_{01} \\ 
    e^{-t}\rho_{10} & e^{-2t}\rho_{11} \end{array}\mer  .\] 
    We see that by varying $e^{tL_a}(\ket 1\bra 1)$ we get all positive diagonal densities.  The same is true for $e^{tL_{a^*}}(\ket 0\bra 0)$. Thus for a tensor product $\rho=\rho_{1}\otimes \cdots \otimes \rho_{k}$ we may just chose $e^{t_1L_{a^*}}\otimes \cdots \otimes e^{t_kl_{a^*}}$ appropriately. Note, however, that this is composition of the channels $e^{t_jL_{a_j^*}}= \Ad_{u_j} e^{tL_{a_1^*}}\Ad_{u_j}$ where $u_j$ is the tensor flip between the first and the $j$-th register. Thus \cref{chan_membership} allows us to produce the same Lindbladian in the $j$-th registers. Using the H\"ormander system once more we can use conjugation by $w=w_1\otimes \cdots \otimes w_k$ to produce every tensor product that to the standard singular value decomposition.

    For the proof of (ii), we just have to note that $\lim_{t\to \infty} e^{tL_Z}=E_{\infty}$ because 
    \[ L_Z(\rho) = 2Z\rho Z-\rho  = 4\lpara\frac{\rho+Z\rho Z}{2}-\rho\rpara
    = 4(E_{\infty}-id)(\rho)  .\] 
    Using again the conjugation trick and products we see that for every subset $A$ the tensor product $E_{\infty}^{A}$ in $A$-register belongs to $\overline{\chan(S)}$.

    For the proof of (iii) we start with any diagonal $(f(\omega))_{\omega \in \lset 0,1\rset^k}$. Then we can find a unitary such that $u(|0_k\ran)= (\sqrt{f(\omega)})_{\omega}$ is the unit vector given square root. We deduce 
    \[ {\rm diag}(f) = E_{\infty}^k(u\ket{0_k}\bra{0_k}u^*)  .\] 
    For an arbitrary density we first produce the density ${\rm diag}(f)$ given by the eigenvalues and the add a rotation to create $\rho$.
\end{proof}

\begin{theorem} \label{qubit_trans}
    Let $S=S^*$ generate a H\"ormander system for $\mc H=\mb C^{2^k}$ and $\ket 0\bra 1\otimes I^{\ten{(k-1)}}$ and $Z\otimes I^{\ten{(k-1)}}\in S$. Then $\overline{\chan(S)}$ is transitive.  
\end{theorem}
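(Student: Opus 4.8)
The plan is to bypass the convexity hypothesis of \cref{alice} by manufacturing the replacer channel onto a single pure state as an \emph{honest} Lindbladian limit, and then transporting it to an arbitrary target using the forward reachability already established in \cref{addZ}. The target I aim for is the strongest possible: I would show that every replacer $\mc R_\sigma$ lies in $\overline{\chan(S)}$, which gives transitivity at once since $\mc R_\sigma(\rho)=\sigma$ for every input $\rho$.

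First I would reproduce the opening of the proof of \cref{alice}: with $a=\ket 0\bra 1\otimes I^{\ten{k-1}}\in S$ and the H\"ormander property, \cref{chan_membership} lets me conjugate the jump onto each register and form $L=\sum_{j=1}^{k}L_{a_j}$, so that $e^{tL}\in\chan(S)$. Each summand is single-qubit amplitude damping toward $\ket 0$ on a distinct tensor factor; since these generators act on disjoint registers they commute and each has the spectrally gapped unique fixed point $\ket 0\bra 0$. Hence $L$ is primitive with unique stationary state $\ket{0_k}\bra{0_k}$ and a strictly positive gap, so $E_{\ket{0_k}\bra{0_k}}=\lim_{t\to\infty}e^{tL}=\mc R_{\ket{0_k}\bra{0_k}}$ lands in $\overline{\chan(S)}$ by (CL). This is the crucial point that replaces the convex-combination argument: the erasure onto one pure state is a genuine Markovian limit, requiring no appeal to (CV).

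Next I would invoke \cref{addZ}(iii), whose hypotheses --- $\ket 0\bra 1\otimes I^{k-1}$ and $Z_1=Z\otimes I^{\ten{k-1}}$ contained in a self-adjoint H\"ormander generating set --- are precisely those of the theorem. It yields $D_S(\ket{0_k}\bra{0_k})=\mc D(\mc H)$, so for each target $\sigma$ there is $\Psi\in\overline{\chan(S)}$ with $\Psi(\ket{0_k}\bra{0_k})=\sigma$. Composing, $\Psi\circ E_{\ket{0_k}\bra{0_k}}$ first erases any input to $\ket{0_k}\bra{0_k}$ and then sends it to $\sigma$, i.e. it equals $\mc R_\sigma$; and it stays in $\overline{\chan(S)}$ because the closure of a composition-closed class is again closed under composition, composition being jointly continuous in finite dimension. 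As $\sigma$ is arbitrary, $\overline{\chan(S)}$ contains all replacers and is transitive.

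The one step carrying genuine content, and the one I would write out carefully, is the convergence $\lim_{t\to\infty}e^{tL}=\mc R_{\ket{0_k}\bra{0_k}}$: I must certify that the combined damping generator is primitive, with $\ket{0_k}\bra{0_k}$ as its \emph{sole} stationary state and a strictly positive spectral gap, so the semigroup truly collapses to the rank-one replacer rather than to a larger fixed-point manifold. Everything after that is bookkeeping with \cref{chan_membership}, \cref{addZ}, and the structural properties (CO) and (CL).
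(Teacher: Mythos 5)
Your proposal is correct and follows essentially the same route as the paper: erase any input to $\ket{0_k}\bra{0_k}$ via a replacer obtained as a Lindbladian limit built from the amplitude-damping jump (exactly the construction in the proof of \cref{alice}, minus the convexity step), then recreate the target $\sigma$ using \cref{addZ}(iii). The only cosmetic difference is that you take the long-time limit of the single generator $\sum_j L_{a_j}$ while the paper composes per-register replacer limits $\mc R_{\ket 0\bra 0}\otimes\id$, but these are the same mechanism via \cref{chan_membership} and (CL).
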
 

\begin{proof}
    Let $\mc R_{\ket 0\bra 0}=\lim_{t\to \infty}e^{tL_a}$ be the replacer channel. By assumption, $\mc R_{\ket 0\bra 0}\otimes \id$ and  the tensor product $\mc R_{\ket{0_k}\bra{0_k}}$ are in $\overline{\chan(S)}$. Let $\rho,\sigma$ be densities. We first destroy all information $\mc R_{\ket{0_k}\bra{0_k}}(\rho)=\ket{0_k}\bra{0_k}$ and then recreate $\sigma$ thanks to \cref{addZ}.  
\end{proof}
It appears that the strategy of first destroying information to prepare a reference pure state, followed by controlled transport to the desired target state, is standard in this area (e.g. \cite{dong2008, dirr2019}). What remains unclear, however, is the explicit role played by \cref{addZ} in this context.

\subsection{Linearity of Resources and Bilinear Gradient Form}\label{obstructions}
A natural question in any resource-based formulation is how a given set of generators behaves under composition. 
In the unitary setting, access to Hamiltonians $H_1$ and $H_2$ automatically implies access to their sum $H_1+H_2$, reflecting both the linear structure of the Schr\"odinger equation and the standard Lie-algebraic viewpoint on controllability \cite{JurdjevicSussmann1972,dalessandro2021}. 
This intuition also underlies much of the resource theoretic literature, where free sets are typically assumed to have convexity or closure properties \cite{ChitambarGour2019,BrandaoGour2015}. 
In the dissipative context, however, it is no longer clear whether such closure should be expected.

Note that in our previous definition of channel class, it is not assumed. 
We only allow $L_a+L_b$, for $a,b\in S$, to be two independent dissipative processes, each coupled to their uncorrelated baths, but disallow the collective, correlated dissipative channel $L_{a+b}$, producing correlated decay and interference effects.
The obstruction for this analysis is the non-linearity of the map $a\mapsto L_a$, see \cite{childs2016} for a similar discussion of sparse Lindbladians. Indeed, a better description is to look at a sesquilinear map 
 \[ (a,b) \mapsto L_{a,b}(\rho) := 2a^*\rho b-a^*b\rho-\rho a^*b  .\]
Using diagonalization it is easily follows that for a positive definite matrix $\gamma_{j,k}$ we obtain a Lindbladian 
 \[ L = \sum_{j,k} \gamma_{j,k} L_{a_j,a_k} \]
for all choices $a_j$. The Lindbladian corresponding to the sum is a prominent example 
\begin{equation}\label{sum}
 L_{a+b} = L_{a,a} + L_{b,b}+L_{a,b}+ L_{b,a} 
\end{equation}
given by the $2\time 2$ matrix with all entries $1$.
Our remedy is to consider Lindblads gradient form 
 \[ \Gamma_{L}(x,y) = L(x^*y)-L(x)^*y-x^*L(y)  \] 
The map $L\to \Gamma_{L}$ is not injective because it forgets derivations. A derivation is a linear map such that 
 \[ \delta(xy) = x\delta(y)+\delta(x)y  .\] 
Such a derivation is called self-adjoint (preserving) if $\delta(x^*)=\delta(x)^*$. The map $\delta(x)=i[H,x]$ for self-adjoint $H$ is an example of a self-adjoint derivation. The following result is well-known:

\begin{lemma}\label[lemma]{jj}
    Let $\delta$ be a self-adjoint derivation and $L$ be a Lindbladian. Then
    \[ \Gamma_{\delta+L} = \Gamma_{L}.\] 
    In particular 
    \[ \Gamma_{L_{a+\lambda 1}} = \Gamma_{L_a}.\]  
\end{lemma}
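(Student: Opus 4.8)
The plan is to exploit the additivity of the assignment $L \mapsto \Gamma_L$ together with the two defining properties of a self-adjoint derivation. First I would observe that, since the adjoint operation is additive, the gradient form is additive in its superoperator argument: for any superoperators $L_1,L_2$ the definition $\Gamma_L(x,y) = L(x^*y) - L(x)^*y - x^*L(y)$ gives $\Gamma_{L_1+L_2} = \Gamma_{L_1} + \Gamma_{L_2}$ at once. (Only additivity is used here, not homogeneity, which in fact fails because $x\mapsto x^*$ is conjugate-linear.) Consequently $\Gamma_{\delta+L} = \Gamma_\delta + \Gamma_L$, and the main identity reduces to the single claim $\Gamma_\delta = 0$ for every self-adjoint derivation $\delta$.

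To establish $\Gamma_\delta = 0$ I would substitute the Leibniz rule and the self-adjointness condition directly into the definition. The Leibniz rule gives $\delta(x^*y) = \delta(x^*)y + x^*\delta(y)$, and self-adjointness $\delta(x^*) = \delta(x)^*$ turns the first summand into $\delta(x)^*y$. Hence $\delta(x^*y) = \delta(x)^*y + x^*\delta(y)$, and feeding this into $\Gamma_\delta(x,y) = \delta(x^*y) - \delta(x)^*y - x^*\delta(y)$ cancels every term. This proves $\Gamma_{\delta+L} = \Gamma_L$.

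For the \emph{in particular} clause I would reduce $L_{a+\lambda 1} - L_a$ to a self-adjoint derivation and invoke the first part. Using the sesquilinear decomposition \eqref{sum} with $b = \lambda 1$, the pure term $L_{\lambda 1,\lambda 1}$ vanishes since its three contributions carry coefficients $2,-1,-1$ of $|\lambda|^2\rho$, while the cross terms evaluate to $L_{a,\lambda 1}(\rho) + L_{\lambda 1,a}(\rho) = \lambda[a^*,\rho] - \bar\lambda[a,\rho] = [\lambda a^* - \bar\lambda a,\,\rho]$. Writing $H = \lambda a^* - \bar\lambda a$ one checks $H^* = -H$, so $\delta := [H,\cdot]$ is a derivation with $\delta(x^*) = \delta(x)^*$, that is, a self-adjoint derivation of exactly the form featured in the lemma's preamble (generated by an anti-Hermitian operator). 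Thus $L_{a+\lambda 1} = L_a + \delta$ with $L_a = L_{a,a}$, and the first part yields $\Gamma_{L_{a+\lambda 1}} = \Gamma_{L_a}$.

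The computations are entirely routine, so I expect no genuine obstacle beyond careful bookkeeping of the conjugates $\lambda$ versus $\bar\lambda$ in the two cross terms. The only conceptual points worth flagging are that $L\mapsto\Gamma_L$ must be used as an \emph{additive} (not complex-linear) map, and that it is precisely the anti-Hermiticity of $H$ — equivalently the self-adjointness of $\delta$ — that makes the derivation drop out of the gradient form.
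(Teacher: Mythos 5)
Your proposal is correct and takes essentially the same route as the paper: additivity of $L \mapsto \Gamma_L$ together with $\Gamma_\delta = 0$ (via Leibniz plus self-adjointness) for the first claim, and the decomposition \eqref{sum} with $b = \lambda 1$ to identify $L_{a+\lambda 1} - L_a$ as a self-adjoint derivation for the second. Your explicit computation of that derivation as $[\lambda a^* - \bar\lambda a,\,\cdot\,]$ with anti-Hermitian generator merely makes concrete what the paper infers abstractly from the fact that every $L_c$ is self-adjointness-preserving.
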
   

The first assertion follows from linearity and $\Gamma_{\delta}=0$. For the second assertion we note that $L_c$ is always self-adjoint preserving. Then \eqref{sum} 
shows that for $b=\lambda 1$, the difference $L_{a+\lambda 1}-L_{a}$ is a derivation. Inspired by \cite{group_gradient_flow}, the following definition extends the class of Lindbladians associated with a resource set. 

\begin{definition}
    Let $S=S^*$ be a resource set. Then
    \[ L_{\Gamma}(S) = \lset L  : \exists{L'\in L(S)}  \text{ where } \Gamma_{L'} - \Gamma_{L} \text{ is completely positive} \rset   .\] 
    Then $\chan_{\Gamma}(S)$ is the set of channels closed under the operations (CO) and (CL) containing all $e^{tL}$ with $L\in L_{\Gamma}(S)$. Here convexity is not required.
\end{definition} 
The following result from \cite{JW} clarifies this definition. 

\begin{theorem}[\cite{JW}]
    Let $L=\sum_j L_{a_j}$ and $L'=\sum_k L_{b_k}$ be two Lindbladians. The following are equivalent
    \begin{enumerate}
        \item There exists a constant $c>0$ with 
            \[ \Gamma_{L}(x_j,x_l) \kl c \Gamma_{L'}(x_j,x_j) \]
            holds all $n$ tuples $(x_1,..,x_n)$. 
        \item $a_j$ belongs to the span of $\lset 1,b_1,\dots ,b_m\rset $.
    \end{enumerate} 
\end{theorem}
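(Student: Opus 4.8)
The plan is to read both conditions through the gradient (carr\'e du champ) representation of $\Gamma$. For $L=\sum_j L_{a_j}$ the gradient form is represented by the derivation $\partial_L\colon \mb B(\mc H)\to \mc E_L:=\mb B(\mc H)\ten\mb C^{n}$, $\partial_L x=([a_1,x],\dots,[a_n,x])$, where $\mc E_L$ carries the $\mb B(\mc H)$-valued inner product $\lan\xi,\eta\ran=\sum_j\xi_j^*\eta_j$ and the bimodule action $c\cdot(\xi_j)_j\cdot d=(c\xi_j d)_j$, so that $\Gamma_L(x,y)=\lan\partial_L x,\partial_L y\ran$; likewise $\partial_{L'}$ into $\mc E_{L'}=\mb B(\mc H)\ten\mb C^{m}$. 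I read the inequality in (i) as the \emph{complete} domination $[\Gamma_L(x_j,x_l)]_{j,l}\kl c\,[\Gamma_{L'}(x_j,x_l)]_{j,l}$ in $M_n(\mb B(\mc H))$ over all tuples, i.e. a comparison of two completely positive kernels.

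For (ii)$\Rightarrow$(i): if $a_j=\mu_j 1+\sum_k\lambda_{jk}b_k$ then $[a_j,x]=\sum_k\lambda_{jk}[b_k,x]$, since the scalar part drops out of every commutator (this is the invariance $\Gamma_{L_{a+\lambda 1}}=\Gamma_{L_a}$ of \cref{jj}). With $\Lambda=(\lambda_{jk})\in M_{n,m}(\mb C)$ we get $\partial_L=\Lambda\,\partial_{L'}$ and hence $\Gamma_L(x,y)=\sum_{k,l}(\Lambda^*\Lambda)_{kl}[b_k,x]^*[b_l,y]$. The difference $c\,\Gamma_{L'}-\Gamma_L$ is then governed by the matrix $c I_m-\Lambda^*\Lambda\gl 0$ (take $c=\norm{\Lambda}^2$), so it factors as a sum of squares and is again a completely positive kernel, giving (i).

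For (i)$\Rightarrow$(ii) I proceed in three steps. First, complete domination is exactly the hypothesis of the Hilbert-module form of Douglas' factorization lemma (a Kolmogorov/Stinespring argument): it produces an adjointable right-module map $T\colon\mc E_{L'}\to\mc E_L$ with $\norm{T}^2\kl c$ and $\partial_L=T\,\partial_{L'}$; concretely $T=(T_{jk})$ with $T_{jk}\in\mb B(\mc H)$ acting by left multiplication, so
\[ [a_j,x]=\sum_k T_{jk}\,[b_k,x]\qquad\text{for all }x\in\mb B(\mc H). \]
Second, I upgrade the coefficients to scalars using that $\partial_L,\partial_{L'}$ are derivations: expanding $[a_j,xy]$ by Leibniz in two ways through the displayed identity yields
\[ \sum_k [T_{jk},x]\,[b_k,y]=0\qquad\text{for all }x,y. \]
Third, passing to a maximal linearly independent subset of $\lset 1,b_1,\dots,b_m\rset$ and re-expressing the remaining commutators accordingly (which only recombines the $T_{jk}$, and does not enlarge the span in (ii)), I may assume $\lset 1,b_1,\dots,b_m\rset$ is linearly independent. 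Fixing $x,j$ and writing $c_k:=[T_{jk},x]$, the identity $\sum_k c_k[b_k,y]=0$ reads $\sum_k c_k\,y\,b_k=(\sum_k c_k b_k)\,y$ as superoperators, i.e. $\sum_k c_k\ten b_k^{\mathrm{op}}=(\sum_k c_k b_k)\ten 1$ in $\mb B(\mc H)\ten\mb B(\mc H)^{\mathrm{op}}$. Linear independence of $\lset 1,b_1^{\mathrm{op}},\dots,b_m^{\mathrm{op}}\rset$ forces every $c_k=[T_{jk},x]=0$; as $x$ is arbitrary each $T_{jk}$ is central, hence a scalar $t_{jk}$. Then $a_j-\sum_k t_{jk}b_k$ commutes with all of $\mb B(\mc H)$, so it is a scalar and $a_j\in\spn\lset 1,b_1,\dots,b_m\rset$.

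I expect the first step to be the main obstacle: one must verify that the \emph{complete} inequality (the matrix comparison over all tuples), rather than a single scalar inequality in one state, is what yields an honest adjointable module map and not merely a bounded operator, and one must pin down the left/right-module conventions so that $T$ has the commutator-compatible form used in step two. The scalar upgrade is then robust, the only care being the preliminary reduction to a linearly independent family so that the $\mb B(\mc H)\ten\mb B(\mc H)^{\mathrm{op}}$ independence argument applies.
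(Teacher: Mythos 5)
The paper offers no proof of this statement: it is quoted directly from \cite{JW} (the theorem environment itself carries the citation), so there is no in-paper argument to compare yours against, and I can only assess your proposal on its own merits — it is correct in the finite-dimensional setting the paper assumes throughout. Your reading of (i) as the matrix (complete) domination $[\Gamma_L(x_j,x_l)]_{j,l}\kl c\,[\Gamma_{L'}(x_j,x_l)]_{j,l}$ in $M_n(\mb B(\mc H))$ — the printed $\Gamma_{L'}(x_j,x_j)$ is evidently a typo — is the right one, and it is exactly what your factorization step needs: conjugating the matrix inequality by an operator column $(c_1,\dots,c_n)$ gives $\lan T_0\xi,T_0\xi\ran\kl c\lan\xi,\xi\ran$ for $\xi=\sum_i\partial_{L'}x_i\,c_i$, so $T_0(\partial_{L'}x)=\partial_L x$ is well defined and bounded on the right submodule generated by the range of $\partial_{L'}$. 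The two obstacles you flag evaporate because $\mb B(\mc H)=M_d$: closed submodules of $M_d^m$ are orthogonally complemented, so $T_0$ extends to all of $\mc E_{L'}$, and every bounded right $M_d$-module map $M_d^m\to M_d^n$ is automatically given by a matrix $(T_{jk})$ of left multiplications — precisely the commutator-compatible form your Leibniz step requires — with adjointability then free. The centrality upgrade is also sound: passing from the superoperator identity to $\sum_k c_k\ten b_k^{\mathrm{op}}=(\sum_k c_k b_k)\ten 1$ uses injectivity of the left/right-multiplication representation of $M_d\ten M_d^{\mathrm{op}}$ on $M_d$ (simplicity plus dimension count), and your preliminary reduction to a linearly independent family $\lset 1,b_1,\dots,b_r\rset$ only recombines the $T_{jk}$ with scalar coefficients, so it does not enlarge the span in which you conclude membership. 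The direction (ii)$\Rightarrow$(i) with $c=\norm{\Lambda}^2$ and $cI_m-\Lambda^*\Lambda=M^*M$ is a clean sum-of-squares verification. Since the argument of \cite{JW} is not reproduced in this paper I cannot certify that your route coincides with theirs, but module-map factorization of dominated gradient forms plus the Leibniz/centrality argument is the standard proof pattern for results of this type; at minimum you have supplied a correct, self-contained proof, valid verbatim in finite dimension, which is all this paper requires.
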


\begin{corollary}
    The set $\lset a| L_a\in L_{\Gamma}(S)\rset $ is a linear space.  
\end{corollary}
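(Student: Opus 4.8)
The plan is to reduce the statement to the characterization quoted from \cite{JW} and then exploit the additivity of the gradient form together with the closure of $L(S)$ under sums. Write $V = \{a : L_a \in L_{\Gamma}(S)\}$. The starting observation is that, by \cref{jj}, the map $L \mapsto \Gamma_L$ is linear and annihilates every self-adjoint derivation; hence for any witness $L' \in L(S)$ its Hamiltonian part is invisible to $\Gamma$, and we may always take the witness in the purely dissipative form $L' = \sum_k L_{b_k}$ with $b_k \in S$. The cited theorem then says that $\Gamma_{L_a}$ is dominated by $c\,\Gamma_{L'}$ in the completely positive order precisely when $a \in \spn\{1, b_1, \dots, b_m\}$. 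Comparing with the definition of $L_{\Gamma}(S)$, which asks for the \emph{exact} difference $\Gamma_{L'} - \Gamma_{L_a}$ to be completely positive, I would absorb the constant $c$ by rescaling: replacing each $b_k$ with $\sqrt{c}\,b_k$ multiplies $\Gamma_{L'}$ by $c$ without changing its span and keeps the generator in $L(S)$, since the dissipative weights range over all nonnegative reals. Thus membership $L_a \in L_{\Gamma}(S)$ is equivalent to the existence of a finite family $b_1, \dots, b_m \in S$ with $a \in \spn\{1, b_1, \dots, b_m\}$.

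Closure of $V$ under scalar multiples is then immediate, since $L_{\lambda a} = \abs{\lambda}^2 L_a$ gives $\Gamma_{L_{\lambda a}} = \abs{\lambda}^2 \Gamma_{L_a}$: if $a$ lies in some span $\spn\{1, b_k\}$, so does $\lambda a$, and the same witness (after rescaling) certifies $\lambda a \in V$.

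The substantive point is closure under addition, where the naive obstruction is that $V$ is \emph{a priori} a union of the linear spaces $\spn\{1, b_1, \dots, b_m\}$ over all admissible witnesses, and a union of subspaces need not be a subspace. To overcome this I would merge witnesses. Suppose $a, a' \in V$ are certified by $L_1 = \sum_k L_{b_k}$ and $L_2 = \sum_l L_{c_l}$, so $a \in \spn\{1, b_k\}$ and $a' \in \spn\{1, c_l\}$. Take the combined reference $L_3 := L_1 + L_2 = \sum_k L_{b_k} + \sum_l L_{c_l}$, which is again a Lindbladian in $L(S)$ because $L(S)$ is closed under sums of dissipators with jump operators from $S$. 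Its jump operators span $\spn\{1, b_1, \dots, c_1, \dots\}$, a genuine linear space containing both $a$ and $a'$ and therefore their sum $a + a'$. Applying the equivalence of \cite{JW} in the reverse direction to $L_3$ and the single jump operator $a + a'$ (again rescaling coefficients to clear the constant) yields that $\Gamma_{L_3} - \Gamma_{L_{a+a'}}$ is completely positive, i.e. $L_{a+a'} \in L_{\Gamma}(S)$ and $a + a' \in V$.

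I expect the main difficulty to be bookkeeping rather than conceptual: one must verify that the combined generator $L_3$ genuinely lies in $L(S)$, and that the additivity $\Gamma_{L_1+L_2} = \Gamma_{L_1} + \Gamma_{L_2}$ combined with the constant-absorbing rescaling is compatible with the \emph{exact} complete positivity demanded in the definition of $L_{\Gamma}(S)$, as opposed to the domination-up-to-$c$ supplied by the theorem. Once these are in place, the identity $1$ functions as a cost-free resource present in every span, consistent with \cref{jj}, and one sees that $V$ is in fact the linear space $\spn(\{1\} \cup S)$.
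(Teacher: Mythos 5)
Your proof is correct and takes essentially the route the paper intends: the corollary is stated with no separate argument as an immediate consequence of the quoted theorem from \cite{JW}, whose span characterization—together with the rescaling that absorbs the constant $c$ and the merging of witnesses via sums in $L(S)$, both of which you rightly spell out—is exactly the implicit proof. Your closing identification of the set as $\spn\lpara\lset 1\rset\cup S\rpara$ is a correct sharpening consistent with the theorem that follows in the paper.
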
 

\begin{lemma}\label[lemma]{spann}
    Let $a^*\neq az$ for all $z\in \mb C$. Then the linear span of $ \lset u^*au|u\in u(d)\rset \cup\lset u^*a^*u|u\in u(d)\rset $ contains $b=\ket 0\bra 1$. 
\end{lemma}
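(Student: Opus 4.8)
The plan is to view the span as a subspace of $M_d(\mb C)$ that is invariant under unitary conjugation and to exploit the fact that, under this action, the traceless matrices form an irreducible block. First I would set $V=\spn\lpara\lset u^*au:u\in u(d)\rset\cup\lset u^*a^*u:u\in u(d)\rset\rpara$. For any fixed $w\in u(d)$ one has $w^*(u^*au)w=(uw)^*a(uw)$, and likewise for $a^*$, so $V$ is stable under conjugation by all of $u(d)$; infinitesimally, $V$ is invariant under every $\ad_H$ with $H=H^*$. The role of the hypothesis is to guarantee that $a$ is not a scalar: if $a=cI$ then $a^*=\bar cI=az$ with $z=\bar c/c$ (any $z$ if $c=0$), contradicting $a^*\neq az$. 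Thus $V$ is a nonzero conjugation-invariant subspace that contains the non-scalar element $a$, so $V\not\subseteq\mb CI$.

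The key structural input is the decomposition of $M_d(\mb C)$ under the adjoint (conjugation) action of $u(d)$ into the scalar line and the traceless part, $M_d(\mb C)=\mb CI\oplus\mf{sl}(d,\mb C)$, together with the fact that $\mf{sl}(d,\mb C)$, the complexified adjoint representation of $\mf{su}(d)$, is irreducible and not isomorphic to the trivial summand $\mb CI$. Granting this, Schur's lemma and complete reducibility show that the only conjugation-invariant complex subspaces of $M_d(\mb C)$ are $0$, $\mb CI$, $\mf{sl}(d,\mb C)$, and $M_d(\mb C)$. Since $V$ is nonzero and not contained in $\mb CI$, it must contain $\mf{sl}(d,\mb C)$.

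To finish, I would note that $b=\ket0\bra1$ is traceless (the statement presupposes $d\geq2$, as the hypothesis is impossible for $d=1$), so $b\in\mf{sl}(d,\mb C)\subseteq V$, which is exactly the assertion. The main, and indeed only nontrivial, obstacle is justifying the irreducibility of the adjoint action on traceless matrices; everything else is bookkeeping. If one prefers a self-contained argument avoiding representation theory, I would instead show directly that the $\ad_H$-orbit of a single non-scalar matrix spans all of $\mf{sl}(d,\mb C)$, by taking iterated commutators $[H,a]$ with suitable self-adjoint $H$ to reach every off-diagonal matrix unit together with the traceless diagonal combinations; this is the concrete form of the same irreducibility statement. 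It is worth remarking that only the non-scalarity of $a$ was actually used, so the appearance of $a^*$ in the spanning set is inessential to the conclusion and merely reflects the resource set at hand.
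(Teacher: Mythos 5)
Your proof is correct, and it takes a genuinely different route from the paper's. The paper argues over the reals: it forms the real span $K_a=\lset\sum_j\lambda_j u_j^*au_j:\lambda_j\in\mb R,\ u_j\in U(d)\rset$ and invokes the real invariant-subspace structure of the conjugation action (the scalars, the Hermitian part $K_{sym}$, the anti-Hermitian part $K_{ant}$, citing \cite{approx_cloning}), showing first that $K_a$ must contain $K_{ant}$ and only then complexifying via $iK_{ant}=K_{sym}$. In that real picture the decomposition is \emph{not} multiplicity-free --- multiplication by $i$ intertwines $K_{sym}$ and $K_{ant}$ --- so there exist ``diagonal'' invariant subspaces such as $(1+i)\cdot\lset \text{traceless Hermitian matrices}\rset$, which meet neither $K_{sym}$ nor $K_{ant}$; the hypothesis $a^*\neq az$, i.e.\ that $a$ is not a complex multiple of a Hermitian matrix, is exactly what excludes these, and the paper's real-span argument genuinely needs it. You instead work directly over $\mb C$, where $M_d(\mb C)=\mb C I\oplus\mf{sl}(d,\mb C)$ is a multiplicity-free sum of the trivial and the (irreducible) adjoint representation, so Schur's lemma and complete reducibility classify all conjugation-invariant complex subspaces as $0$, $\mb C I$, $\mf{sl}(d,\mb C)$, and $M_d(\mb C)$; mere non-scalarity of $a$ then forces the span to contain $\mf{sl}(d,\mb C)\ni\ket 0\bra 1$. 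Your route is cleaner and proves a strictly stronger statement: the hypothesis can be relaxed to ``$a$ is not scalar,'' and the conjugates of $a^*$ are never used --- your closing remark on this is accurate (the stronger hypothesis is still the physically meaningful one in this section, since by \cref{sa} resources that are Hermitian up to phase generate only unital dynamics). The one step you should make explicit is the irreducibility of the adjoint action on $\mf{sl}(d,\mb C)$: it follows from simplicity of $\mf{sl}(d,\mb C)$, since a conjugation-invariant complex subspace is invariant under every $\ad_H$ and hence is an ideal; alternatively your sketched iterated-commutator argument would serve as a self-contained substitute.
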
 

\begin{proof} 
    Let us consider the conjugate representation $\pi(u)(x)=u^*xu$. The real invariant subspaces of $\ell_2^n\otimes \overline{\ell}_2^n=S_2^n$ are 
    \[ K_1= \mb R 1 , K_{sym} = \lset x: x^*=x\rset   , K_{ant} = \lset x:x^*=-x\rset   \] 
    by \cite{approx_cloning}. For any $a$, we may consider the real invariant subspace $$K_a=\lset\sum_j \lambda_j u_j^*au_j: \lambda_j\in \mb R, u_j\in U(n)\rset.$$ If $K_a\cap K_j\neq 0$, then $K_j\subset K_a$. Let us assume that $tr(a)=0$. Note that $a'=a-\frac{tr(a)}{d}\in K_a$. For a non-self-adjoint $a$, we know that $K_a$ is not contained in $K_{sym}$ and $K_a\cap K_1=\lset 0\rset $. Thus $K_a$ has to contain an element in $K_{ant}$. Thus $K_{ant}\subset K_a$. However, $iK_{ant}=K_{sym}$. This $K_{sym}+iK_{sym}$ is contained in the complex orbit $K_a^{\mb C}$. Since $b=\ket 0\bra 1$ has trace $0$, we obtain the assertion.
\end{proof}

\begin{theorem}
    Let $S=S^*$ be a H\"ormander system such that $S\subset L(A)$ contains an element which is neither self-adjoint or antisymmetric. Then $\chan_{\Gamma}(S)$ is the set of all channels and 
    \[ \lset L_a: a\in L(A)\rset  \subset L_{\Gamma}(S)    .\] 
\end{theorem}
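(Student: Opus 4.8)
The plan is to reduce the entire statement to one assertion: the set $V=\lset a: L_a\in L_\Gamma(S)\rset$ is all of $\mb B(\mc H)$ (equivalently, every traceless operator lies in $V$, the identity being harmless since $L_I=0$). By the preceding corollary $V$ is a complex linear space, and since $a\in S$ gives $L_a\in L(S)\subset L_\Gamma(S)$, while $S=S^*$ puts both $a,a^*\in V$, the task is purely to enlarge $V$. The bridge from the corollary to \cref{spann} is conjugation invariance, which I would establish first.

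First I would show $V$ is invariant under $a\mapsto u^*au$ for every $u\in U(d)$. The gradient form transforms cleanly under conjugation: writing $M=\Ad_{u^*}\,L\,\Ad_u$ one computes $u\,\Gamma_M(x,y)\,u^*=\Gamma_L(uxu^*,uyu^*)$, so $\Gamma_M$ is completely positive iff $\Gamma_L$ is, since the associated block matrix is conjugated by $I\otimes u$ and the tuples $(ux_iu^*)$ range over all tuples as $(x_i)$ does. Hence if $\Gamma_{L'}-\Gamma_{L_a}$ is completely positive for some $L'\in L(S)$, then $\Gamma_{\Ad_{u^*}L'\Ad_u}-\Gamma_{L_{u^*au}}$ is completely positive as well, using the identity $\Ad_{u^*}L_a\Ad_u=L_{u^*au}$ from \cref{chan_membership}. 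Because $S$ is a H\"ormander system the group it generates is all of $SU(d)$, so by \cref{chan_membership}(iii) the conjugated dominating generator $\Ad_{u^*}L'\Ad_u$ still lies in $L(S)$; thus $L_{u^*au}\in L_\Gamma(S)$ and $u^*au\in V$. This is precisely where the H\"ormander hypothesis is used, and it is the delicate point, because it couples the purely algebraic orbit of $a$ to the completely-positive-domination definition of $L_\Gamma(S)$.

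With conjugation invariance in hand I would invoke \cref{spann}: the hypothesis that $S$ contains an element $a$ that is neither self-adjoint nor antisymmetric is exactly the condition $a^*\neq az$ for all $z\in\mb C$, so the span of $\lset u^*au\rset\cup\lset u^*a^*u\rset$ contains $b=\ket0\bra1$. Since $V$ is a linear space containing $a,a^*$ and closed under conjugation, it contains this whole span, whence $\ket0\bra1\in V$. As $\ket0\bra1$ is a nonzero traceless operator and the complex span of its $U(d)$-conjugation orbit is the irreducible space $\mf{sl}_d(\mb C)$ of all traceless operators, linearity and conjugation invariance force $V\supseteq\mf{sl}_d(\mb C)$. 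Finally \cref{jj} gives $\Gamma_{L_{c+\lambda I}}=\Gamma_{L_c}$, so membership in $L_\Gamma(S)$ is insensitive to the identity component; therefore $L_c\in L_\Gamma(S)$ for every operator $c$, which is the displayed inclusion $\lset L_a:a\in L(A)\rset\subset L_\Gamma(S)$.

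It remains to upgrade this to all channels. Since $L_\Gamma(S)$ now contains every single-jump generator $L_a$ and every derivation $-i\ad_H$ (whose gradient form vanishes and is dominated by any completely positive $\Gamma_{L'}$), and is stable under sums (sums of completely positive gradient forms are completely positive), it contains every Lindbladian; thus $\chan_\Gamma(S)\supseteq\lset e^{tL}:L\in\mc L\rset$. Closing under (CO) and (CL) then yields all unitary channels and, via the convergence in \cref{cms} and \cref{alice}, all replacer channels $\mc R_\sigma$ for arbitrary $\sigma$. I would realize an arbitrary target channel by the standard dilation implemented entirely inside $\mc H$: reset an ancilla block to a fixed state with a replacer, apply a global coupling unitary, and discard the ancilla by a second replacer; composing these three admissible channels reproduces the desired channel. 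I expect this final step to be the main obstacle, for two reasons: one must check that it genuinely produces non-divisible channels (it does, since discarding the ancilla is exactly what breaks divisibility, so no contradiction with the Markovian origin of the pieces arises) and that replacers suffice to prepare the required, possibly mixed, ancilla states, so that convexity need not be assumed separately; and one must be explicit about the sense in which ``all channels'' is meant, namely channels on the system carried by $A$ for which $\mc H$ supplies the dilation space.
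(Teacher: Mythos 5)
Your proof of the displayed inclusion $\lset L_a : a\in L(A)\rset \subset L_{\Gamma}(S)$ is correct and follows essentially the same route as the paper: you work with the same set $V=\lset a: L_a\in L_{\Gamma}(S)\rset$ (the paper calls it $A(S)$), derive its invariance under $a\mapsto u^*au$ from the H\"ormander hypothesis, feed this into \cref{spann} to place $\ket 0\bra 1$ in $V$, and conclude by linearity (the preceding corollary) together with \cref{jj} to absorb the trace part. The paper's proof is exactly this chain compressed into two sentences; your version is in fact more careful, since you verify the covariance $u\,\Gamma_{\Ad_{u^*}L\Ad_u}(x,y)\,u^*=\Gamma_{L}(uxu^*,uyu^*)$ that the paper leaves implicit, and you handle the identity component via \cref{jj} rather than claiming, as the paper does, that all matrix units (including the non-traceless diagonal ones) are obtained directly.

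The genuine gap is in your final paragraph, i.e.\ the clause that $\chan_{\Gamma}(S)$ is the set of \emph{all} channels. First, your dilation sandwich does not produce arbitrary channels on $\mc H$: if $\mc H=\mc H_A\otimes\mc H_E$, then the composite (replacer on $E$) $\circ\,\Ad_U\,\circ$ (replacer on $E$) is the map $\rho\mapsto \Phi_A(\tr_E\rho)\otimes\ket 0\bra 0$, which always has this degenerate product form. You thereby simulate an arbitrary channel on a proper \emph{subsystem}, which is a strictly weaker statement than the theorem's; moreover it requires $\mc H$ to factor, and a channel on all of $\mc H$ would need a dilating ancilla that cannot live inside $\mc H$. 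Second, and more fundamentally, no argument can close this gap under the stated definition of $\chan_{\Gamma}(S)$: for any Lindbladian $L$ the superoperator trace $\tr L$ is real, so $\det e^{tL}=e^{t\tr L}>0$; determinants multiply under (CO) and pass to limits under (CL), so every element of $\chan_{\Gamma}(S)$ has non-negative determinant, whereas there exist channels with negative determinant, e.g.\ $\rho\mapsto\tfrac13\lpara X\rho X+Y\rho Y+Z\rho Z\rpara$ on a qubit, whose determinant is $-1/27$. This is precisely the infinitesimal-divisibility obstruction of \cite{wolf_cirac} that the paper itself recalls after \cref{sa}; note also that your replacer sandwich does not escape it, since the composite is still a limit of compositions of Markovian channels on $\mc H$ (its determinant is $0$), even though the channel it \emph{simulates} on the subsystem may be non-divisible there. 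So the robust content of the theorem is the displayed inclusion, which you proved; the ``all channels'' clause, which the paper's own proof asserts in a single unsupported sentence, cannot hold as literally stated and should at best be read as saying that $\chan_{\Gamma}(S)$ contains $e^{tL}$ for every Lindbladian $L$ on $A$, i.e.\ it no longer depends on $S$.
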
  

\begin{proof}
    Let us consider
    \[ A(S) = \lset a : L_a\in L_{\Gamma}(S)\rset   .\]  
    Since $S$ is H\"ormander, we know that $L_{\Gamma}(S)$ is invariant under $\Ad_u$ conjugation and contains $\ket 1\bra 2$ and $\ket 2\bra 1$, hence all matrix units, hence all of $L(A)$ by linearity.  Thus $L_{\Gamma}(S)$ is the set of all channels.
\end{proof}

\begin{remark}
    It would be nice to have a more operational description for the passage from $L(S)$ to $L_{\Gamma}(S)$.
\end{remark}

\subsection{Environment-Assisted Controllability}
We will conclude our investigation by adding environment to our resource set 
 \[  S_{AE} = S_A\otimes 1\cup S_A\otimes X\cup  1\otimes S_E\qquad S_E=\lset X,Y\rset   .\] 
Since we want to implement a class of channels, we have to add state preparation 
 \[ E_{prep}(\rho_A\otimes \rho_E)= tr(\rho_E) \rho_{A}\otimes \ket 0\bra 0 \] 
to the set of allowable operations
\begin{enumerate}
    \item[(UN)] If $iH\in {\rm span}(S_{AE})$ is anti-Hermitian, then $\Phi_t(\rho)=e^{itH}\rho e^{-itH}$ belongs to $\chan_{AE}(S)$ for $t\in \mb R$. 
    \item[(PR)] $E_{prep}\in \chan_{AE}(S)$
    \item[(CO)] If we generate several channels $\Phi_{t_1},\dots ,\Phi_{t_m}\in \chan_{AE}(S)$, then the composition is also generated $\Phi_{t_1}\cdots\Phi_{t_m}\in \chan_{AE}(S)$;
    \item[(CL)] If $ \Phi \in \overline{\chan_{AE}(S)}$ then $ \Phi \in \chan_{AE}(S)$; that is to say $\chan_{AE}(S)$ closed set. 
    \item[(CC)] $\chan_{AE}(S)$ is convex.      
\end{enumerate}
 
 \begin{definition}
    $\chan_A(S)$ is the set of channels of the form
    \[ \Phi(\rho) = tr_E(\Psi(\rho\otimes \ket 0\bra 0)) \] 
    where $\Psi\in \chan_{AE}(S)$.
\end{definition}

 \begin{theorem}\label{htrans}
    If the span of $S_A$ contains a H\"ormander system, then $e^{tL}\in \chan_A(S)$ for every Lindbladian $L$ on $A$. 
 \end{theorem}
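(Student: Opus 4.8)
The plan is to realize $e^{tL}$ as a repeated-interaction (collision) dilation through the single qubit environment, using full unitary controllability on the joint system $A\ten E$ as a black box that makes every interaction unitary available.

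First I would record the closure structure of $\chan_A(S)$ that lets me assemble $e^{tL}$ from elementary pieces. Using the preparation map $E_{prep}$ to reset the environment between applications, one checks that $\chan_A(S)$ is closed under composition: if $\Phi_i(\rho)=\tr_E\lpara\Psi_i(\rho\ten\ket0\bra0)\rpara$ with $\Psi_i\in\chan_{AE}(S)$, then $\Psi_2\circ E_{prep}\circ\Psi_1$ realizes $\Phi_2\circ\Phi_1$, since $E_{prep}(\sigma_{AE})=\tr_E(\sigma_{AE})\ten\ket0\bra0$. Together with (CL) and (CC) this means it suffices to produce each summand of the GKLS decomposition $L=-i\ad_H+\sum_j D_{a_j}$, where $D_a(\rho)=a\rho a^*-\tfrac12\{a^*a,\rho\}$, and then recombine them by the Trotter formula $e^{tL}=\lim_n\lpara e^{(t/n)(-i\ad_H)}\prod_j e^{(t/n)D_{a_j}}\rpara^n$ inside the composition-closed, closed set $\chan_A(S)$. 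The Hamiltonian factor $\Ad_{e^{-itH}}$ lies in $\chan_A(S)$ trivially, realized by $\Ad_{e^{-itH}\ten 1}$ followed by tracing out $E$, so the whole problem reduces to the single-jump dissipators $e^{tD_a}$.

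Second I would establish full unitary controllability on $A\ten E=\mb C^{d}\ten\mb C^2$ (with $d=\dim A$), which upgrades (UN) from the generators in $S_{AE}$ to $\Ad_V$ for every $V\in SU(2d)$. The Lie algebra generated by $S_{AE}$ contains $\mf{su}(d)\ten 1$ (from $S_A\ten 1$, as $\spn S_A$ is H\"ormander), $1\ten\mf{su}(2)$ (from $1\ten\lset X,Y\rset$), and, bracketing the coupling generators $S_A\ten X$ against $\mf{su}(d)\ten 1$ via $[\,s\ten 1,s'\ten X\,]=[s,s']\ten X$, the whole slice $\mf{su}(d)\ten X$; conjugating by $1\ten\mf{su}(2)$ produces $\mf{su}(d)\ten Y$ and $\mf{su}(d)\ten Z$, while $[\,a\ten X,b\ten Y\,]=\lset a,b\rset\ten iZ$ supplies the remaining $I\ten$-Pauli and traceful-tensor directions. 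Counting against the Pauli--Gell-Mann basis $\lset iP\ten Q\rset$ shows these span all of $\mf{su}(2d)$, so Chow's theorem as in \cref{chan_membership} gives $\Ad_V\in\chan_{AE}(S)$ for every $V\in SU(2d)$.

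Third, and this is the analytic heart, I would realize $e^{tD_a}$ by a collision model with the single qubit ancilla. For step size $\tau$ set $G=a\ten\sigma_++a^*\ten\sigma_-$ (Hermitian on $A\ten E$, with $\sigma_+=\ket1\bra0$, $\sigma_-=\ket0\bra1$) and $V_\tau=e^{-i\sqrt{\tau}\,G}$, which is available by the previous step for arbitrary $a$. A second-order expansion in $\sqrt{\tau}$ gives $\mc C_\tau(\rho):=\tr_E\lpara V_\tau(\rho\ten\ket0\bra0)V_\tau^*\rpara=\rho+\tau D_a(\rho)+O(\tau^{3/2})$, the first-order term vanishing because $\bra0\sigma_\pm\ket0=0$. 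Iterating with a reset after each collision, $\mc C_\tau^n=\tr_E\circ\Psi_n(\,\cdot\,\ten\ket0\bra0)$ where $\Psi_n=\Ad_{V_\tau}\lpara E_{prep}\,\Ad_{V_\tau}\rpara^{n-1}\in\chan_{AE}(S)$, so $\mc C_\tau^n\in\chan_A(S)$. Taking $\tau=t/n$ and $n\to\infty$, the estimate $(\id+\tfrac{t}{n}D_a+O(n^{-3/2}))^n\to e^{tD_a}$ together with (CL) yields $e^{tD_a}\in\chan_A(S)$, and the Trotter recombination of the first step finishes the proof.

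The main obstacle I expect is the convergence in this third step: one must control the $O(\tau^{3/2})$ remainder uniformly across the $n=t/\tau$ collisions so that the discrete product genuinely converges to the continuous semigroup $e^{tD_a}$, rather than merely agreeing with it to first order. The $\sqrt{\tau}$ scaling of the coupling is exactly what balances the jump term $\tau\,a\rho a^*$ against the no-jump damping $1-\tfrac\tau2 a^*a$, and making this rigorous, for instance through a Banach-algebra Trotter--Kato bound of the form $\norm{\mc C_\tau-e^{\tau D_a}}\le C\tau^{3/2}$ summed over $n$ steps, is the one place requiring care beyond Lie-algebraic generation and closure bookkeeping.
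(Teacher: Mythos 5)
Your proof is correct, and its skeleton coincides with the paper's: both arguments upgrade the H\"ormander property from $S_A$ to $S_{AE}$ (your dimension count against the Pauli--Gell-Mann basis is exactly the content of the paper's lemma preceding the proof), both couple to the qubit environment through the same Hamiltonian $G=a\ten\sigma_++a^*\ten\sigma_-=\begin{pmatrix}0&a\\a^*&0\end{pmatrix}$ with the same $\sqrt{\tau}$ time scaling, and both finish by Trotterization in the Banach algebra of channels with the diamond norm. The genuine difference is the mechanism that extracts the dissipator. The paper first approximates the joint-space semigroup $e^{\tau L_G}$ by the \emph{mixture} of two unitaries $\tfrac12\lpara\Ad_{e^{i\sqrt{2\tau}G}}+\Ad_{e^{-i\sqrt{2\tau}G}}\rpara$ (\cref{eps}), which is precisely what forces it to invoke the convexity axiom (CC) --- or, per the subsequent remark, a second ancilla qubit $E'$ --- before compressing with $\tr_E(\cdot)\,\mathrm{prep}_0$. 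You instead use a \emph{single} unitary collision $V_\tau=e^{-i\sqrt{\tau}G}$ per step and let the reset plus partial trace against $\ket 0\bra 0$ annihilate the first-order term $-i\sqrt{\tau}\,[G,\rho\ten\ket 0\bra 0]$, since $G$ maps the populated $E$-sector to off-diagonal sectors. This is a real simplification: your argument never uses (CC) or $E'$, so it establishes the theorem for the class generated by (UN), (PR), (CO), (CL) alone, whereas the paper's route needs the extra convexification. Two small remarks on the step you flag as the main obstacle: (a) the same parity argument that kills the first-order term kills \emph{all} odd half-powers of $\sqrt{\tau}$ under $\tr_E$, so your one-step error is actually $O(\tau^2)$, not merely $O(\tau^{3/2})$, matching the paper's bound in \cref{eps}; (b) the uniform control over $n=t/\tau$ collisions is standard, because $\mc C_\tau$ and $e^{\tau D_a}$ are both diamond-norm contractions, giving the telescoping estimate $\norm{\mc C_{t/n}^{\,n}-e^{tD_a}}_\diamond\le n\,\norm{\mc C_{t/n}-e^{(t/n)D_a}}_\diamond\to 0$, which is exactly the Banach-algebra Trotter convergence the paper also appeals to.
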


\begin{lemma}
    If $S_A$ is H\"ormander, then $S_{AE}$ is H\"ormander.
\end{lemma}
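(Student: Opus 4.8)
The plan is to show that the real Lie algebra $\mf g$ generated by $S_{AE}$ is all of $\mf{su}(2d_A)$, where $d_A=\dim\mc H_A$ and the environment is a single qubit, $\mc H_E=\mb C^2$. The organizing principle is the Pauli decomposition of the target algebra: writing every operator on $\mc H_A\otimes\mc H_E$ in the basis $\{1,X,Y,Z\}$ of $\mb B(\mc H_E)$ and using that the Pauli matrices are traceless and Hermitian, one obtains the real direct sum
\[
\mf{su}(2d_A)=\bigl(\mf{su}(d_A)\otimes 1\bigr)\oplus\bigl(\mf{u}(d_A)\otimes X\bigr)\oplus\bigl(\mf{u}(d_A)\otimes Y\bigr)\oplus\bigl(\mf{u}(d_A)\otimes Z\bigr),
\]
the trace constraint affecting only the $1$-block (a dimension count $4d_A^2-1=(d_A^2-1)+3d_A^2$ confirms this). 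I would then produce each of the four summands inside $\mf g$.

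The $1$-block is immediate: $S_A\otimes 1$ generates exactly $\mf{su}(d_A)\otimes 1$ by the hypothesis that $S_A$ is Hörmander, since $[a\otimes 1,b\otimes 1]=[a,b]\otimes 1$. Likewise $1\otimes S_E=\{1\otimes X,1\otimes Y\}$ generates $1\otimes\mf{su}(2)$, because $[1\otimes X,1\otimes Y]=2i(1\otimes Z)$ recovers the missing central direction; in particular $1\otimes iX,\,1\otimes iY,\,1\otimes iZ\in\mf g$. The decisive observation is that $\mf{su}(d_A)\otimes 1$ and $1\otimes\mf{su}(2)$ commute, so the two sectors can only be coupled through the mixed resources $S_A\otimes X$. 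First I would show $\mf{su}(d_A)\otimes X\subset\mf g$ by an ideal argument: the set $V=\{c\in\mf{su}(d_A):\,c\otimes X\in\mf g\}$ is a real subspace, and since $b\otimes 1\in\mf g$ for every $b\in\mf{su}(d_A)$ we have $[b\otimes 1,\,c\otimes X]=[b,c]\otimes X$ with $[b,c]\in\mf{su}(d_A)$, so $[\,\mf{su}(d_A),V\,]\subseteq V$. Thus $V$ is an ideal, hence a subalgebra, containing $S_A$; because $S_A$ generates $\mf{su}(d_A)$, it follows that $V=\mf{su}(d_A)$. Rotating by the environment generators via the Pauli relations, $[1\otimes iY,\,c\otimes X]=2(c\otimes Z)$ and $[1\otimes iZ,\,c\otimes X]=-2(c\otimes Y)$, then yields $\mf{su}(d_A)\otimes Y$ and $\mf{su}(d_A)\otimes Z$. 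Finally, adjoining the central elements $1\otimes iP=(i1_{d_A})\otimes P$, already in $\mf g$, upgrades each $\mf{su}(d_A)\otimes P$ to the full $\mf{u}(d_A)\otimes P$ for $P\in\{X,Y,Z\}$, since $\mf{u}(d_A)=\mf{su}(d_A)\oplus\mb R(i1_{d_A})$.

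Assembling the four blocks gives $\mf g\supseteq\mf{su}(2d_A)$; the reverse inclusion is automatic because every element of $S_{AE}$ is traceless and anti-Hermitian and brackets preserve both properties, so $\mf g=\mf{su}(2d_A)$ and $S_{AE}$ is Hörmander. The hard part will be the passage $\mf{su}(d_A)\otimes X\subset\mf g$: it is exactly here that one must exploit the Lie-generation hypothesis on $S_A$ rather than mere linear spanning, and I would isolate as a lemma the general fact that, for a set generating a Lie algebra $\mf k$, the $\ad_{\mf k}$-module generated by that set is all of $\mf k$. A minor bookkeeping point, invisible in the statement but present throughout, is the consistent placement of the factors of $i$ so that all elements remain genuinely anti-Hermitian; this is routine once one fixes the convention of the $(\mathrm{UN})$ axiom.
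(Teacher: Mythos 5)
Your proof is correct and takes essentially the same route as the paper: Pauli decomposition along the environment qubit, populating the mixed blocks by iterated brackets in which only the last entry carries the Pauli factor (your ideal/$\ad$-module argument for $V=\lset c : c\otimes X\in\mf g\rset$ is exactly a clean packaging of the paper's ``replace the last component'' step), and then Pauli commutation relations to rotate between the $X$, $Y$, $Z$ blocks. If anything you are slightly more careful than the paper on one point within the same approach: the trace parts of the $X,Y,Z$ blocks cannot come from the Hörmander property of $S_A$ alone and must be supplied by the $1\otimes S_E$ generators, which you handle explicitly via $\mf u(d_A)=\mf{su}(d_A)\oplus\mb R(i1)$.
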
 

\begin{proof}
    Any operator $H\in L(AE)$ can be written as
    \[ H = H_1\otimes 1+ H_2\otimes X+H_2\otimes Y+H_3\otimes Z  \]
    Thus for $H=H^*$, we may replace $H_j$ by $\frac{H_j+H_j^*}{2}$. Since $S_A$ is H\"ormander, we can find $iH_j$ in span of the iteration commutators 
    \[ iH_j = \sum_{k_1,\dots ,k_m} \alpha(k_1,\dots ,k_m) [s_{k_1},[\cdots ,s_{k_m}]] \] 
    with $s_k\in {\rm span} S$, $s_k^*=-s_k$. Replacing the last component by $s_{k_m}\otimes iY$ we find 
    \[ iH_j \otimes iY  .\] 
    Using the commutator relation of the Pauli matrices, we find $iH$ in the span and $u_t=e^{itH}$ in the unitary group thanks to Chow's theorem \cite{chow}
\end{proof}
\begin{remark}
    Let $\Phi$ be a set of channels on $B$ and $|E'|$ a one qubit environment.  The convex combination 
    \[ \frac{1}{2}\Phi_1+\frac{1}{2}\Phi_2 = tr_E \kla \begin{array}{cc} \Phi_1&0 \\
    0&\Phi_2 \end{array}\mer (id\otimes X_{E'}) {\rm prep}_{0}  \] 
    initialization channel ${\rm prep}_0(\rho)=\rho\otimes \ket 0\bra 0$, and the direct sum channel. In our situation, for two  Hamiltonians $H_1,H_2$ on $AE$ and $S_{E'}=\lset X,Y,Z\rset $, we can prepare
    \[ H = H_1\otimes \frac{1+Z_{E'}}{2} + H_2 \otimes \frac{1-Z_{E'}}{2} \] 
    Then 
    \[ \Ad_{e^{itH}} =   \kla \begin{array}{cc} \Ad_{e^{itH_1}}&0 \\
    0&e^{itH_2}\end{array}\mer  .\] 
    Therefore, we can avoid the convexity assumption $(CC)$ by adding two bits $EE'$ of environment preparation and partial trace out channel.  
\end{remark}  

\begin{lemma}\label[lemma]{eps}
    Let $H$ be self-adjoint (and bounded). Then 
    \[ \norm{e^{tL_H} - \frac{\Ad_{e^{i\sqrt{2t}H}}+\Ad_{e^{-i\sqrt{2t}H}}}{2} }_\diamond = O(t^2)     . \] 
\end{lemma}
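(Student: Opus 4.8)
The plan is to reduce the whole statement to a first-order Taylor expansion in $t$ together with the single algebraic identity $L_H=-\ad_H^2$, and then upgrade the resulting scalar $O(t^2)$ estimate to the diamond norm using finite-dimensionality. First I would record the generator identity. For self-adjoint $H$, writing $L_H$ in the dissipative sign convention under which $e^{tL_H}$ is a channel (the convention used in the explicit formula for $e^{tL_a}$ in the proof of \cref{addZ}), we have $L_H(\rho)=2H\rho H-H^2\rho-\rho H^2$. A one-line expansion of the double commutator gives $\ad_H^2(\rho)=[H,[H,\rho]]=H^2\rho-2H\rho H+\rho H^2$, so that $L_H=-\ad_H^2$ as superoperators.

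Next I would rewrite the unitary average. Since $\Ad_{e^{isH}}=e^{is\,\ad_H}$ as a superoperator, the symmetric combination is
\[ \frac{\Ad_{e^{isH}}+\Ad_{e^{-isH}}}{2}=\frac{e^{is\,\ad_H}+e^{-is\,\ad_H}}{2}=\cos(s\,\ad_H). \]
The crucial observation is that cosine is \emph{even}: substituting $s=\sqrt{2t}$ produces
\[ g(t):=\cos\bigl(\sqrt{2t}\,\ad_H\bigr)=\sum_{k\ge 0}\frac{(-1)^k(2t)^k}{(2k)!}\,\ad_H^{2k}, \]
in which only integer powers of $t$ appear. Hence $g$ is an \emph{entire} superoperator-valued function of $t$ despite the square root, with $g(t)=\id-t\,\ad_H^2+O(t^2)$. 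On the other side, $e^{tL_H}=\id+tL_H+O(t^2)=\id-t\,\ad_H^2+O(t^2)$ by the identity above, so the constant and linear terms of the two families coincide. Consequently the difference $F(t):=e^{tL_H}-g(t)$ satisfies $F(0)=0$ and $F'(0)=L_H+\ad_H^2=0$; in fact a second-order comparison gives the leading term $F(t)=\tfrac{t^2}{3}\,\ad_H^4+O(t^3)$.

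Finally I would make the $O(t^2)$ rigorous in the diamond norm. Both $e^{tL_H}$ and $g(t)$ are entire in $t$ with values in the space of superoperators on $\mc H$; since $\mc H$ is finite-dimensional this space is finite-dimensional, the diamond norm is a genuine Banach-algebra norm on it, and all norms there are equivalent. An entire function vanishing to second order at $t=0$ is therefore $O(t^2)$ in the diamond norm — equivalently, Taylor's theorem with integral remainder yields $\norm{F(t)}_\diamond\le\tfrac{t^2}{2}\sup_{0\le s\le t}\norm{F''(s)}_\diamond$, where the supremum is finite because $F''$ is continuous on $[0,1]$. This gives the claimed bound.

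The main (and essentially only) subtle point is the square root. One must recognize that the symmetric average is an even function of $s$, so that after the substitution $s=\sqrt{2t}$ only integer powers of $t$ survive; this is exactly what makes $g$ differentiable — indeed analytic — at $t=0$, and it is the mechanism by which the two opposite unitary rotations combine at first order to reproduce the dissipative generator $L_H=-\ad_H^2$. Everything else (the generator identity and the passage to $\norm{\cdot}_\diamond$ via finite-dimensionality) is routine.
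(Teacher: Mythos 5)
Your proof is correct and is essentially the paper's own argument in coordinate-free dress: the paper writes $H=\sum_j\lambda_j e_j$ and compares, block by block, $\cos\bigl(\sqrt{2t}(\lambda_j-\lambda_l)\bigr)$ with $e^{-t(\lambda_j-\lambda_l)^2}$, which is precisely your comparison of $\cos\bigl(\sqrt{2t}\,\ad_H\bigr)$ with $e^{tL_H}=e^{-t\ad_H^2}$ read through the spectral decomposition of $\ad_H$, whose eigenvalues are exactly the differences $\lambda_j-\lambda_l$. Your write-up additionally makes explicit two points the paper leaves implicit (and even obscures with a sign typo, $e^{t(\lambda_j-\lambda_l)^2}$ in place of $e^{-t(\lambda_j-\lambda_l)^2}$): the evenness of cosine that removes the square-root singularity at $t=0$, and the passage from the scalar $O(t^2)$ estimates to the diamond norm via finite-dimensionality.
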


\begin{proof}
    Let us write $H=\sum_j \lambda_j e_j$ with eigenvalues $\lambda_j$ and eigen-projections $e_j$. Then
    \begin{align*}
        \Ad_{e^{itH}}(x) &=  \sum_{j,l} e^{it(\lambda_j-\lambda_l)}e_jxe_l   , \\
        \Ad_{e^{-itH}}(x) &=  \sum_{j,l} e^{it(\lambda_l-\lambda_j)}e_jxe_l   , \\
        e^{tL_H}(x) &= \sum_{j,l} e^{t(\lambda_j-\lambda_l)^2} e_jxe_l  .
    \end{align*}
    The assertion follows Since $\cos(\sqrt{2t}\lambda)=1-t\lambda^2+ O(t^2)$.      
\end{proof}

Let us recall a result from \cite{lip_simulation} 
\begin{lemma}
    Let $a\in L(A)$ be an operator and
    \[ H = \kla \begin{array}{cc} 0& a\\ a^* & 0 \end{array}\mer . \] 
    Then 
    \[  tr_EL_H{\rm prep}_0 = L_a  .\] 
    If $iH$ is in the Lie algebra generated by $S_{AE}$, then 
    \[  e^{tL_a} \in \chan_{A}(S) .\]   
\end{lemma}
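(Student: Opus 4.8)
The statement splits into an algebraic identity of generators and a reachability claim, and I would prove them in that order. For the generator identity, write the environment factor explicitly as $H = a\otimes\ket0\bra1 + a^*\otimes\ket1\bra0$, so that $H$ is self-adjoint and $H^*H = H^2 = aa^*\otimes\ket0\bra0 + a^*a\otimes\ket1\bra1$. I would substitute $\rho_{AE} = {\rm prep}_0(\rho) = \rho\otimes\ket0\bra0$ into the three terms of $L_H$ and apply $\tr_E$. Every term in which the environment factor ends as an off-diagonal operator $\ket0\bra1$ or $\ket1\bra0$ is annihilated by the partial trace, and the surviving diagonal contributions collapse to the single-register dissipator, giving $\tr_E L_H\,{\rm prep}_0 = L_a$. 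This is a finite bookkeeping computation; the only point needing care is the register convention that determines which of $a,a^*$ fires when $E$ is initialised in $\ket0\bra0$, which fixes the side on which $a$ appears.

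\emph{From $L_H$ to $\chan_{AE}(S)$.} Since $H=H^*$ with $\tr(H)=0$, the operator $iH$ is anti-Hermitian and traceless, and by hypothesis it lies in the Lie algebra generated by $S_{AE}$. Chow's theorem, together with (UN) and closure under composition (CO), exactly as in \cref{chan_membership}, then places $\Ad_{e^{isH}}\in\chan_{AE}(S)$ for every $s\in\mb R$. Convexity (CC) gives $M_s := \tfrac12\Ad_{e^{i\sqrt{2s}H}} + \tfrac12\Ad_{e^{-i\sqrt{2s}H}}\in\chan_{AE}(S)$, and \cref{eps} yields $\norm{M_s - e^{sL_H}}_\diamond = O(s^2)$. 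Writing $e^{tL_H} = \lim_n(e^{(t/n)L_H})^n$ and telescoping against $(M_{t/n})^n$, all factors being diamond-norm contractions, the error is bounded by $n\cdot O(t^2/n^2) = O(t^2/n)\to 0$, so $e^{tL_H} = \lim_n (M_{t/n})^n \in \chan_{AE}(S)$ by (CO) and (CL).

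\emph{From $L_H$ on $AE$ to $L_a$ on $A$.} The naive guess $\tr_E\,e^{tL_H}\,{\rm prep}_0 = e^{tL_a}$ is false: during the evolution the environment becomes correlated with $A$, and $\tr_E$ does not commute with the exponential. I would instead use a repeated-interaction (collision) limit that resets the environment after each short step. Set $\Phi_s := \tr_E\,e^{sL_H}\,{\rm prep}_0 \in \chan_A(S)$; then $\Phi_0 = \tr_E\,{\rm prep}_0 = \id_A$ and, by the generator identity, $\tfrac{d}{ds}\big|_{0}\Phi_s = L_a$, so $\Phi_s = \id_A + sL_a + O(s^2)$. The decisive structural fact is that $\chan_A(S)$ is closed under composition: since ${\rm prep}_0\circ\tr_E = E_{prep}$, for $\Psi_1,\Psi_2\in\chan_{AE}(S)$ we obtain $(\tr_E\Psi_1{\rm prep}_0)\circ(\tr_E\Psi_2{\rm prep}_0) = \tr_E(\Psi_1 E_{prep}\Psi_2)\,{\rm prep}_0$, which lies in $\chan_A(S)$ because the reset $E_{prep}$ is admissible by (PR) and $\chan_{AE}(S)$ is closed under (CO). As $\chan_A(S)$ is the continuous image of the compact set $\chan_{AE}(S)$, it is closed, so the Trotter limit $e^{tL_a} = \lim_n \Phi_{t/n}^n$ lands in $\chan_A(S)$.

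I expect the main obstacle to be this final passage rather than the algebra: the failure of the naive identity forces the collision-model construction, whose validity rests entirely on $\chan_A(S)$ being composition-closed. That in turn is exactly where the availability of the reset channel $E_{prep}$ through (PR) is indispensable; without it one cannot convert the infinitesimal generator identity $\tr_E L_H\,{\rm prep}_0 = L_a$ into the semigroup membership $e^{tL_a}\in\chan_A(S)$.
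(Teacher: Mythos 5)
Your proof is correct and follows the paper's strategy in its essentials: the block computation of $L_H({\rm prep}_0(\rho))$ for the generator identity, then \cref{eps}, convexification of the two unitary conjugations, and a diamond-norm Trotter (collision-model) limit. The differences are organizational but worth recording. The paper Trotterizes once, directly at the level of $A$: with $M_t=\tfrac12(\Ad_{e^{i\sqrt{2t}H}}+\Ad_{e^{-i\sqrt{2t}H}})$ it sets $\Psi_t=\tr_E M_t\,{\rm prep}_0$, gets $\norm{\Psi_t-e^{tL_a}}_\diamond\le C't^2$, and concludes $e^{tL_a}=\lim_n(\Psi_{t/n})^n$, never needing $e^{tL_H}\in\chan_{AE}(S)$ as an intermediate. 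Your two-stage route (first $e^{tL_H}\in\chan_{AE}(S)$ on the joint system, then the collision limit of $\Phi_s=\tr_E\, e^{sL_H}\,{\rm prep}_0$) costs an extra limit but is equally valid. More usefully, you prove explicitly the two facts the paper's last line uses silently: composition-closure of $\chan_A(S)$, via the identity $(\tr_E\Psi_1\,{\rm prep}_0)\circ(\tr_E\Psi_2\,{\rm prep}_0)=\tr_E(\Psi_1E_{prep}\Psi_2)\,{\rm prep}_0$ resting on ${\rm prep}_0\circ\tr_E=E_{prep}$ and (PR), and closedness of $\chan_A(S)$ as the continuous image of the compact set $\chan_{AE}(S)$; without these, the limit $\lim_n(\Psi_{t/n})^n$ has not been shown to lie in $\chan_A(S)$, so your additions supply exactly the missing justification. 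Finally, you are right to flag the $a\leftrightarrow a^*$ bookkeeping: with the stated block convention and $E$ initialized in $\ket 0\bra 0$, the reduced generator actually comes out as $L_{a^*}$ rather than $L_a$ -- a harmless relabeling since $a\in L(A)$ is arbitrary, but the paper's displayed block matrix glosses over it.
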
 

\begin{proof}
    We just observe that 
    \[ L_H(\rho\otimes \ket 0\bra 0) = \kla \begin{array}{cc} 
    -a^*a\rho-\rho a^*a & 0 \\
    0 & 2 a\rho a^*\end{array} \mer  .\]
    Taking the trace in $E$ gives the first assertion. Thus
    \[ \|tr_Ee^{tL_H}{\rm prep}_0-e^{tL_a}\|_{\diamond} \kl Ct^2  .\] 
    We deduce from \cref{eps} that 
    \[ \norm{tr_E \lpara\frac{\Ad_{e^{i\sqrt{2t}H}}+\Ad_{e^{-i\sqrt{2t}H}}}{2} -e^{tL_a}\rpara
    {\rm prep}_0}_{\diamond} \kl C' t^2  . \]
    Let us denote the $\Psi_t$ the first channel (obtained by convexification or adding $E'$). Now, we can use Trotterization
    \[ e^{tL} = \lim_n (e^{t/nL})^n = \lim_n (\Psi_{t/n})^n  .\] 
    For convergence see \cite{banach_algebras} applied to the space of channels as a Banach algebra with the diamond norm.    
\end{proof}

\begin{proof}[Proof of \cref{htrans}]
    Since $S_{AE}$ gives rise to a H\"ormander system on the combined space, we deduce that  $e^{tL_a}\in \chan_A(S)$ for all $a$. Using Trotterization and the fact that $S_A$ induces a  H\"ormander we can generate all channels $e^{tL}$ for all Lindbladian  $L$.
\end{proof}
 
We leave two questions open for future investigation.
First, we want to consider the case of a resource set that does not form a Hörmander system, and ask whether transitivity and controllability can still be achieved. If so, what is the minimal set of dissipative resources required to ensure these properties?
Second, we ask whether it is possible to achieve transitivity using a small resource set acting on the joint system–environment. In this setting, resources are allowed to act on the combined system, and the question is whether controllability of the reduced system can be obtained after taking the partial trace over the environment.

\section{Lift me up}\label{lift}
Since not every path on $\mc D(\mc H)$ is physical -- for example it can fail complete positivity -- a fundamental question in open quantum control is whether every desired infinitesimal change of a quantum state $\dot{\rho}_t$ can be generated by a physically valid (completely positive and trace-preserving) dynamical law, and whether this correspondence can be made smoothly in time. In other words, can one design a smooth feedback law that maps the instantaneous state $\rho_t$ and its desired rate of change $\dot{\rho}_t$ to an admissible open-system generator $L_t$? The regularity of this correspondence -- its continuity, differentiability, or analyticity in time -- reflects the degree of controllability and stability of open quantum dynamics. A continuous dependence ensures physically realizable and stable control laws, while smoothness enables differential feedback, optimization, and geometric analysis. Analyticity, in turn, supports perturbative and adiabatic constructions where the generator varies without singularities.

Mathematically, this question can be formulated as a \emph{lifting problem} for the time-dependent evaluation map
\[
\ev_t: \mc D(\mc{H}) \times \mc L \to T^+\mc D(\mc{H}), 
\qquad 
(\rho_t, L_t) \mapsto (\rho_t, L_t(\rho_t)),
\]
which associates to each instantaneous generator its induced velocity on the manifold of density matrices. The existence of a continuous or smooth retract of this map provides precisely such a rule: assigning to each admissible trajectory $\rho_t$ a corresponding family of generators $L_t$ that realize it, thus offering a geometric criterion for the physical realizability of infinitesimal state motions.

That is, we seek a family of maps
\[
s_t : T^+\mc D(\mc{H}) \to \mc D(\mc{H}) \times \mc L
\]
such that $\ev_t \circ s_t = \mathrm{id}_{T^+\mc D(\mc{H})}$ for all $t\in [0,\infty)$, and where the assignment 
$t \mapsto s_t((\rho_t, \dot{\rho}_t)) = (\rho_t, \mc{L}_t)$ 
is continuous, smooth, or analytic in time.

\subsection{Lindbladian Tangent Cone}
The first question to ask is why we believe Lindbladian is the correct object to lift? Physically, it is natural to conjecture that all physical evolutions should be infinitesimally satisfied by the solution of Markovian master equation. The following theorem shows this physical intuition is valid mathematically.
\begin{definition}
    Given a state $\rho\in\mc D(\mc{H})$, its tangent vector (allowable velocities) is given by the derivative of any $C^2$ paths. 
    The collection of all tangent vectors is the tangent cone, given by 
    $$T_\rho^+\mc D(\mc H)=\lset \left.\dfrac{d}{dt}\right|_{t=0}\rho_t: \rho_0=\rho, \rho_t\in C^2([0,\infty), \mc D(\mc H))\rset$$
    The tangent bundle is defined as 
    $$T^+\mc D(\mc H)=\coprod_{\rho\in\mc D(\mc H)}T^+_\rho\mc D(\mc H).$$
\end{definition}
Even $\mc D(\mc H)$ is convex and compact, its tangent cone is not the convex tangent cone. There are more allowable directions than the convex ones on boundary points. Here is an example:
\begin{proposition}  \label{nonconv} For non-invertible state $\rho \in \partial \mc D(\mc{H})$
    $$T^+_\rho \mc D(\mc{H})\supsetneq \lset x\in\mb B(H): x=x^*, \tr x=0, \rho+\varepsilon x\geq 0\rset.$$
\end{proposition}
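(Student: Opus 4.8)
The plan is to prove the strict inclusion in two stages: first the containment $\supseteq$, then the production of a single witness direction separating the two sets. For the containment, I would note that if $x=x^*$ is traceless and $\rho+\varepsilon x\geq 0$ for some $\varepsilon>0$, then by convexity $\rho+tx\geq 0$ for all $t\in[0,\varepsilon]$, so the affine segment $\rho_t=\rho+tx$ is a curve in $\mc D(\mc H)$ with $\dot\rho_0=x$, whence $x\in T^+_\rho\mc D(\mc H)$. Throughout, every curve defined only on a short interval $[0,\delta)$ is tacitly extended to a global $C^2$ curve valued in $\mc D(\mc H)$ without altering its germ at $0$: pick $t_0\in(0,\delta)$ and a smooth cutoff $\beta$ with $\beta\equiv 0$ near $0$ and $\beta\equiv 1$ on $[t_0,\infty)$, and replace the curve by $(1-\beta(t))\,\rho_t+\beta(t)\,\tfrac{1}{d}I$; this stays in $\mc D(\mc H)$ by convexity and changes neither $\rho_0$ nor $\dot\rho_0$. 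Thus only strictness needs an argument.

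For strictness I would exhibit an off-diagonal direction that is tangent but not first-order admissible. Decompose $\mc H=\mc K^\perp\oplus\mc K$ with $\mc K=\ker\rho$; here $\mc K\neq 0$ since $\rho$ is non-invertible and $\mc K^\perp\neq 0$ since $\tr\rho=1$. Writing $\rho=\kla\begin{array}{cc}\rho_{++}&0\\0&0\end{array}\mer$ with $\rho_{++}>0$, I would choose $x=\kla\begin{array}{cc}0&b\\b^*&0\end{array}\mer$ with $b\colon\mc K\to\mc K^\perp$ nonzero; this $x$ is Hermitian and traceless. It cannot lie in the right-hand set: the lower-right block of $\rho+\varepsilon x$ vanishes, and a positive semidefinite operator with a vanishing diagonal block must have the corresponding off-diagonal block vanish as well, which would force $b=0$. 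Hence $x$ is excluded from the convex cone.

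The crux is to realize this $x$ as an honest tangent vector by bending the forbidden segment into a parabola. I would set $\rho_t=\rho+t\,x+t^2\,y$ with $y=\kla\begin{array}{cc}y_{++}&0\\0&y_{--}\end{array}\mer$, where $y_{--}>0$ acts on $\mc K$ and $y_{++}$ is chosen so that $\tr y=0$, keeping $\tr\rho_t=1$. This curve is $C^2$ with $\dot\rho_0=x$, so everything reduces to positivity for small $t>0$. Since the upper-left block $\rho_{++}+t^2 y_{++}$ stays positive definite near $t=0$, the Schur-complement criterion reduces $\rho_t\geq 0$ to the kernel-block inequality $t^2\lpara y_{--}-b^*(\rho_{++}+t^2 y_{++})^{-1}b\rpara\geq 0$. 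This is where the real work lies, and it is the main obstacle: one checks that $b^*(\rho_{++}+t^2 y_{++})^{-1}b$ converges as $t\to 0$ to the fixed positive operator $b^*\rho_{++}^{-1}b$ on $\mc K$, so the choice $y_{--}=b^*\rho_{++}^{-1}b+I_{\mc K}$ makes the bracket positive for all sufficiently small $t$. Consequently $\rho_t\in\mc D(\mc H)$ near the origin, $x\in T^+_\rho\mc D(\mc H)$, and since $x$ is not in the convex cone the inclusion is strict. The trace bookkeeping and the global extension of the curve are routine; only the uniform Schur-complement positivity estimate requires care.
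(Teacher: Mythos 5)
Your proof is correct, and it is strictly more general than the paper's. The paper proves strictness by a single $2\times 2$ example: $\rho=\ket 0\bra 0$, $x$ the Pauli $X$ matrix, and the parabolic correction $\rho_\varepsilon=\rho+\varepsilon x+\varepsilon^2\diag(-2,2)$, whose positivity is checked by a determinant computation; taken literally, that establishes the claim only for that particular state, and extending it to an arbitrary non-invertible $\rho$ would need an extra eigenvector/direct-sum embedding remark the paper never makes. You instead handle every non-invertible state at once: split $\mc H=\mc K^\perp\oplus\mc K$ along $\mc K=\ker\rho$, take the witness $x$ purely off-diagonal, rule it out of the convex cone via the fact that a positive semidefinite matrix with a vanishing diagonal block has vanishing off-diagonal blocks (the coordinate-free replacement for the paper's negative-determinant check), and verify positivity of $\rho+tx+t^2y$ by the Schur-complement criterion with $y_{--}=b^*\rho_{++}^{-1}b+I_{\mc K}$. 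The core idea --- a second-order bend restores positivity that fails at first order --- is identical in both arguments, and your Schur estimate is exactly the paper's $2\times 2$ determinant computation done invariantly; notably it relies on the same lemma (\cref{diag_trick}) that the paper itself deploys later to prove \cref{char}, so your proof fits naturally into the paper's toolkit and in fact anticipates it. Your cutoff extension of locally defined curves to $C^2([0,\infty),\mc D(\mc H))$ also settles a detail required by the definition of the tangent cone that the paper passes over silently. What the paper's version buys in exchange is brevity and a completely explicit example; what yours buys is the proposition in the generality in which it is actually stated.
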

\begin{proof}
    The inclusion is immediate since if $\rho + \varepsilon x \geq 0$ for some $\varepsilon > 0$, then the curve $\rho + \varepsilon x$ lies in the state space $\mc D(\mc{H})$ for small $\varepsilon$, so $x \in T^+_\rho \mc D(\mc{H})$.
    
    To show that equality does not hold, we construct an example where a Hermitian traceless operator $x$ lies in the tangent cone, but $\rho + \varepsilon x \not\geq 0$ for any $\varepsilon > 0$.
    Let
    $$\rho = \begin{pmatrix} 1 & 0 \\ 0 & 0 \end{pmatrix}, \quad
    x = \begin{pmatrix} 0 & 1 \\ 1 & 0 \end{pmatrix}.$$
    Then
    $$\rho + \varepsilon x = \begin{pmatrix} 1 & \varepsilon \\ \varepsilon & 0 \end{pmatrix}$$
    has determinant $\det(\rho + \varepsilon x) = -\varepsilon^2 < 0$, so it is not positive semidefinite for any $\varepsilon > 0$. Hence, $x$ is not in the set on the right-hand side.
    
    However, define
    $$b = \begin{pmatrix} -2 & 0 \\ 0 & 2 \end{pmatrix},$$
    and consider the second-order perturbation:
    $$\rho_\varepsilon = \rho + \varepsilon x + \varepsilon^2 b = 
    \begin{pmatrix}
    1 - 2\varepsilon^2 & \varepsilon \\
    \varepsilon & 2\varepsilon^2
    \end{pmatrix}.$$
    The determinant of this matrix is
    $$\det(\rho_\varepsilon) = (1 - 2\varepsilon^2)(2\varepsilon^2) - \varepsilon^2 = 2\varepsilon^2 - 4\varepsilon^4 - \varepsilon^2 = \varepsilon^2(1 - 4\varepsilon^2).$$
    For small $\varepsilon > 0$, this is positive, and the matrix is clearly Hermitian with trace $1$. Hence, $\rho_\varepsilon \in \mc D(\mc{H})$ for small $\varepsilon$, and the curve $\rho_\varepsilon$ lies in the state space. Its derivative at $\varepsilon = 0$ is $x$, so $x \in T^+_\rho \mc D(\mc{H})$.
    
    Therefore, $x \in T^+_\rho \mc D(\mc{H})$, but $x$ is not in the set $\lset  x : \rho + \varepsilon x \geq 0 \rset $, showing that the inclusion is strict.
\end{proof}

Even though the state space is convex, we see that the tangent cone at a corner point is not given by linear paths, we must consider all arbitrary $C^2$ paths if we want to understand the tangent space. this also shows that analytic considerations, whether we consider $C^2$ paths or $C^n$ paths, may possibly affect the tangent cone at a corner point. A possible direction for future research is weakening the analytic condition to $C^1$. It turns out that the $C^2$ condition is enough for the next characterization, which we will then connect to the time-irreversibility of quantum Markovian semigroup. 

\begin{theorem}[\cite{CM20, hiriart2004}]\label{char}
    Let $\rho\in \mc D(\mc{H})$ and $f:\mc H\rightarrow\supp\rho$ be the projection onto the support of $\rho$, then 
    $$T_\rho^+\mc D(\mc{H}) = \lset x\in\mb B(H):x=x^*, \tr x=0, (1-f)x(1-f)\geq 0\rset\\$$
\end{theorem}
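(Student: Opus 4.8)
The plan is to prove the two inclusions separately, working throughout in the orthogonal block decomposition $\mc H = \supp\rho \oplus \ker\rho$ induced by $f$ and $p := 1-f$. In this splitting I write
\[
\rho = \begin{pmatrix} \rho_0 & 0 \\ 0 & 0\end{pmatrix}, \qquad \rho_0 > 0,
\]
and for a Hermitian $x=x^*$ I record its blocks $\alpha = fxf$, $\beta = fxp$, $\gamma = pxp$, so that $(1-f)x(1-f)$ is exactly the lower-right block $\gamma$ and the stated condition $(1-f)x(1-f)\gl 0$ reads $\gamma \gl 0$.

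For the inclusion $\subseteq$ (necessity), the conditions $x=x^*$ and $\tr x = 0$ drop out of differentiating $\rho_t=\rho_t^*$ and $\tr\rho_t = 1$ at $t=0$. For the positivity constraint I would fix any unit vector $v\in\ker\rho$ and set $g(t) = \lan v,\rho_t v\ran$. Since $\rho_t\gl 0$ and $\rho_0 v = 0$, the function $g$ is $C^2$, nonnegative, and vanishes at $t=0$; hence $t=0$ minimizes $g$ over $[0,\infty)$, so the one-sided derivative obeys $g'(0) = \lan v, xv\ran \gl 0$. As $pv=v$, this is $\lan v,\gamma v\ran\gl 0$ for every $v\in\ker\rho$, i.e. $\gamma\gl 0$.

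For the inclusion $\supseteq$ (sufficiency) -- the substantive direction -- I would, given $x$ in the right-hand set, build an explicit polynomial curve realizing it. The naive linear curve $\rho+tx$ can violate positivity exactly as in \cref{nonconv}, so a second-order correction is essential. I propose
\[
\rho_t = \begin{pmatrix} \rho_0 + t\alpha + t^2 A_2 & t\beta \\ t\beta^* & t\gamma + t^2 C_2\end{pmatrix},
\]
whose $t$-linear coefficient is $x$ by construction. Positivity for small $t>0$ I would check via the Schur complement against the upper-left block $\rho_0+O(t)>0$, namely
\[
S(t) = t\gamma + t^2\big(C_2 - \beta^*\rho_0^{-1}\beta\big) + O(t^3).
\]
Splitting $\ker\rho = \ker\gamma \oplus (\ker\gamma)^\perp$ and letting $q$ be the projection onto $\ker\gamma$, the block where $\gamma$ is strictly positive is dominated by its $t\gamma$ term, while on $\ker\gamma$ the leading contribution is $t^2\,q(C_2-\beta^*\rho_0^{-1}\beta)q$. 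Choosing $C_2 = q\beta^*\rho_0^{-1}\beta q + q$ makes this block equal to $t^2 q>0$, and a further Schur-complement estimate shows the cross terms $q(C_2-\beta^*\rho_0^{-1}\beta)(1-q)$ enter only at higher order in $t$, so $S(t)\gl 0$, hence $\rho_t\gl 0$, for all small $t$. Trace preservation I would secure by taking $A_2 = -\tfrac{\tr C_2}{\dim(\supp\rho)}\,f$, which enforces $\tr\rho_t=1$ while leaving the upper-left block positive.

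The main obstacle is precisely this sufficiency construction in the degenerate directions: when $\gamma=(1-f)x(1-f)$ has nontrivial kernel, first order cannot guarantee positivity, and one must use the free second derivative $C_2$ to overpower the indefinite term $-\beta^*\rho_0^{-1}\beta$ induced by the off-diagonal block -- the order-of-vanishing bookkeeping in $S(t)$ is the quantitative heart of the argument. A remaining routine point is that the definition demands a curve on all of $[0,\infty)$: since the polynomial above is positive only for small $t$, I would extend it past some $t=\varepsilon$ by a $C^2$ reparametrization that freezes the argument, using convexity of $\mc D(\mc H)$ to join $\rho_\varepsilon$ to an interior state without leaving the state space; as the tangent vector depends only on the germ at $0$, this does not alter $\rho'(0)=x$.
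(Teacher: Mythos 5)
Your proof is correct, and in the substantive (sufficiency) direction it is essentially the paper's own argument in different notation: your splitting of $\ker\rho$ into $(\ker\gamma)^\perp \oplus \ker\gamma$ is precisely the paper's passage from a $2\times 2$ to a $3\times 3$ block decomposition; your second-order correction $C_2 = q\beta^*\rho_0^{-1}\beta q + q$, supported on $\ker\gamma$, plays exactly the role of the paper's matrix $B$ (which the paper requires to dominate $x_{31}y_{11}x_{13}$, i.e.\ the compression of $\beta^*\rho_0^{-1}\beta$ to that corner block); both proofs close with iterated applications of \cref{diag_trick}, and your trace repair $A_2 = -\tfrac{\tr C_2}{\dim\supp\rho}\,f$ matches the paper's $-t^2\tfrac{\tr B}{n}$ term in the invertible block. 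Where you genuinely depart from the paper is the necessity direction: the paper again expands the Schur complement of the perturbed block matrix to force $x_{22}\geq 0$, whereas your argument --- $g(t)=\lan v,\rho_t v\ran \geq 0$ with $g(0)=0$ for $v\in\ker\rho$, hence $g'(0^+)=\lan v,xv\ran\geq 0$ --- is more elementary, needs no block machinery, and is a clean simplification. One wording caveat on sufficiency: the cross terms $qS(t)(1-q)$ are $O(t^2)$, the \emph{same} order as the $q$-block $t^2q$, so they do not ``enter only at higher order''; what actually closes the argument is that their contribution to the second Schur complement is $O(t^3)$ against the $t^2q$ block (equivalently $O(t^2)$ against the $t\gamma\vert_{(\ker\gamma)^\perp}$ block), hence dominated for small $t$ --- you should state that estimate explicitly rather than appeal to the order of the terms themselves.
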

To prove that \cref{char} works for all states in $\mc D(\mc{H})$ we first need the following diagonalization lemma.
\begin{lemma}[\cite{paulsen}]\label[lemma]{diag_trick}
    Let $A$ be an invertible matrix. We write as a block matrix:
    $$\begin{pmatrix}
        A & B\\
        B^* & C 
    \end{pmatrix}\geq 0\Longleftrightarrow  A \geq 0, -B^*A\inv B+C\geq 0$$
\end{lemma}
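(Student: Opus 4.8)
The plan is to establish the equivalence through the block congruence (Schur complement) factorization, which reduces the positivity of the full block matrix to that of a block-diagonal matrix. Note first that the relation ``$\,\cdot\geq 0\,$'' is understood for Hermitian operators, so the forward implication already forces $A=A^*$ and $C=C^*$, while conversely the right-hand conditions $A\geq 0$ and $C-B^*A\inv B\geq 0$ build in the same Hermiticity. Throughout, the hypothesis that $A$ is invertible is exactly what allows us to form $A\inv$ and the triangular factor below.

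Concretely, I would introduce the invertible block-triangular matrix $L=\begin{pmatrix} I & 0 \\ B^*A\inv & I\end{pmatrix}$ and verify by direct multiplication that
\[ \begin{pmatrix} A & B \\ B^* & C\end{pmatrix} = L\begin{pmatrix} A & 0 \\ 0 & C-B^*A\inv B\end{pmatrix}L^*, \]
using $A=A^*$ (so that $(A\inv)^*=A\inv$) to identify $L^*=\begin{pmatrix} I & A\inv B \\ 0 & I\end{pmatrix}$. The computation is a two-step product: multiplying the diagonal factor by $L^*$ and then by $L$ restores $B$ in the off-diagonal corners and returns $C-B^*A\inv B+B^*A\inv B=C$ in the lower-right block, which is the only nontrivial cancellation.

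Since $L$ is invertible, the map $v\mapsto L^*v$ is a linear bijection of the ambient direct-sum space, and for every vector $v$ one has $v^*Mv=(L^*v)^*D(L^*v)$, where $M$ and $D$ denote the full and block-diagonal matrices respectively. Hence $M\geq 0$ if and only if $D\geq 0$, and a block-diagonal Hermitian matrix is positive semidefinite precisely when each diagonal block is, i.e. when $A\geq 0$ and $C-B^*A\inv B\geq 0$. This is exactly the claimed equivalence. (Equivalently, one can avoid the factorization and argue directly: restricting the quadratic form to the first block yields $A\geq 0$, and then completing the square by choosing the first component to be $-A\inv Bw$ for a given second component $w$ produces the Schur-complement condition.)

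I expect no serious obstacle here: the content is the algebraic identity above together with the elementary fact that congruence by an invertible matrix preserves positive semidefiniteness. The only points requiring genuine care are the implicit Hermiticity convention and the verification that the two outer factors are honest adjoints of one another, so that we are performing a congruence rather than a mere similarity transformation; both become immediate once $A$ is taken to be invertible and self-adjoint.
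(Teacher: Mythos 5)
Your proof is correct. Note that the paper itself offers no proof of this lemma; it is quoted directly from the cited reference \cite{paulsen}, so there is nothing in the paper to compare against. Your argument — the congruence factorization
\[
\begin{pmatrix} A & B \\ B^* & C\end{pmatrix}
= \begin{pmatrix} I & 0 \\ B^*A\inv & I\end{pmatrix}
\begin{pmatrix} A & 0 \\ 0 & C-B^*A\inv B\end{pmatrix}
\begin{pmatrix} I & A\inv B \\ 0 & I\end{pmatrix},
\]
followed by the observation that congruence by an invertible matrix preserves positive semidefiniteness — is the standard textbook (Schur complement) proof, and all the details you flag (Hermiticity of $A$ forcing $(A\inv)^*=A\inv$, the cancellation in the lower-right block, positivity of a block-diagonal matrix blockwise) check out.
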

    
Using this lemma, we can now show that our characterization holds for all states. 
\begin{proof}[Proof of \cref{char}]
    Let $\rho_t \in C^2([0,\infty),\mc D(\mc{H}))$, for small $t$, we can write it as 
    $$\rho_t=\rho+tx+ t^2b(t)$$ for some continuous $b(t)$ that is time-dependent. Its derivative at $0$ is $\dot \rho_0 =x$.

    We know $x$ is Hermitian and traceless since $\rho_t\in \mc D(\mc{H})$ for all $t$. We write as a block matrix
    $$\rho=\begin{pmatrix}
        \rho_{11} & 0\\
        0 & 0
    \end{pmatrix}\quad x=\begin{pmatrix}
        x_{11} & x_{12}\\
        x_{21} & x_{22}
    \end{pmatrix}.$$
    We need to show that $x_{22}\geq 0$. We know that 
    $$\rho_t=\begin{pmatrix}
        \rho_{11}+tx_{11}+t^2b_{11}(t) & tx_{12}+t^2b_{12}(t)\\
        tx_{21}+t^2b_{21}(t) & tx_{22}+t^2b_{22}(t)
    \end{pmatrix}.$$
    By \cref{diag_trick}, $\rho_t\geq 0$ if and only if
    \begin{itemize}
        \item $\rho_{11}+tx_{11}+t^2b_{11}(t)\geq 0$ and 
        \item $-(tx_{21}+t^2b_{21}(t))(\rho_{11}+tx_{11}+t^2b_{11}(t))\inv(tx_{12}+t^2b_{12}(t))+(tx_{22}+t^2b_{22}(t))\geq 0$
    \end{itemize}
    Since $\rho_{11}$ is invertible, $\rho_{11}+tx_{11}+t^2b_{11}(t)$ is also invertible for small enough $t$ and the inverse is analytic by von Neumann series. Thus, by expanding out the second term, we get that $tx_{22}+t^2b_{22}(t)\geq 0$ for all $t$ where the expansion of the path holds. This is satisfied only if $x_{22}\geq 0$ with arbitrary higher order terms.

    To show the reverse inclusion, we need to show that if $x$ satisfies the condition, we can construct a path whose first order term is $x$.
    If $x$ is invertible, then there always exists a small $\varepsilon$ such that $\rho+tx\geq 0$ for all $0<t<\varepsilon$. When $x$ is non-invertible,
    write as a block  matrix $$\rho=\begin{pmatrix}
        \rho_{11} & 0 & 0\\
        0 & 0 & 0\\
        0 & 0 & 0
    \end{pmatrix}\quad x=\begin{pmatrix}
        x_{11} & x_{12} & x_{13}\\
        x_{21} & x_{22} & x_{23}\\
        x_{32} & x_{32} & 0
    \end{pmatrix}.$$
    First, to ensure $(1-f)x(1-f)\geq 0$, it is necessary that $x_{23}=x_{32}=0$.

    In this case, we claim there exists an $x_2$ such that $\rho+tx+t^2x_2\in \mc D(\mc{H})$ for all $t<\varepsilon$. In particular, 
    $$x_2=\begin{pmatrix}
        0 & 0 & 0\\
        0 & 0 & 0\\
        0 & 0 & B
    \end{pmatrix}$$ suffices. That means, we need to show that there exists $B$ so that
    $$\rho_t=\begin{pmatrix}
        \rho_{11}+tx_{11} & tx_{12} & tx_{13}\\
        tx_{21} & tx_{22} & 0\\
        tx_{31} & 0 & t^2B
    \end{pmatrix}\geq 0$$
    for small enough time $t<\varepsilon$.

    To show this, we will apply \cref{diag_trick} several times. $\rho_t \geq 0$ if and only if 
    $$\begin{pmatrix}
           \rho_{11}+tx_{11}-t^2\frac{\tr B}{n} & tx_{12}\\
           tx_{21} & tx_{22} 
        \end{pmatrix}\geq 0\,\text{and}\,
        -\begin{pmatrix}
            tx_{31} & 0
        \end{pmatrix}\begin{pmatrix}
            y_{11} & y_{12}\\
            y_{21} & y_{22}
        \end{pmatrix}\begin{pmatrix}
            tx_{13}\\0
        \end{pmatrix}+\begin{pmatrix}
            t^2B
        \end{pmatrix}\geq 0$$
    where $$\begin{pmatrix}
        y_{11} & y_{12}\\
        y_{21} & y_{22}
    \end{pmatrix}=\begin{pmatrix}
        \rho_{11}+tx_{11}-t^2\frac{\tr B}{n} & tx_{12}\\
        tx_{21} & tx_{22} 
    \end{pmatrix}\inv$$ which is time dependent. The first one holds by \cref{diag_trick}, since 
    \begin{itemize}
        \item $\rho_{11}+tx_{11}-t^2\frac{\tr B}{n}\geq 0$
        \item $-t^2x_{21}\lpara \rho_{11}+tx_{11}-t^2\frac{\tr B}{n}\rpara\inv x_{12}+tx_{22}\geq 0$
    \end{itemize}
    are true for small time $t$. To check the second one, 
    $$-\begin{pmatrix}
        tx_{21} & 0
    \end{pmatrix}\begin{pmatrix}
        y_{11} & y_{12}\\
        y_{21} & y_{22}
    \end{pmatrix}\begin{pmatrix}
        tx_{13}\\0
    \end{pmatrix}+\begin{pmatrix}
        t^2B
    \end{pmatrix}=-\begin{pmatrix}
        t^2x_{31}y_{11}x_{13}
    \end{pmatrix}+\begin{pmatrix}
        t^2B
    \end{pmatrix}.$$
    This is positive if and only if $$B\geq x_{31}y_{11}x_{13}.$$
    But notice that $y_{11}$ is time-dependent, and we want a time-independent choice of $B$. That means, we need to check that $y_{11}$ does not perturb too much as time goes.
    It suffices to find a time-independent upper bound for $y_{11}$ to find a time-independent $B$ making the second matrix positive.
    Notice that 
    \begin{equation*}
    \begin{aligned}
        &\quad \begin{pmatrix}
            \rho_{11}+tx_{11}-t^2\frac{\tr B}{n} & tx_{12}\\
            tx_{21} & tx_{22} 
        \end{pmatrix}\begin{pmatrix}
            y_{11} & y_{12}\\
            y_{21} & y_{22}
        \end{pmatrix}\\
        &=\begin{pmatrix}
            \lpara \rho_{11}+tx_{11}-t^2\frac{\tr B}{n}\rpara y_{11}+tx_{12}y_{21} & \lpara \rho_{11}+tx_{11}-t^2\frac{\tr B}{n}\rpara y_{12}+tx_{12}y_{22}\\
            tx_{21}y_{11}+tx_{22}y_{21} & tx_{21}y_{12}+tx_{22}y_{22}
        \end{pmatrix}\\
        &=\begin{pmatrix}
            1 & 0\\
            0 & 1
        \end{pmatrix}.
    \end{aligned}
    \end{equation*}
    Thus, $tx_{21}y_{11}+tx_{22}y_{21}=0$ implies $y_{21}=-x_{22}\inv x_{21}y_{11}$
    and 
    \begin{equation*}
    \begin{aligned}
        1 &= \lpara \rho_{11}+tx_{11}-t^2\frac{\tr B}{n}\rpara y_{11}+tx_{12}y_{21}\\
        &= \lpara \rho_{11}+tx_{11}-t^2\frac{\tr B}{n}\rpara y_{11}+tx_{12}\lpara -x_{22}\inv x_{21}y_{11}\rpara\\
        &= \lpara \rho_{11}+tx_{11}-t^2\frac{\tr B}{n}-tx_{12}x_{22}\inv x_{21}\rpara y_{11}
    \end{aligned}
    \end{equation*}
    Hence, since $$\frac{1}{2}\rho_{11}\leq \rho_{11}+tx_{11}-t^2\frac{\tr B}{n}-tx_{12}x_{22}\inv x_{21}\leq 2\rho_{11}$$ for small $t$.
    So $y_{11}= \lpara \rho_{11}+tx_{11}-t^2\frac{\tr B}{n}-tx_{12}x_{22}\inv x_{21}\rpara\inv\leq 2\rho_{11}\inv$. Thus, 
    $$B\geq 2x_{31}\rho_{11}x_{13}\geq x_{31}y_{11}x_{13}.$$
    Thus, we have checked that both operators are positive semi-definite. 
\end{proof}

This characterization allows us to show an additional fact: the tangent cone at $\rho$ is precisely the set of observables achievable by applying a Lindbladian to a state $\rho$
\begin{theorem}\label{lind_tangent}
    For $\rho\in \mc D(\mc{H})$,
    $$T_\rho^+\mc D(\mc{H})=\lset L(\rho):L\text{ is Lindbladian}\rset.$$
\end{theorem}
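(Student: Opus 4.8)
The plan is to prove the two inclusions separately, using \cref{char} as the bridge. Write $T := \lset x\in\mb B(\mc H): x=x^*,\ \tr x = 0,\ (1-f)x(1-f)\geq 0\rset$ for the right-hand side of \cref{char}, so that $T^+_\rho\mc D(\mc H)=T$, and let $C := \lset L(\rho): L \text{ Lindbladian}\rset$ be the cone we must identify with $T$. Note $C$ is a convex cone, since $L\mapsto L(\rho)$ is linear and the Lindbladians are closed under addition and nonnegative scaling. For the easy inclusion $C\subseteq T$, the semigroup $t\mapsto e^{tL}$ is CPTP, so $t\mapsto e^{tL}(\rho)$ is an analytic curve in $\mc D(\mc H)$ starting at $\rho$ with derivative $L(\rho)$ at $t=0$; hence $L(\rho)\in T^+_\rho\mc D(\mc H)=T$. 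Equivalently, one checks directly against \cref{char}: $L(\rho)$ is self-adjoint and traceless, and using $\rho(1-f)=0=(1-f)\rho$ the Hamiltonian and anticommutator terms are annihilated by the outer projections, leaving $(1-f)L(\rho)(1-f)=\sum_j \big((1-f)L_j\rho^{1/2}\big)\big((1-f)L_j\rho^{1/2}\big)^*\geq 0$.

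The substantive direction is $T\subseteq C$, which I would prove by an explicit construction in the block decomposition determined by $f$. Write $\rho=\diag(\rho_{11},0)$ with $\rho_{11}>0$, and decompose a target $x\in T$ into blocks $x_{11}=x_{11}^*$, $x_{12}$, $x_{21}=x_{12}^*$ and $x_{22}=(1-f)x(1-f)\geq 0$, where the trace constraint gives $\tr x_{11}=-\tr x_{22}$. I would build $L$ as a sum of three pieces. First, the coherence block $x_{12}$ is produced by a purely off-diagonal Hamiltonian $H$ with blocks $H_{12}=-i\rho_{11}\inv x_{12}$ and $H_{11}=H_{22}=0$, for which a short computation gives $-i[H,\rho]=\big(\begin{smallmatrix}0 & x_{12}\\ x_{12}^* & 0\end{smallmatrix}\big)$. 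Second, the positive block $x_{22}$ is produced by jump operators sending the support into its complement: for $a=\ket w\bra{e_\alpha}$ with $w\in\supp\rho^\perp$ a unit vector and $e_\alpha$ an eigenvector of $\rho_{11}$ with eigenvalue $p_\alpha$, one finds the dissipator sends $\rho$ to $p_\alpha\ket w\bra w - p_\alpha\ket{e_\alpha}\bra{e_\alpha}$. Nonnegative combinations of such terms realize any $x_{22}\geq 0$ in the $(2,2)$ block (spectral decomposition), at the price of a diagonal, nonpositive ``cost'' $c_{11}$ in the $(1,1)$ block with $\tr c_{11}=-\tr x_{22}$.

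The third piece must supply the remaining $(1,1)$ contribution $d_{11}:=x_{11}-c_{11}$, which by the trace bookkeeping is self-adjoint and \emph{traceless} on the support. Thus $T\subseteq C$ reduces to the core statement that $L\mapsto L(\rho_{11})$ surjects onto the traceless self-adjoint operators for the faithful state $\rho_{11}$. I would establish this by splitting $d_{11}$ in the eigenbasis of $\rho_{11}$. Diagonal traceless directions arise from jumps $\ket{e_\alpha}\bra{e_\beta}$, which send $\rho_{11}$ to $p_\beta(\ket{e_\alpha}\bra{e_\alpha}-\ket{e_\beta}\bra{e_\beta})$, whose differences span all diagonal traceless operators (the infinitesimal version of \cref{cms}). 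Off-diagonal directions between nondegenerate eigenvalues come from the commutator $-i[H_{11},\rho_{11}]$. For a degenerate pair $p_\alpha=p_\beta$, where the commutator vanishes, I would instead use a dissipative term supported on the two-dimensional block, where $L_a(\rho_{11})=p_\alpha[a,a^*]$ and the commutators $[a,a^*]$ already exhaust all traceless self-adjoint directions of the block. Assembling the three pieces yields a Lindbladian with $L(\rho)=x$.

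The main obstacle is precisely this third step: the Hamiltonian alone is insufficient, since its image is only the range of $\ad_{\rho_{11}}$ and misses both the diagonal directions and the coherences inside degenerate eigenspaces, so one must verify that dissipation fills exactly these gaps. The accompanying bookkeeping --- that the unavoidable $(1,1)$ cost $c_{11}$ incurred while building $x_{22}$ is traceless after subtraction from $x_{11}$, hence lies in the range of the within-support construction --- is what makes the three pieces fit together and is the point requiring the most care.
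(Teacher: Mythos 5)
Your proof is correct, and while its outer skeleton (easy inclusion via the semigroup curve $e^{tL}(\rho)$, hard inclusion via a block decomposition relative to $\supp\rho$ with the trace bookkeeping $\tr x_{11}=-\tr x_{22}$) matches the paper, the constructions inside each block are genuinely different. The paper builds a \emph{single} jump operator $a$ with blocks $a_{21}\sim x_{22}^{1/2}\rho_{11}^{-1/2}$ and $a_{22}=-a_{21}\inv\rho_{11}\inv x_{12}$, so that one dissipator simultaneously produces the $(2,2)$ block and the coherences, and then disposes of the leftover $(1,1)$ part by invoking the faithful case --- which it settles in two lines with the replacer-channel trick: for invertible $\rho$ and tangent $x$, the map $L=\tfrac{1}{\varepsilon}(\Phi-\id)$ with $\Phi(\eta)=\tr(\eta)(\rho+\varepsilon x)$ is a Lindbladian sending $\rho$ to $x$. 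You instead produce the coherence block with an off-diagonal Hamiltonian, produce $x_{22}$ with rank-one jumps $\ket{w}\bra{e_\alpha}$ into the orthogonal complement, and then re-prove the faithful-state surjectivity from scratch in the eigenbasis (jumps $\ket{e_\alpha}\bra{e_\beta}$ for the diagonal directions, commutators for nondegenerate coherences, dissipators $[a,a^*]$ inside degenerate eigenspaces). Each route has its advantages: the replacer trick makes your entire third step, including the degenerate/nondegenerate case analysis, unnecessary, and the paper's purely dissipative construction yields its subsequent remark that the completely dissipative part alone fills the tangent cone --- a feature your Hamiltonian-based coherence step forfeits. Conversely, your rank-one jumps handle the case $\mathrm{rank}\,x_{22}>\mathrm{rank}\,\rho$ (e.g.\ $\rho$ pure with full-rank $x_{22}$) seamlessly, whereas the paper's formula for $a_{21}$ implicitly treats it as an invertible square matrix; and your two-block decomposition cleanly absorbs the coherences between $\supp\rho$ and the kernel of $(1-f)x(1-f)$, which the paper's normal form sets to zero without comment. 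So your argument is longer but somewhat more robust at the corner cases; the paper's is shorter and carries the extra dissipative-only conclusion.
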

Recall that Lindbladians are the infinitesimal generators of quantum Markovian semigroup. It turns out that that the time-irreversibility of a quantum Markovian semigroup precisely lines up with the tangent cone conditions at the boundary of $\mc D(\mc{H})$. We first show the result for interior states where the boundary conditions are not at play:

\begin{proof}
    We first show this characterization for invertible densities and will use graph Lindbladians to show it for the general case.

    Given that $(e^{tL})_{t\geq 0}$ forms a quantum Markovian semigroup and its derivative at $t=0$ is $L$, it follows by definition that $L(\rho)$ lies in the tangent cone $T_\rho^+ \mc D(\mc{H})$.

    Fix a tangent vector $x\in T_\rho^+\mc D(\mc{H})$; we need to find a Lindbladian $L$ such that $L(\rho)=x$.
    Consider the replacer channel We know $\rho+\varepsilon x$ is a state for a small enough $\varepsilon>0$ since $\rho$ is an invertible state. This allows us to define the replacer channel $\Phi(\eta)=(\tr\eta)(\rho+\varepsilon x)$. 
    Then $L=\frac{1}{\varepsilon}(\Phi-\id)$ is a Lindbladian, and $$L(\rho)=\frac{1}{\varepsilon}(\Phi-I)(\rho)=\frac{1}{\varepsilon}(\rho+\varepsilon x-\rho)=x.$$
    
    Since $(e^{tL})_{t\geq 0}$ is a quantum Markovian semigroup whose derivative at $t=0$ is $L$. It is clear that $L(\rho)\in T_\rho^+\mc D(\mc{H})$ by definition.

    To show the reverse inclusion, we need to find a Lindbladian for any $x\in T_\rho^+\mc D(\mc{H})$.
    Without loss of generality, assume that both $\rho$ and $x$ are reduced to the blocks by their supports $$\rho=\begin{pmatrix}
        \rho_{11} & 0 & 0\\
        0 & 0 & 0\\
        0 & 0 & 0
    \end{pmatrix}\quad x=\begin{pmatrix}
        x_{11} & x_{12} & 0\\
        x_{21} & x_{22} & 0\\
        0 & 0 & 0
    \end{pmatrix}$$

    We claim there exists some $a, b_j$ such that $$L_a(\rho)+\sum L_{b_j}(\rho)=x$$
    where $$a=\begin{pmatrix}
        0 & 0 & 0\\
        a_{21} & a_{22} & 0\\
        0 & 0 & 0
    \end{pmatrix} \quad b_j=\begin{pmatrix}
        b_{j,11} & 0 & 0\\
        0 & 0 & 0\\
        0 & 0 & 0
    \end{pmatrix}$$
    By direct calculation, 
    $$L_a(\rho)=\begin{pmatrix}
        -\rho_{11}a_{21}^*a_{21}-a_{21}^*a_{21}\rho_{11} & -\rho_{11}a_{21}^*a_{22} & 0\\
        a_{22}^*a_{21}\rho_{11} & 2a_{21}\rho_{11}a_{21}^* & 0\\
        0 & 0 & 0
    \end{pmatrix}\quad L_{b_j}=\begin{pmatrix}
        L_{b_{j,11}}(\rho_{11}) & 0 & 0\\
        0 & 0 & 0\\
        0 & 0 & 0
    \end{pmatrix}$$
    Then $L_a(\rho)+\sum L_{b_j}(\rho)=x$ implies we need to solve the system of equations 
    $$\begin{cases}
        x_{11} = \sum_j L_{b_{j,11}}(\rho_{11}) -\rho_{11}a_{21}^*a_{21}-a_{21}^*a_{21}\rho_{11}\\
        x_{12} = -\rho_{11}a_{21}^*a_{22}\\
        x_{21} = a_{22}^*a_{21}\rho_{11}\\
        x_{22} = 2a_{21}\rho_{11}a_{21}^*
    \end{cases}$$
    Note that both $\rho_{11}$ and $x_{22}$ are positive and full rank by assumption, so the fourth equation becomes
    \begin{equation*}
    \begin{aligned}
        x_{22} &= 2a_{21}\rho_{11}a_{21}^*\\
        \lpara \frac{1}{2}x_{22}\rpara^{\frac{1}{2}} &= (a_{21}\rho_{11}^{\frac{1}{2}})(a_{21}\rho_{11}^{\frac{1}{2}})^*
    \end{aligned}
    \end{equation*}
    Let $a_{21}$ be self-adjoint, and thus 
    $$a_{21}=\frac{1}{\sqrt{2}}x_{22}\rho_{11}^{-\frac{1}{2}}.$$
    In particular, $a_{21}$ is invertible.
    Notice that the second and third equations are duals of each other, and hence 
    $$a_{22}=-a_{21}\inv\rho_{11}\inv x_{12}$$ solves the two equations for any given $x_{12}$.
    Finally, notice that $x_{11}+\rho_{11}a_{21}^*a_{21}+a_{21}^*a_{21}\rho_{11}$ is Hermitian and traceless. This is because
    \begin{equation*}
    \begin{aligned}
        \tr x_{11} &= \tr\lpara\sum_j L_{b_{j,11}}(\rho_{11}) -\rho_{11}a_{21}^*a_{21}-a_{21}^*a_{21}\rho_{11}\rpara\\
        &= \tr \lpara-\rho_{11}a_{21}^*a_{21}-a_{21}^*a_{21}\rho_{11}\rpara
    \end{aligned}
    \end{equation*}
    Hence, by Lindbladian characterization for invertible density, $b_j$ exists making 
    $$\sum_j L_{b_j}=x_{11}+\rho_{11}a_{21}^*a_{21}+a_{21}^*a_{21}\rho_{11}.$$
    Hence, we find the Lindbladian corresponding to any tangent vector.
\end{proof}

We see here that Lindbladians exactly map boundary points to elements in the tangent cone. The time irreversibility of the quantum Markovian semigroup generated by the Lindbladian corresponds to the fact that we infinitesimally cannot leave the state space $\mc D(\mc{H})$. Additionally from the above proof, we see that it is enough to consider only the completely dissipative part of the Lindbladian to fill out the tangent cone. In the next section we consider time and position dependent Lindbladians and the evolutions they generate.

\subsection{Lindbladian Lifting Problem}
We now turn to the question of which paths can be physically realized, meaning those generated by some time-dependent Markovian evolution. In quantum control theory, a open problem is whether the reach, i.e. the set of attainable states given certain resources, coincides across different models of open-system dynamics. In particular, it remains unclear whether the reach of non-Markovian evolutions matches that of Markovian ones under comparable resource constraints \cite{koch2022}.
Here, we pose a stronger question: not only should the reachable sets coincide, but the entire path itself should be physically realizable: every point along the trajectory should correspond to an actual Markovian evolution.
Formally, given a path of density operators $\rho_t$, we can associate a family of tangent pairs $(\rho_t, \dot\rho_t)$. The question is whether there exists a corresponding family of Lindbladians $(\rho_t, L_t)$ such that
$$(\rho_t, L_t(\rho_t)) = (\rho_t, \dot\rho_t).$$
In mathematical terms, this asks whether the evaluation map admits a retract.
\begin{center}
\begin{tikzcd}
    \mc D(\mc H)\times \mc L \arrow[r, "\ev_t"'] & T^+\mc D(\mc{H}) \arrow[l, bend right, dashed, "s_t"']
\end{tikzcd}
\end{center}
We view this as the beginning of a broader line of investigation and take the first steps by formulating sufficient conditions for lifting.

\begin{proposition}[spectral lifting]\label{speclift} Let $\lambda(t)$ be the spectral gap of $\rho_t$. Assume $\rho_t, \dot \rho_t, {\dot \rho_t}^{-1} \in L^{\infty}(I,\mc D(\mc H))$ and $\dot \rho_t \in T^+_{\rho_t}(\mc D(\mc H))$ almost everywhere. If  $\lambda_{min}(t)^{-1}, \sqrt{\lambda_{min}(t)^{-1}} \in L^1(I, \mc D(\mc H))$ there exists a Lindbladian lifting $L_t(\rho_t) = \dot \rho_t$ with $L_t \in L^1(I, \mc D(\mc H))$
\end{proposition}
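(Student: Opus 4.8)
The plan is to upgrade the pointwise existence statement of \cref{lind_tangent} into a quantitatively controlled, measurable-in-time selection. First I would observe that the hypothesis $\lambda_{\min}(t)^{-1}\in L^1(I)$ forces $\lambda_{\min}(t)>0$ for almost every $t$, so that $\rho_t$ is invertible for a.e.\ $t$; at such $t$, \cref{char} gives $T^+_{\rho_t}\mc D(\mc H)=\{x=x^*:\tr x=0\}$, and the admissibility hypothesis $\dot\rho_t\in T^+_{\rho_t}\mc D(\mc H)$ holds automatically. The real content is therefore not the existence of a lift at each fixed $t$ (already guaranteed by \cref{lind_tangent}) but the production of one whose generator norm is integrable against $dt$, together with measurability of the assignment $t\mapsto L_t$.

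For the explicit lift I would reuse the replacer construction from the proof of \cref{lind_tangent}. Let $M=\norm{\dot\rho}_{L^\infty}$ (essential sup of the operator norm) and set $\varepsilon_t=\lambda_{\min}(t)/(2M)$, which is positive a.e. Since $\rho_t\ge\lambda_{\min}(t)\,I$ on its (full) support and $\dot\rho_t\ge -M\,I$, the operator $\sigma_t:=\rho_t+\varepsilon_t\dot\rho_t\ge \tfrac12\lambda_{\min}(t)\,I\ge 0$ satisfies $\tr\sigma_t=1$, hence $\sigma_t\in\mc D(\mc H)$. Define $L_t=\varepsilon_t^{-1}(\mc R_{\sigma_t}-\id)$, which is a genuine Lindbladian because $\mc R_{\sigma_t}$ is CPTP and the difference of a channel with the identity generates a (Poisson) CPTP semigroup, as recorded in the preliminaries. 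A one-line computation gives $L_t(\rho_t)=\varepsilon_t^{-1}(\tr(\rho_t)\sigma_t-\rho_t)=\varepsilon_t^{-1}(\sigma_t-\rho_t)=\dot\rho_t$, the desired lift.

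It remains to bound $\norm{L_t}$ and to check measurability. Since $\mc R_{\sigma_t}$ and $\id$ are channels, $\norm{\mc R_{\sigma_t}-\id}_\diamond\le 2$, so $\norm{L_t}_\diamond\le 2\varepsilon_t^{-1}=4M\,\lambda_{\min}(t)^{-1}$, and integrating gives $\int_I\norm{L_t}_\diamond\,dt\le 4M\,\norm{\lambda_{\min}^{-1}}_{L^1}<\infty$, i.e.\ $L_t\in L^1(I,\mc L)$. For measurability I would note that $t\mapsto(\rho_t,\dot\rho_t)$ is measurable by hypothesis, that $\lambda_{\min}(t)$ is a continuous (indeed $1$-Lipschitz) function of $\rho_t$, and that $\sigma_t$, $\mc R_{\sigma_t}$, and hence $L_t$ are obtained from these by continuous operations; thus $L_t$ is a measurable, Bochner-integrable $\mc L$-valued function of $t$.

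The main obstacle is exactly the tension between maintaining positivity of $\sigma_t$ and keeping the generator small as $\rho_t$ approaches $\partial\mc D(\mc H)$: positivity caps $\varepsilon_t$ at order $\lambda_{\min}(t)$, and this is precisely the mechanism that converts integrability of $\lambda_{\min}^{-1}$ into integrability of $\norm{L_t}$. The replacer lift above needs neither the invertibility of $\dot\rho_t$ nor the $L^1$ bound on $\sqrt{\lambda_{\min}^{-1}}$; these two additional hypotheses are what is required if one instead insists on a \emph{purely dissipative} lift assembled from the graph--Lindbladian jump operators of \cref{lind_tangent}. There the jump operator scales like $x\,\rho_t^{-1/2}$, so its operator norm is of order $\lambda_{\min}(t)^{-1/2}$ (whence $\sqrt{\lambda_{\min}^{-1}}\in L^1$ controls the linear jump-operator size while $\lambda_{\min}^{-1}\in L^1$ controls the quadratic superoperator size), and the auxiliary factor $a^{-1}\rho_t^{-1}x_{12}$ is well defined and bounded precisely when $\dot\rho_t$ is uniformly invertible. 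I would therefore present the replacer construction as the main argument and record the graph--Lindbladian version as the variant that accounts for the full hypothesis list.
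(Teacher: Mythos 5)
Your main argument rests on a misreading of the statement's key quantity, and this is a genuine gap. The proposition defines $\lambda(t)$ as the \emph{spectral gap} of $\rho_t$, i.e.\ the smallest \emph{nonzero} eigenvalue (the smallest eigenvalue of the support block $\rho_{11}(t)$ in the notation of the proof of \cref{lind_tangent}); this is positive even when $\rho_t$ is singular, so $\lambda_{min}(t)^{-1}\in L^1$ does \emph{not} force $\rho_t$ to be invertible almost everywhere. The proposition is intended precisely for paths that may lie on the boundary $\partial\mc D(\mc H)$ on a set of positive measure --- that is the situation the whole section is built around, and it is why the hypotheses $\dot\rho_t\in T^+_{\rho_t}\mc D(\mc H)$, $\dot\rho_t^{-1}\in L^\infty$, and $\sqrt{\lambda_{min}(t)^{-1}}\in L^1$ appear at all. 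On such a path your replacer lift breaks at both steps: first, the estimate $\rho_t\ge\lambda_{min}(t)I$ fails on the kernel of $\rho_t$, so positivity of $\sigma_t=\rho_t+\varepsilon_t\dot\rho_t$ is not guaranteed; second, and more fundamentally, by \cref{nonconv} the tangent cone at a boundary state strictly contains the cone $\lset x : \rho+\varepsilon x\ge 0 \text{ for some }\varepsilon>0\rset$, so there are admissible velocities $\dot\rho_t$ for which \emph{no} choice of $\varepsilon_t>0$ makes $\sigma_t$ a state (the paper's own example: $\rho=\ket 0\bra 0$, $\dot\rho=X$). The fact that two of the four hypotheses became redundant under your reading was the warning sign that the reading, not the proposition, was at fault.

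What you relegate to a ``variant'' in your final paragraph is in fact the paper's entire proof: one takes the explicit jump operators from the proof of \cref{lind_tangent}, namely $a_{21}(t)=\tfrac{1}{\sqrt 2}x_{22}(t)\rho_{11}(t)^{-1/2}$ and $a_{22}(t)=-a_{21}(t)^{-1}\rho_{11}(t)^{-1}x_{12}(t)$, together with a replacer-type generator acting inside the support, and verifies that each piece is integrable: $\sqrt{\lambda_{min}^{-1}}\in L^1$ controls $L_{a_{21}}$, $\lambda_{min}^{-1}\in L^1$ controls $a_{22}$ and the replacer part, and $\dot\rho_t^{-1}\in L^\infty$ (really, invertibility of the block $x_{22}$) makes $a_{21}^{-1}$, hence $a_{22}$, well defined and bounded. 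You identified all of these scalings correctly, so you clearly understand the mechanism, but a sketch of the scaling is not the integrability verification, and as written your submission proves only the a.e.-invertible case --- a genuinely weaker statement, albeit by a cleaner argument. To repair it, promote your last paragraph to the main argument and carry out those estimates; your replacer construction is then worth keeping as a remark covering interior paths, where it indeed needs fewer hypotheses.
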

\begin{proof}
We know that from \cref{char}, we can pick a Lindbladian, $L_t$ that splits into three parts: $L_t = L_{a_{21}(t)} +  L_{a_{22}(t)} + L_{\Phi_t,t}$. We should check that each of these parts are measurable.

We look at the jump operator for the first term, $L_{a_{21}(t)}$

$$a_{21}(t)=\frac{1}{\sqrt{2}}x_{22}(t)\rho_{11}(t)^{-\frac{1}{2}}$$

The integrability of $a_{21}$ is dependent on $\rho_{11}^{-1/2}$. In particular, if $\int_I \sqrt{\lambda_{min}(t)^{-1}} dt < \infty$ we have that $\rho_{11}(t)^{-1/2} \in L^1(I, \mc D(\mc H))$. Since $x_{22}(t) \in L^\infty(I, \mc D(\mc H)$ we have that $L_{a_{21}}(t) \in L^1(I, \mc D(\mc H))$. 

Now we look at the second term:
$$a_{22}(t)=-a_{21}(t)\inv\rho_{11}(t)\inv x_{12}(t) = -\sqrt{2} \rho_{11}(t)^{1/2} x_{22}(t)^{-1} \rho_{11}(t)^{-1}x_{12}(t)^{-1}$$

 It suffices to check $\rho_{11}(t)^{-1} \in L^1(I,\mc D(\mc H))$, since the rest of the terms are in $L^\infty$ by assumption. Indeed, since $\int_I \lambda_{11}(t)^{-1} dt < \infty$ we have that $\rho_{11}(t)^{-1} \in L^1(I, \mc D(\mc H)$ and $L_{a_{22}(t)} \in L^1(I, \mc D(\mc H))$.

Now, let's look at the third term: \begin{align*}L_{\Phi_t,t}(\eta) &= \frac{1}{\lambda_{min}(t)}[\tr(\eta)\rho_t + \tr(\eta) \lambda_{min}(t) - \id] \\ &= \frac{\tr(\eta)\rho_t}{\lambda_{min}(t)} + \tr(\eta) -  \frac{\id}{\lambda_{min}(t)} \end{align*}

 Since $\rho_t \in L^\infty(I, \mc D(\mc H))$ and $\int_0^{\infty} \lambda_{min}(t)^{-1} dt < \infty$ we have $L_{\Phi_t,t} \in L^{1}(I, \mc D(\mc H))$. 

We see that $L_t$ is integrable since $L_{a_{21}(t)}$,  $L_{a_{22}(t)}$, and $L_{\Phi_t,t}$ are individually integrable. 
\end{proof}

Note that if $I$ is a finite interval, $\lambda_{min}(t)^{-1} \in L^1(I, \mc D(\mc H))$ implies $\sqrt{\lambda_{min}(t)^{-1}} \in L^1(I, \mc D(\mc H))$ which slightly weakens the conditions of \cref{speclift}.

\section{Reach me}\label{reach}
Our strategy for studying reachability in restricted sets of Lindbladians is to first start with some time-dependent Lindbladian evolution admissible to some subset of Lindbladians $\mc K$, see that it generates a Lindbladian vector field $L_\eta$ in $\mc K$, and then use that vector field to find local improvements in norm. 

\begin{definition}
We say that an evolution $T_t$ is \emph{admissible} to a subset of Lindbladians $\mc K$ if and only if there exists a family $L_t \in \mc K$ for all $t$ such that $T_t$ is a weak solution to the equation of motion $\dot T_t = L_t T_t$
\end{definition}

\begin{definition}
    We say that a subset of Lindbladians $\mc K$ \emph{reaches $\sigma$} if for every initial state $\rho$, there exists some time-dependent Lindbladian evolution $T_t$ admissible to $\mc K$ with $\lim_{t \to \infty} T_t(\rho) = \sigma$. If $\mc K$ reaches every state $\sigma \in \mc D(\mc{H})$, then we call $\mc K$ \emph{transitive}.
\end{definition}

The following result about transitivity of all Lindbladians is well known.
\begin{theorem}
    The set $\mc L$ of all Lindbladians is transitive. Furthermore, all of the necessary evolutions can be done in finite time. 
\end{theorem}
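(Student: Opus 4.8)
The plan is to exploit the single most powerful dissipative resource contained in $\mc L$: the replacer (erasure) channel $\mc R_\sigma(\eta)=\tr(\eta)\,\sigma$ aimed at the desired target $\sigma$. Since $\mc R_\sigma$ is CPTP and, as recorded in \cref{uni+diss}, the difference $\Phi-\id$ of any channel with the identity is a Lindbladian, the operator $M_\sigma:=\mc R_\sigma-\id$ lies in $\mc L$, and so does every nonnegative multiple $c\,M_\sigma$. First I would treat the autonomous semigroup it generates. Because $\mc R_\sigma$ is idempotent on trace-one inputs ($\mc R_\sigma^2=\mc R_\sigma$ as $\tr\sigma=1$), a one-line power-series computation gives
\[
e^{tM_\sigma}=e^{-t}\,\id+(1-e^{-t})\,\mc R_\sigma,
\]
so that $e^{tM_\sigma}(\rho)=e^{-t}\rho+(1-e^{-t})\sigma\to\sigma$ as $t\to\infty$, for every $\rho$. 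This already shows that $\mc L$ reaches every $\sigma\in\mc D(\mc H)$ in the $t\to\infty$ sense, hence is transitive.

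For the finite-time strengthening I would upgrade the constant rate to a time-dependent one whose integral diverges over a bounded interval. Fix $\tau>0$ and set
\[
L_t:=\frac{1}{\tau-t}\,M_\sigma\in\mc L,\qquad t\in[0,\tau).
\]
Since all the $L_t$ are scalar multiples of the single operator $M_\sigma$ they commute, and the resulting propagator is the straight-line interpolation in channel space,
\[
T_t=\Bigl(1-\tfrac{t}{\tau}\Bigr)\id+\tfrac{t}{\tau}\,\mc R_\sigma,
\]
which is CPTP for each $t$ as a convex combination of the two channels $\id$ and $\mc R_\sigma$. A direct check, using $\mc R_\sigma(\eta)=\tr(\eta)\sigma$ and $\tr T_t(\eta)=\tr\eta$, verifies $\dot T_t=L_t T_t$ on $[0,\tau)$, so $T_t$ is admissible to $\mc L$. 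As $t\uparrow\tau$ the propagator extends continuously to $T_\tau=\mc R_\sigma$, whence $T_\tau(\rho)=\sigma$ exactly for every initial $\rho$, in the finite time $\tau$.

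The step I expect to require the most care is reconciling this finite-time construction with the boundary geometry of \cref{char}. The path $\rho_t=T_t(\rho)$ arrives at a non-interior target $\sigma$ with velocity $\dot\rho_\tau=\tfrac1\tau(\sigma-\rho)$, and writing $f_\sigma$ for the support projection of $\sigma$ one computes $(1-f_\sigma)\dot\rho_\tau(1-f_\sigma)=-\tfrac1\tau(1-f_\sigma)\rho(1-f_\sigma)\le 0$, which in general \emph{fails} the tangent-cone admissibility condition at $\sigma$. The resolution is that admissibility is verified directly at the superoperator level, each $L_t$ being a genuine Lindbladian and the identity $\dot T_t=L_t T_t$ holding as a (classical, hence weak) solution on $[0,\tau)$; the rate $1/(\tau-t)$ diverges precisely at the endpoint, so no admissible velocity \emph{at} $\sigma$ is ever demanded and the target is attained only in the continuous extension. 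I would note that this divergence is intrinsic: the autonomous computation already shows that reaching a boundary state exactly forces $\int c=\infty$, so $L_t\notin L^1$ is unavoidable. Finally I would remark that an alternative route is to hand-build a smooth ``soft-landing'' path whose transverse velocity vanishes at the endpoint and lift it pointwise via \cref{lind_tangent}, but the replacer interpolation above is cleaner and sidesteps the integrability subtleties analyzed in \cref{speclift}.
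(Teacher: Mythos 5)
Your proof is correct, and it streamlines the paper's argument rather than reproducing it. The key identity checks out: since $\tr\sigma=1$ gives $\mc R_\sigma^2=\mc R_\sigma$ and hence $M_\sigma\mc R_\sigma=0$, one gets $L_tT_t=\frac{1}{\tau-t}\bigl(1-\frac{t}{\tau}\bigr)M_\sigma=\frac{1}{\tau}M_\sigma=\dot T_t$, so the linear interpolation is indeed an admissible propagator. The paper instead splits into two cases: for \emph{invertible} targets it uses an overshoot replacer aimed at $\tilde\sigma=\sigma+\varepsilon(\sigma-\rho)$, so that a single \emph{time-independent} generator reaches $\sigma$ exactly at $s=\ln(1+1/\varepsilon)$; for non-invertible targets it reparametrizes the replacer semigroup $\Phi_s=e^{sM_\sigma}$ by $f(t)=\tan t$, which is the same mechanism as yours (a rate $f'(t)$ diverging at the terminal time) with a different clock. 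Your single construction handles both cases at once and also makes the transitivity claim self-contained (the paper cites Lie-semigroup/wedge theory for that part), which is a genuine simplification; what the paper's case split buys in exchange is the stronger statement that interior targets are reachable in finite time by an \emph{autonomous} Markovian semigroup, whereas your generator is unavoidably time-dependent even for invertible $\sigma$. Two refinements of your closing discussion: the tangent-cone worry at the endpoint is vacuous rather than delicate, because for $t<\tau$ the support of $\rho_t$ contains that of $\sigma-\rho$, so $\dot\rho_t\in T^+_{\rho_t}\mc D(\mc H)$ pointwise, and no velocity is required \emph{at} $\sigma$ since the path terminates there; moreover you can make admissibility on all of $[0,\infty)$ completely explicit by setting $L_t=0$ for $t\ge\tau$, noting that the integrand $L_tT_t\equiv\frac{1}{\tau}M_\sigma$ stays bounded, so $T_t=\id+\int_0^t L_sT_s\,ds$ holds globally as a weak solution despite the blow-up of $\norm{L_t}$. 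Your side remark that the divergence is ``intrinsic'' is only justified by your computation for the one-parameter replacer family, not for arbitrary $L_t$; as stated it is a plausible folklore claim but not proved, so it should be flagged as such (it plays no role in the proof).
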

\begin{proof}
    The controllability of $\mc L$ is well-known. The result is due to the theory of Lie semigroup wedge (see for example \cite{dirr2009, hilgert1989}). For finite time, we consider two cases: interior points and boundary points.
    
    For invertible densities, finite time controllability can be achieved by using the replacer channel path with an overshoot. Define $\mc R_\sigma$ as the replacer channel to the state $\tilde{\sigma} = \sigma + \varepsilon(\sigma - \rho)$, which is a valid density operator as long as $\sigma$ is invertible and $\varepsilon > 0$ is sufficiently small. Then the evolution is given by
    $$T_t(\rho) = e^{-t} \rho + (1 - e^{-t}) \tilde{\sigma}.$$
    Substituting $\tilde{\sigma} = (1 + \varepsilon)\sigma - \varepsilon \rho$, we get
    \begin{align*}
    T_t(\rho)   = ((1 + \varepsilon)\sigma - \varepsilon \rho) + e^{-t} ((1 + \varepsilon)(\rho - \sigma)).
    \end{align*}
    To reach $\sigma$, solve $T_s(\rho) = \sigma$, which yields
    $e^{-s} = \frac{\varepsilon}{1 + \varepsilon}.$ That means, $$ s = \ln\left(1 + \frac{1}{\varepsilon}\right).$$
    Since $\varepsilon > 0$, this transport time is finite.

    For non-invertible densities, we will need to use time-dependent Lindbladian by a rescaling argument. $\Phi_t(\eta) = e^{-t} \eta + (1-e^{-t}) \sigma$. Now using $\dot \Phi_t(\eta) = e^{-t}(\sigma-\eta)$ we introduce a time-compression function $f(t)$. 
    \begin{align*}
    \frac{d}{dt} \Phi_{f(t)}(\eta) = f'(t) e^{-f(t)}(\sigma - \eta)
    \end{align*}
    Now letting $f(t) = \tan(t)$ we get 
    \begin{align*}
    \frac{d}{dt} \Phi_{f(t)}(\eta) &= \frac{1}{\cos(t)^2} e^{-\tan(t)}(\sigma - \eta)
    \end{align*}
    We now see that $\lim_{t \to \pi/2} \frac{d}{dt} \Phi_{f(t)}(\eta) = 0 \in T^+_\eta(\mc D(\mc{H}))$ so $T_t = \Phi_{f(t)}$ is Lindbladian. Moreover, $T_{\pi/2}(\rho) = \sigma$.
\end{proof}

\begin{theorem} \label{transtheorem}
    Fix a $p$-norm with $1 < p < \infty$ and a subset of Lindbladians $\mc K$. If for $\sigma \in \mc D(\mc{H})$ we can find some Lindbladian vector field $L_\eta \in \mc K$  so that for all $\eta \neq \sigma \in  \mc D(\mc{H})$ we have $$\tr(L_\eta(\eta)(\eta - \sigma)\abs{\eta -\sigma}^{p-2}) < 0$$ then $\mc K$ reaches $\sigma$. 
\end{theorem}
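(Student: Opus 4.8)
The plan is to treat the Schatten $p$-norm distance to the target as a strict Lyapunov function and run the autonomous flow generated by the prescribed vector field. Concretely, I would set
$$V(\eta) = \frac{1}{p}\,\norm{\eta-\sigma}_p^p = \frac{1}{p}\tr\abs{\eta-\sigma}^p,$$
which is continuous on the compact convex body $\mc D(\mc H)$, vanishes only at $\eta=\sigma$, and is strictly positive elsewhere. The first step is to record the derivative of $V$ along a $C^1$ curve $\eta_t$: since $A_t:=\eta_t-\sigma$ is self-adjoint, the trace differentiation formula $\frac{d}{dt}\tr f(A_t)=\tr\lpara f'(A_t)\dot A_t\rpara$ applied to $f(x)=\abs{x}^p$ yields
$$\frac{d}{dt}V(\eta_t) = \tr\lpara \dot\eta_t\,(\eta_t-\sigma)\abs{\eta_t-\sigma}^{p-2}\rpara,$$
where $(\eta-\sigma)\abs{\eta-\sigma}^{p-2}$ is read as $\operatorname{sgn}(\eta-\sigma)\abs{\eta-\sigma}^{p-1}$, a continuous matrix function for $1<p<\infty$. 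Substituting the equation of motion $\dot\eta_t = L_{\eta_t}(\eta_t)$ turns the right-hand side into exactly the trace appearing in the hypothesis.

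Second, I would construct the trajectory. Consider the initial value problem $\dot\eta_t = L_{\eta_t}(\eta_t)$, $\eta_0=\rho$, on $\mc D(\mc H)$. By \cref{lind_tangent}, $L_\eta(\eta)\in T_\eta^+\mc D(\mc H)$ for every $\eta$, so the vector field is everywhere tangent (inward-pointing) to the state space; Nagumo's invariance theorem then guarantees that a solution starting in $\mc D(\mc H)$ remains there for all forward time. Existence of such a solution follows from a Peano/Carathéodory argument once $\eta\mapsto L_\eta$ is continuous (or measurable with suitable local bounds); uniqueness is not needed. The resulting propagator $T_t$, with $T_t(\rho)=\eta_t$, is admissible to $\mc K$ because $L_{\eta_t}\in\mc K$ for all $t$ by assumption.

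Third, I would close the argument with a LaSalle-type contradiction. By the computation above together with the hypothesis, $\frac{d}{dt}V(\eta_t)<0$ whenever $\eta_t\neq\sigma$, so $t\mapsto V(\eta_t)$ is strictly decreasing and bounded below by $0$, hence converges to some $V_\infty\gl 0$. If $V_\infty>0$, then $\eta_t$ remains in the compact set $K=\lset \eta\in\mc D(\mc H): V_\infty\kl V(\eta)\kl V(\rho)\rset$, which does not contain $\sigma$. On $K$ the continuous function $\eta\mapsto \tr\lpara L_\eta(\eta)(\eta-\sigma)\abs{\eta-\sigma}^{p-2}\rpara$ is strictly negative, hence attains a negative maximum $-c<0$; this forces $V(\eta_t)\kl V(\rho)-ct\to-\infty$, a contradiction. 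Therefore $V_\infty=0$, i.e.\ $\norm{\eta_t-\sigma}_p\to 0$ and $\lim_{t\to\infty}T_t(\rho)=\sigma$. Since $\rho$ was arbitrary, $\mc K$ reaches $\sigma$.

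The main obstacle I anticipate is the analytic foundation of the flow rather than the Lyapunov estimate itself: the hypothesis only supplies a pointwise assignment $\eta\mapsto L_\eta$, and to extract an honest admissible evolution $T_t$ and a continuous $\dot V$ for the LaSalle step I must impose or verify enough regularity (continuity, local boundedness, tangency) on this assignment. Establishing forward invariance of $\mc D(\mc H)$ precisely at boundary states, where the tangent cone is a genuine cone as highlighted in \cref{nonconv}, and ruling out finite-time blow-up of the generator near low-rank states, are the delicate points; in the absence of uniqueness one simply works with any selected Carathéodory solution.
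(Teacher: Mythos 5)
Your proposal is correct and follows the same core strategy as the paper's proof: treat $V(\eta)=\tfrac{1}{p}\norm{\eta-\sigma}_p^p$ as a Lyapunov function, use the identical trace-derivative identity, and derive a contradiction from strict monotone decrease. The differences lie in the execution of the two technical steps. For the trajectory, the paper does not solve the autonomous initial value problem $\dot\eta_t=L_{\eta_t}(\eta_t)$; it asserts a selection of an admissible evolution that "may have countably many pieces for each choice of Lindbladian," i.e.\ an implicitly piecewise construction, whereas you invoke Peano/Carath\'eodory existence plus Nagumo invariance, which is cleaner but genuinely requires continuity of $\eta\mapsto L_\eta$ --- a hypothesis the theorem does not state (you flag this honestly). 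For the contradiction, the paper evaluates the hypothesis at a single limiting state $\eta=T_s(\rho)$ with $s\in[0,\infty]$, where the derivative of the distance must vanish; this avoids any compactness argument but implicitly assumes that the trajectory itself (not merely the distance) converges, so that the limiting state exists and the derivative "at $s=\infty$" can be identified with the trace expression there. Your LaSalle step --- extracting a uniform negative maximum $-c$ of $\eta\mapsto\tr\lpara L_\eta(\eta)(\eta-\sigma)\abs{\eta-\sigma}^{p-2}\rpara$ on the compact set $K$ and concluding $V(\eta_t)\kl V(\rho)-ct$ --- is quantitatively stronger and more rigorous, at the price of the continuity assumption. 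In short, both arguments lean on unstated regularity; yours isolates it in the vector field, the paper's in the convergence of the trajectory, and your version makes the hidden hypotheses explicit rather than glossing over them.
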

\begin{proof}
    We first calculate the derivative from \cite{Bhatia}
    \begin{align*}
    \frac{d}{dt} \frac{1}{p} \norm{\rho_t - \sigma}^p_p = \tr(\dot \rho_t (\rho_t - \sigma) \abs{\rho_t - \sigma}^{p-2})
    \end{align*}
    If the condition of the theorem holds, for any $\rho$ we can always select a $T_t$ admissible to $\mc K$ with $\dot T_t(\rho) = L_{T_t(\rho)}(\rho)$ for all $t \in [0,\infty]$ so that $\frac{d}{dt} \norm{ T_t(\rho) - \sigma}^p_p < 0$ almost everywhere. Note that this $T_t$ may have countably many pieces for each choice of Lindbladian.
    We see that $\liminf_{t \to \infty} \norm{ T_t(\rho) - \sigma}^p_p = c$ exists since $\inf \norm{ T_t(\rho) - \sigma}^p_p$ is monotonic and bounded below by $0$.

    Now for contradiction, assume $c \neq 0$. We have that there is some $s \in [0,\infty]$ so that $\norm{ T_s(\rho) - \sigma}^p_p = c$ for all $t \geq s$ and $\frac{d}{dt}\Bigr|_{t=s} \norm{T_t(\rho) - \sigma}^p_p = 0$. Letting $\eta = T_s(\rho)$ we see that $\tr(L_\eta(\eta)(\eta - \sigma)\abs{\eta -\sigma}^{p-2}) = 0$ which contradicts the assumption.

    This implies $\liminf_{t \to \infty} \norm{ T_t(\rho) - \sigma}^p_p = 0$ which in turn means that $T_\infty(\rho) = \sigma$.
\end{proof}
\begin{center}
\begin{figure}[t!]
    \begin{tikzpicture}[scale=1.5, line cap=round, line join=round, >=Triangle]
    \coordinate (O) at (0,0);
    \fill (O) circle (0.8pt);
    \node[below] at (O) {$\sigma$};

    \coordinate (S) at (-1.5, 1);
    \fill (S) circle (0.8pt);
    \node[below] at (S) {$\eta$};
    \draw[->,red] (-1.5, 1) -- (0,0);
    \node[below, red] at (-1, 0.5) {$\sigma-\eta$};
    \draw[->, blue] (-1.5, 1) -- (-0.3,1.2);
    \node[above, blue] at (-0.9, 1.1) {$L_\eta$};
    \end{tikzpicture}
    \caption{Illustration of the alignment condition of \cref{p2condition}}
    \label{fig:alignment}
\end{figure}
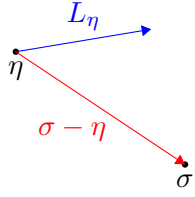
\end{center}

By setting $p=2$ this condition takes on an especially nice form:
\begin{corollary}\label[corollary]{p2condition}
    Fix a subset of Lindbladians $\mc K$. If for $\sigma \in \mc D(\mc{H})$ we can find some Lindbladian vector field $L_\eta \in \mc K$ so that for all $\eta \neq \sigma \in  \mc D(\mc{H})$ we have $$\langle L_\eta(\eta),\sigma-\eta \rangle_{HS} > 0$$ then $\mc K$ reaches $\sigma$. 
\end{corollary}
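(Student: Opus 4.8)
The plan is to obtain this corollary directly as the $p=2$ specialization of \cref{transtheorem}, so that no new analytic work is required. First I would examine what the weight factor $\abs{\eta-\sigma}^{p-2}$ becomes at $p=2$: the exponent is $p-2=0$, and inside the trace $\tr\bigl(L_\eta(\eta)(\eta-\sigma)\abs{\eta-\sigma}^{p-2}\bigr)$ the factor $(\eta-\sigma)$ already annihilates the kernel of $\eta-\sigma$, so replacing $\abs{\eta-\sigma}^{0}$ by the identity operator changes nothing. Hence the hypothesis of \cref{transtheorem} at $p=2$ collapses to the plain trace inequality $\tr\bigl(L_\eta(\eta)(\eta-\sigma)\bigr) < 0$.

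Next I would translate this trace into the Hilbert--Schmidt pairing appearing in the statement. Because $L_\eta$ is a Lindbladian it preserves self-adjointness, so $L_\eta(\eta)=L_\eta(\eta)^*$, and both $\eta$ and $\sigma$ are self-adjoint. Writing $\langle A,B\rangle_{HS}=\tr(A^*B)$, this gives
$$\langle L_\eta(\eta),\,\sigma-\eta\rangle_{HS} = \tr\bigl(L_\eta(\eta)(\sigma-\eta)\bigr) = -\tr\bigl(L_\eta(\eta)(\eta-\sigma)\bigr).$$
Therefore the corollary's hypothesis $\langle L_\eta(\eta),\sigma-\eta\rangle_{HS} > 0$ is precisely equivalent to the $p=2$ hypothesis $\tr\bigl(L_\eta(\eta)(\eta-\sigma)\bigr) < 0$ of \cref{transtheorem}, and the conclusion that $\mc K$ reaches $\sigma$ follows at once by invoking that theorem.

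The only thing to watch is the bookkeeping of signs and conventions: verifying that $p-2=0$ genuinely removes the weight (using that the factor $\eta-\sigma$ kills the relevant kernel) and that the orientation $\sigma-\eta$ versus $\eta-\sigma$ flips the inequality so the strict signs line up correctly. There is no genuine obstacle here; all of the substantive content --- the monotone decrease of $\tfrac{1}{2}\norm{\rho_t-\sigma}_2^2$ along an admissible flow and the limiting/contradiction argument that forces the limit infimum to vanish --- is already supplied by \cref{transtheorem}. The present statement is merely its most transparent instance, and recasting it through the Hilbert--Schmidt inner product is exactly what makes visible the geometric alignment picture of \cref{fig:alignment}, in which the admissible velocity $L_\eta(\eta)$ maintains a positive overlap with the direction $\sigma-\eta$ pointing toward the target.
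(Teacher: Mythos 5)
Your proposal is correct and follows the same route as the paper: both obtain the corollary by specializing \cref{transtheorem} to $p=2$ and rewriting the trace condition via the Hilbert--Schmidt pairing, using self-adjointness of $L_\eta(\eta)$ and the sign flip from $\sigma-\eta$ versus $\eta-\sigma$. Your explicit handling of the $\abs{\eta-\sigma}^{0}$ factor is a careful touch the paper leaves implicit, but it is the same argument.
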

\begin{proof}
    Recall the Hilbert-Schmidt norm $\langle x,y \rangle = x^* y$ and use that a Lindbladian applied to any state results in a self-adjoint operator. Note here that $\langle L_\eta(\eta),\sigma-\eta \rangle_{HS}$ can be thought of as representing the angle between the tangent vector $L_\eta(\eta)$ and the vector pointing towards our desired final state $\sigma - \eta$. If these two vectors are aligned we can always ensure reachability to $\sigma$. A simple illustration of this alignment can be seen in \cref{fig:alignment}.
\end{proof}

We can use the same Lindbladian vector field idea as \cref{transtheorem} to find topological obstructions to reachability. 

\begin{theorem}[Porcupine Theorem]\label{porcupine}
    Fix some $p$-norm $1 < p < \infty$ and a subset of Lindbladians $\mc K$. Fix a state $\sigma \in \mc D(\mc{H})$ and let $B_{\sigma,\varepsilon} = \lset\eta \in \mc D(\mc{H}): \norm{\sigma-\eta}_p < \varepsilon  \rset$, be the $p$-ball of radius $\varepsilon$ around $\sigma$. Now, if there exists $ \varepsilon > 0$ with the ball strictly contained in the state space $B_{\sigma,\varepsilon} \subset \mc D(\mc{H})$ so that for all $ \eta \in \partial B_{\sigma,\varepsilon} \cap \mc D(\mc{H})$ and for all $L \in \mc K$ we have $$\tr(L(\eta)(\eta-\sigma)\abs{\eta-\sigma}^{p-2}) \geq 0$$ then $\mc K$ does not reach $\sigma$.
\end{theorem}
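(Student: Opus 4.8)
The plan is to read the hypothesis as a \emph{barrier} (or trapping) condition and to show that the closed exterior of the ball, $C=\lset \eta\in\mc D(\mc H):\norm{\eta-\sigma}_p\gl \varepsilon\rset$, is forward invariant under every admissible evolution. Since $\sigma$ is the center of the ball (at distance $0<\varepsilon$), any trajectory trapped in $C$ cannot converge to $\sigma$, so exhibiting a single initial state $\rho\in C$ that fails to reach $\sigma$ already shows $\mc K$ is not transitive. The natural Lyapunov function is $V(\eta)=\norm{\eta-\sigma}_p^p$, and along an admissible trajectory $\rho_t$ with generators $L_t\in\mc K$ the formula recalled in the proof of \cref{transtheorem} gives $\tfrac1p\tfrac{d}{dt}V(\rho_t)=\tr\lpara L_t(\rho_t)(\rho_t-\sigma)\abs{\rho_t-\sigma}^{p-2}\rpara$ for almost every $t$. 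The hypothesis says exactly that this derivative is $\gl 0$ whenever $\rho_t\in\partial B_{\sigma,\varepsilon}$, i.e.\ the flow never points strictly inward on the sphere; geometrically, $L(\eta)$ lies in the tangent cone to $C$ at $\eta$, namely the outward half-space $\lset v:\tr(v(\eta-\sigma)\abs{\eta-\sigma}^{p-2})\gl 0\rset$.

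Concretely I would argue by contradiction. Suppose $\mc K$ reaches $\sigma$. Because the ball is strictly contained in $\mc D(\mc H)$, there is an initial state $\rho$ with $\norm{\rho-\sigma}_p>\varepsilon$, and by assumption an admissible $T_t$ with $\rho_t:=T_t(\rho)\to\sigma$. Then $g(t):=V(\rho_t)$ is absolutely continuous, $g(0)>\varepsilon^p$ and $g(t)\to 0$, so there is a first crossing time $t_0=\inf\lset t:g(t)<\varepsilon^p\rset$ with $g(t_0)=\varepsilon^p$ (the trajectory sits on the sphere), and on the first interval $(t_0,t_2)$ where $g<\varepsilon^p$ we set $u(t)=\varepsilon^p-g(t)>0$ with $u(t_0)=0$. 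The decisive structural fact is that Lindbladians act \emph{linearly}, so each $\eta\mapsto L(\eta)$ is globally Lipschitz and, assuming $\mc K$ bounded, the set-valued field $\eta\mapsto\lset L(\eta):L\in\mc K\rset$ is Lipschitz in the Hausdorff metric. I would then estimate $u'(t)=-g'(t)$ by comparing $\rho_t$ with its nearest point on the sphere: the sphere condition kills the leading term and the remainder is controlled by the modulus of continuity of $\nabla V$ times $\norm{L_t}\kl \sup_{L\in\mc K}\norm{L}$, giving a differential inequality $u'\kl C\,\omega(u)$ with $u(t_0)=0$. A Gr\"onwall/Osgood argument then forces $u\equiv 0$, contradicting $g<\varepsilon^p$ after $t_0$. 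Equivalently, this is a strong-invariance (Nagumo) statement for $C$ under the Lipschitz differential inclusion $\dot\rho_t\in\lset L(\rho_t):L\in\mc K\rset$.

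The hard part is the non-strict inequality at \emph{tangential} crossings: with only $\gl 0$ on the sphere, a trajectory could in principle touch the sphere with zero derivative and then dive into the ball, where no hypothesis constrains the dynamics. Ruling this out is precisely where the regularity of $V$ enters. For $p\gl 2$ the function $V(\eta)=\tr\abs{\eta-\sigma}^p$ has a Lipschitz gradient on bounded sets, so $\omega(u)=O(u)$ and Gr\"onwall yields uniqueness $u\equiv 0$ cleanly. For $1<p<2$ the situation is genuinely more delicate: $\nabla V$ is only H\"older of order $p-1$, the Osgood condition for $u'\kl Cu^{p-1}$ can fail, and moreover $\abs{\eta-\sigma}^{p-2}$ is singular where $\eta-\sigma$ is rank-deficient, so even the alignment functional needs care. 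I expect this low-$p$ regime to be the main obstacle; a safe route is to establish the result first for $p\gl 2$ by the Gr\"onwall argument above, and then treat $1<p<2$ either by restricting crossing analysis to the dense full-measure stratum of the sphere where $\eta-\sigma$ is invertible, or by invoking a viability theorem for the Lipschitz inclusion, in both cases under the standing assumption that $\mc K$ is bounded.
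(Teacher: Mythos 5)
Your proposal follows the same core route as the paper's proof: argue by contradiction, track the Lyapunov function $g(t)=\norm{T_t(\rho)-\sigma}_p^p$ along an admissible trajectory using the derivative formula from \cref{transtheorem}, and analyze a crossing of the sphere. But you go one step further than the paper, and that step matters. The paper's proof locates a crossing time $t_\varepsilon$ with $g(t_\varepsilon)=\varepsilon^p$ and $g'(t_\varepsilon)\kl 0$, and then asserts that the alignment trace is strictly negative, contradicting the hypothesis; yet $g'(t_\varepsilon)\kl 0$ is perfectly compatible with the hypothesized $\gl 0$, and the strict decrease needed to enter the ball can occur entirely at times when the trajectory is strictly inside it, where the hypothesis is silent. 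This is exactly the tangential-crossing scenario you single out as ``the hard part'': it is not a defect of your write-up but the gap in the paper's own argument. Your Nagumo/Gr\"onwall repair is the right tool and is where your proposal adds genuine value: since Lindbladians are linear, each field $\eta\mapsto L(\eta)$ is Lipschitz, and for $p\gl 2$ the weight $\eta\mapsto(\eta-\sigma)\abs{\eta-\sigma}^{p-2}$ is locally Lipschitz, so the penetration depth satisfies $u'\kl Cu$ with $u(t_0)=0$ and the exterior of the ball is forward invariant.

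Two caveats on your version. First, the boundedness you impose on $\mc K$ is not in the statement, but it can be removed: the sphere condition is invariant under positive rescaling $L\mapsto\lambda L$ (still a Lindbladian), so you may pass to the cone $\lset \lambda L:\lambda>0, L\in\mc K\rset$ and reparametrize time along the trajectory so that the generators have norm at most one; reachability and the hypothesis are unaffected, and your Gr\"onwall constant becomes uniform. Second, the case $1<p<2$ remains genuinely open in your approach: the weight is only H\"older of order $p-1$ at sphere points where $\eta-\sigma$ is singular, and such points do exist (e.g.\ in dimension $\gl 3$, since $\eta-\sigma$ is traceless), so Osgood fails for $u'\kl Cu^{p-1}$, and restricting attention to the stratum where $\eta-\sigma$ is invertible does not help because the crossing may occur at a degenerate point. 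So the honest summary is: for $p\gl 2$ your argument, supplemented by the rescaling trick, is a complete proof and is strictly more careful than the paper's where they overlap; for $1<p<2$ both your proposal and the paper's proof leave the same hole, and closing it (or finding a counterexample exploiting a degenerate tangential crossing) appears to require a new idea.
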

\begin{proof}
    Assume for contradiction that $\mc K$ is transitive. We have some ball $B_{\sigma,\varepsilon}$ that satisfies the above condition. Then we can find some$\rho \in \mc D(\mc{H})$ with $\rho \notin B_{\sigma,\varepsilon}$. Since $\mc K$ is assumed to be transitive, we can find a time-dependent Lindbladian evolution $T_t$ admissible to $\mc K$ with $\dot T_t(x) = L_t(T_t(x))$ so that $T_s(\rho) = \sigma$ for $s \in [0,\infty]$. This implies that $\norm{T_s(\rho)-\sigma}_p^p = 0$. Meanwhile, we know that $\norm{T_0(\rho) - \sigma}_p^p > \varepsilon^p$. Therefore, there must be at least one point $t_\varepsilon \in [0,s]$ with the property that $\norm{T_{t_\varepsilon}(\rho) - \sigma}_p^p = \varepsilon^p$ and $\frac{d}{dt} \norm{T_{t_\varepsilon}(\rho) - \sigma}_p^p \leq 0$. We calculate $$\frac{d}{dt} \frac{1}{p} \norm{T_{t_\varepsilon}(\rho) - \sigma}_p^p  = \tr(\dot T_{t_\varepsilon}(\rho) (T_{t_\varepsilon}(\rho) - \sigma) \abs{T_{t_\varepsilon}(\rho)  - \sigma}^{p-2}) \leq 0 $$
    Now we know that $L_{t_\varepsilon} = \dot T_{t_\varepsilon}(T_{t_\varepsilon}) \in \mc K$ since $T_t$ is admissible to $\mc K$. Letting $\eta = T_{t_\varepsilon}(\rho)$ we have that $\eta \in \partial B_{\sigma,\varepsilon}$ and $\tr(L_{t_\varepsilon}(\eta)(\eta-\sigma)\abs{\eta-\sigma}^{p-2}) < 0$ which contradicts what was taken. 
\end{proof}
We call the above theorem the Porcupine Theorem, which essentially consists of finding a vectors which create a zone of avoidance for state preparation. 

\begin{example}\label{Noise}
    We take the space $H = \mb C^3$ with $a_1 = \ket 0 \bra 1$, and $a_2 = \ket 1 \bra 2$ being the lowering operators. We take $L_a(\rho) = - a^* a \rho - \rho a^* a + 2 a \rho a^*$ and calculate for a diagonal matrix $\rho =\lambda_0 \ket 0 \bra 0 + \lambda_1 \ket 1 \bra 1 + \lambda_2 \ket 2 \bra 2 $, $$L_{a_1}(\rho) = - \lambda_1 \ket 0 \bra 0 + \lambda_1 \ket 1 \bra 1 \quad L_{a_2}(\rho) = -\lambda_2 \ket 1 \bra 1 + \lambda_2 \ket 2 \bra 2$$   We see here that $L_{a_i}$ preserves the commutative space of diagonal states.  Now we take a state with a significant component in $\ket 0 \bra 0$ and take, \begin{align*} \eta := (1 - 2 \varepsilon) \ket 0  \bra 0 &  + \varepsilon  \ket 1 \bra 1 + \varepsilon \ket 2 \bra 2 \\ L_{a_1}(\eta) = -\varepsilon \ket 0 \bra 0 + \ket 1 \bra 1 & \quad L_{a_2}(\eta) = -\varepsilon \ket 1 \bra 1 + \varepsilon \ket 2 \bra 2 \end{align*} so we see that $\langle L(\eta), (\ket 0 \bra 0 - \eta) \rangle_{HS} \leq 0$. Following \cref{porcupine_example} we take a small ball around $\ket 0 \bra 0$ and recalling that $L_{a_i}$ preserves diagonals, we see that $\mc K = \lset L_{a_1}, L_{a_2} \rset $ cannot reach $\ket 0 \bra0$.

    Note that taking the raising operators $\lset L_{a^*_1}, L_{a^*_2}\rset $ allows reaching the state $\ket 0 \bra 0$, but only taking the raising operators gives us our zone of avoidance for the Porcupine Theorem. Note that this picture changes drastically if we add in Hamiltonians, as we can use Hamiltonian flow to move  from $ \ket 0  \bra 0$ to $\ket i \bra i$. 
\end{example}
\begin{center}
\begin{figure}[t!]
\begin{tikzpicture}[scale=2.5, line cap=round, line join=round, >=Triangle]
    \coordinate (A) at (-1, 0);
    \node[below] at (A) {$\ket 0\bra 0$};
    \coordinate (B) at (0.6, 0);
    \coordinate (C) at (-0.2, 1.2);
    \draw[thick, red, fill=red, fill opacity=0.4] (-0.6, 0) -- (-0.8, 0.3) -- (A) -- cycle;
    \draw[thick] (A) -- (B) -- (C) -- cycle;

    \foreach \p in {(-1, 0), (-0.6, 0), (-0.2, 0), (0.2,0)} {
        \draw[->] \p -- ++(0:0.25);
        \draw[->] \p -- ++(55:0.25);
    }

    \foreach \p in {(-0.8, 0.3), (-0.4, 0.3), (0, 0.3)} {
        \draw[->] \p -- ++(0:0.25);
        \draw[->] \p -- ++(55:0.25);
    }

    \foreach \p in {(-0.6, 0.6), (-0.2, 0.6)} {
        \draw[->] \p -- ++(0:0.25);
        \draw[->] \p -- ++(55:0.25);
    }

    \foreach \p in {(-0.4, 0.9)} {
        \draw[->] \p -- ++(0:0.25);
        \draw[->] \p -- ++(55:0.25);
    }

    \fill[red] (A) circle (0.8pt);
\end{tikzpicture}
\caption{A choice of raising Lindbladians on the commutative state space of $\mb C^3$.}
\label{porcupine_example}
\end{figure}
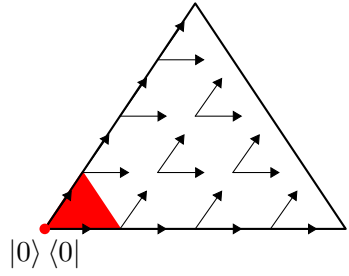
\end{center}

\section{Algorithms and examples}\label{alg}
One interesting application is to consider controllability using a sparse set of Lindbladians.
\begin{example}[Sparse Lindbladian Resources]
    Consider the set of sparse Lindbladians \cite{childs2016} $$\mc K = \lset L_{e_{rs}}\rset \cup \mf{su}(n),$$ where $e_{rs}=\ket r\bra s$ is the matrix unit and $L_{e_{rs}}(\rho) = 2 e_{rs} \rho e^*_{rs} - \rho e_{rs} e^*_{rs} - e_{rs} e^*_{rs} \rho$
    If $\rho$ and $\sigma$ lie in the same unitary orbit, that is, there exists a unitary $u$ such that $\rho = u\sigma u^*$, then $u = \exp(iH)$ for some $H\in\mf{su}(n)$.
    
    Now, suppose $\rho$ and $\sigma$ lie in different orbits. Then we can compute
    \begin{equation*}
    \begin{aligned}
        \tr\lbra L_{e_{rs}}(\rho)(\rho - \sigma) \rbra
        &= 2\rho_{ss}(\rho_{rr} - \sigma_{rr}) - \sum_\ell \lbra\rho_{s\ell}(\rho_{\ell s} - \sigma_{\ell s}) + \rho_{\ell s}(\rho_{s\ell} - \sigma_{s\ell})\rbra \\
        &= 2\rho_{ss}(\rho_{rr} - \rho_{ss} - \sigma_{rr} + \sigma_{ss}) - \sum_{\ell \ne s} \lbra\rho_{s\ell}(\rho_{\ell s} - \sigma_{\ell s}) + \rho_{\ell s}(\rho_{s\ell} - \sigma_{s\ell})\rbra.
    \end{aligned}
    \end{equation*}

    Assume now that both $\rho$ and $\sigma$ are diagonal; then $\rho_{s\ell} = 0$ for all $\ell \ne s$, and the expression simplifies to:
    $$\tr\lbra L_{e_{rs}}(\rho)(\rho - \sigma) \rbra = 2\rho_{ss}(\rho_{rr} - \rho_{ss} - \sigma_{rr} + \sigma_{ss}).$$
    In this case, there always exists a pair $(r, s)$ such that
    $$\tr\lbra L_{e_{rs}}(\rho)(\rho - \sigma) \rbra < 0.$$
    
    To see this, suppose that for all $r, s$,
    $$\rho_{rr} - \rho_{ss} - \sigma_{rr} + \sigma_{ss} \geq 0.$$
    This implies
    $$\rho_{rr} - \sigma_{rr} \geq \rho_{ss} - \sigma_{ss}$$
    for all $r,s$
    which can only hold if $\rho_{rr} - \sigma_{rr} = \rho_{ss} - \sigma_{ss}$ for all $r, s$. Since $\tr \rho = \tr \sigma = 1$, it follows that:
    $$\rho_{rr} = \sigma_{rr}$$
    for all $r$.
\end{example}

The above calculation tells us that the set of sparse Lindbladians is transitive. We note here that the set of sparse Lindbladians is generated by another set $S$, and the transitivity result is intimately connected to the geometry of $S$. This will motivate us to study the transitivity of generating sets more broadly in the next section.

\begin{example}[Transposition with Amplitude-Damping]
    The following model is considered in \cite{bergholm2016}. We presents an algorithmic proof to transitivity under amplitude damping and transpositions. We further discuss the gate count and optimality when similar idea is adapted to different variants of the resource sets.
    
    The goal is to show controllability of the system
    $$\dot\rho_t=i\lbra H_0+\sum_{j}u_j(t)H_j,\rho_t\rbra+v(t)L(\rho_t)$$
    where $u_j(t)\in \mf{su}(2^k)$ and $v(t)\in\lset 0,1\rset$
    with unitary control, i.e. $\lset H_0, H_j\rset$ forms a Hörmander system of $\mf{su}(2^k)$ and the dissipation is generated by the jump operator $\begin{pmatrix}
        0 & 1\\
        0 & 0
    \end{pmatrix}\otimes I\otimes\dots\otimes I$.
    The corresponding semigroup on a single qubit is of the form
    $$e^{-tL_a}=\lbra\begin{array}{cc}
        \rho_{11} & \rho_{12}\\
        \rho_{21} & \rho_{22}
    \end{array}\rbra=\lbra\begin{array}{cc}
        \rho_{11}+(1-e^{-2t})\rho_{22} & e^{-t}\rho_{12}\\
        e^{-t}\rho_{21} & e^{-2t}\rho_{22}
    \end{array}\rbra.$$
    
    Due to unitary control, the question reduced to transport any diagonal density matrix $\rho\sim\diag(\lambda_1,\dots, \lambda_{2^k})$ to any other diagonal density $\sigma=\diag(\mu_1,\dots, \mu_{2^k})$, where $\sum_{j=i}^{2^k}\lambda_j=\sum_{j=i}^{2^k}\mu_j=1$. The transport plan to consider is as follows:
    $$\rho\leadsto\diag(\lambda_1,\dots, \lambda_{2^k})\leadsto \diag(1,0,\dots, 0)\leadsto\diag(\mu_1,\dots, \mu_{2^k})\leadsto \sigma.$$
    It is worth noting that the two parts of this plan are relatively independent. We may therefore consider them as two separate phases: state preparation and state transportation. This means that, regardless of the input diagonal density, we first prepare it into the pure state, and then we show that one can transport from the pure state to an arbitrary diagonal target state.
    The resource set is defined as follows: after the initial diagonalization and the final undo-diagonalization, we are only allowed to use sparse transpositions of the form $X_{j, j+i}=I_{j-1}\oplus X\oplus I_{2^k-j-1}$, where $X$ is the Pauli $X$-matrix on the $j$ and $j+1$ entry in the sparse sense.

    We focus on the portion 
    $$\diag(\lambda_1,\dots, \lambda_{2^k})\leadsto \diag(1,0,\dots, 0)\leadsto\diag(\mu_1,\dots, \mu_{2^k})$$
    and assume we can diagonalize the intial and target states.

    The state preparation portion of the transport plan $\diag(\lambda_1,\dots, \lambda_{2^k})\leadsto \diag(1,0,\dots, 0)$ requires successive amplitude damping that kills half of the mass every time. In particular, we have that 
    \begin{center}
    \begin{tikzcd}
        \diag(\lambda_1,\dots, \lambda_{2^k}) \arrow[d, "AD"]\\
        \diag(\lambda_1+\lambda_{2^{k-1}+1},\dots, \lambda_{2^k-1}+\lambda_{2^k}, 0,\dots, 0) \arrow[d, "TP"]\\
        \diag(\lambda_1+\lambda_{2^{k-1}+1}, \dots, \lambda_{2^{k-2}}+\lambda_{2^{k-1}+2^{k-2}}, 0, \dots, 0, \lambda_{2^{k-2}+1}+\lambda_{2^{k-1}+2^{k-2}+1}, \dots, \lambda_{2^{k-1}}+\lambda_{2^k}, 0,\dots, 0)\arrow[d, "AD"]\\
        \diag(\lambda_1+\lambda_{2^{k-1}+1}+\lambda_{2^{k-2}+1}+\lambda_{2^{k-1}+2^{k-2}+1}, \dots, \lambda_{2^{k-2}}+\lambda_{2^{k-1}+2^{k-2}}+\lambda_{2^{k-1}}+\lambda_{2^k}, 0, \dots, 0)\arrow[d]\\
        \vdots \arrow[d]\\ 
        \diag(1,0,\dots, 0)
    \end{tikzcd}
    \end{center}
    In total, this requires $O(k)$ many infinite time amplitude damping $e^{tL_a}$ and $O(2^{k})$ many sparse transpositions $X_{j, j+1}$.

    For the second part of the transport plan $\diag(1,0,\dots, 0)\leadsto\diag(\mu_1,\dots, \mu_{2^k})$. We first define the ratio between each corresponding pairs to be $$R_m=\frac{d_{2^{k-1}+m}}{d_{m}+d_{2^{k-1}+m}}.$$ The key for the induction is that our algorithm preserves the order of the ratio throughout the process. Thus, we can always obtain a diagonal such that $R_1\leq\dots\leq R_j\leq R_{2^{k-1}}$. Then it is some permutation of length no more than $2^k$ to arrange the diagonal to the desired one.
    
    Then the goal is to find an algorithm to transport pure state to rearrange target state.
    The algorithm is defined recursively:
    \begin{center}
    \begin{tikzcd}
        \diag(1,0,0,0) \arrow[d, "AD"]\\
        \diag(\alpha, 0, (1-\alpha), 0) \arrow[d, "TP"]\\
        \diag(\alpha, 0, 0, (1-\alpha)) \arrow[d, "AD"]\\
        \diag(\alpha, (1-\alpha)(1-\beta), 0, (1-\alpha)\beta) \arrow[d, "AD"]\\
        \diag(\alpha\gamma, (1-\alpha)(1-\beta)\gamma, \alpha(1-\gamma), (1-\alpha)\beta+(1-\alpha)(1-\beta)(1-\gamma))\\
    \end{tikzcd}
    \end{center}
    It is clear that 
    $$\diag(\alpha\gamma, (1-\alpha)(1-\beta)\gamma, \alpha(1-\gamma), (1-\alpha)\beta+(1-\alpha)(1-\beta)(1-\gamma))=\diag (\mu_1, \mu_2, \mu_3, \mu_4)$$
    has a solution, namely
    $$\alpha=\mu_1+\mu_3\quad \beta=\frac{\mu_2+\mu_4}{\mu_1+\mu_3-1}+1\quad \gamma=\frac{\mu_1}{\mu_1+\mu_3}.$$
    In addition, via direct verification, we have that 
    $$R_1=1-\gamma\quad R_2=1-\gamma+\beta\gamma$$
    so $R_1\leq R_2$ checks out.
    
    Now, we apply this procedure inductively. Suppose that we have done it for the first $n-1$ pairs, i.e. 
    we have $$\diag(1, 0,\dots, 0)\leadsto \diag(\lambda_1, \dots, \lambda_{n-1}, 0,\dots, 0, \lambda_{2^{k-1}+1}, \dots, \lambda_{2^{k-1}+n-1}, 0, \dots, 0).$$
    Then we want to match the $n$-th pair. To simplify the notation, we will drop the irrelevant $0$'s in the density and focus on the first and the last pairs:
    \begin{center}
    \begin{tikzcd}
        \diag(\lambda_1, \dots, 0, \dots, \lambda_{2^{k-1}+1}, \dots, 0, \dots) \arrow[d, "TP"]\\
        \diag(\lambda_1, \dots, 0, \dots, 0, \dots, \lambda_{2^{k-1}+n-1}, \dots) \arrow[d, "AD"]\\
        \diag(\lambda_1, \dots, \lambda_{2^{k-1}+n-1}(1-\alpha), \dots, 0, \dots, \lambda_{2^{k-1}+n-1}\alpha, \dots) \arrow[d, "AD"]\\
        \diag(\lambda_1\beta, \dots, \lambda_{2^{k-1}+n-1}(1-\alpha)\beta, \dots, \lambda_1(1-\beta), \dots, \lambda_{2^{k-1}+n-1}\alpha+\lambda_{2^{k-1}+n-1}(1-\alpha)(1-\beta), \dots)\\
    \end{tikzcd}
    \end{center}
    It is clear that 
    \begin{equation*}
    \begin{aligned}
        &\quad\diag(\mu_1, \dots, \mu_n, \dots, \mu_{2^{k-1}+1}, \dots, \mu_{2^{k-1}+n}, \dots)\\
        &= \diag(\lambda_1\beta, \dots, \lambda_{2^{k-1}+n-1}(1-\alpha)\beta, \dots, \lambda_1(1-\beta), \dots, \lambda_{2^{k-1}+n-1}\alpha+\lambda_{2^{k-1}+n-1}(1-\alpha)(1-\beta), \dots)
    \end{aligned}
    \end{equation*}
    has a solution.

    Further, this algorithm preserves the order of the ratios, i.e. $R_1\leq R_2\leq\dots\leq R_n$.
    In particular, 
    \begin{equation*}
    \begin{aligned}
        R_1 &= 1-\beta\\
        R_2 &= 1-\beta+\frac{d_{2^{k-1}+2}}{d_2+d_{2^{k-1}+2}}\alpha\beta\\
        R_k &= 1-\beta+\frac{d_{2^{k-1}+k}}{d_k+d_{2^{k-1}+k}}\alpha\beta\\
        R_n &= 1-\beta+\alpha\beta
    \end{aligned}
    \end{equation*}
    for $1<k<n$. $R_2\leq R_3\leq\dots\leq R_{n-1}$ follows from inductive hypothesis, and $R_{n-1}\leq R_n$ follows from calculation.
    Hence, up to permutation of the diagonals, we obtain the desired diagonal target density.
    
    The above algorithm is optimal within the scheme. The state preparation part, i.e. from arbitrary diagonal to pure, we need $O(2^k)$ many arbitrary transpositions, or $O(2^{2k})$ sparse 2-local transpositions. With both, we need $k$-amplitude dampings on the first register.
    
    For the second part, from pure to arbitrary disgonal, one need $O(2^k)$ many transpositions or $O(2^{2k})$ sparse 2-local transpositions, both with $k$-amplitude dampings on the first register.
\end{example}
Note that the above calculation still holds with a modified resource sets. In particular, in the example above, we only allow amplitude-damping at a single register. One can allow such operation on every single register. This would require a lot more amplitude damping, but will reduce the number of transpositions needed.

If one allow linear combinations of resources, it is not hard to see that state preparation can be done in one step by taking $L_{a_1+a_2+\dots+a_k}$, without need of any transposition. This is essentially doing all $O(k)$ amplitude damping in one go. However, this needs to be taken with extra care due to the obstructions discussed in \cref{obstructions}.

The optimal transport plan for arbitrary density to arbitrary density beyond the toy model with this particular resource set, i.e. sparse transposition and amplitude damping, remains open. There might be a direct path without needing to diagonalize the density first.

In this example, we require minimal application dissipation but significant amount of transposition through Hamiltonians. Conceptually, the Hamiltonians prepare the system in a favorable state, allowing dissipation to act effectively rather than as uncontrollable noise. However, achieving this balance necessitates a carefully designed Hamiltonian scheme. With recent experimental progress in implementing controlled noisy gates, future research should not only aim to identify optimal transport plans but also to maximize the beneficial role of dissipation. This highlights a fundamental trade-off between unitary and dissipative resources: full unitary control may be unnecessary if noise is properly engineered. As laboratory simulations of noise become increasingly feasible and cost-effective, exploiting this trade-off will be of growing practical importance.

\bibliographystyle{alpha}
\bibliography{reference}
\bigbreak

\end{document}